\DeclareMathOperator{\Cov}{Cov} 
\newcommand{\peer}{\bar c}                  
\newcommand{\knorm}{k_{\text{norm}}}        
\newcommand{\kfermi}{\eta}                  
\newcommand{\dtilt}{\Delta}                 
\newcommand{\wutil}{w^{U}}                  
\newcommand{\wbin}{w^{B}}                   
\newtheorem{prop}{Proposition}
\title{Group Cooperation Diverges onto Durable Low versus High Paths: Public Goods Experiments in 134 Honduran Villages}
\date{\today}
\author{
  Marios Papamichalis\thanks{Human Nature Lab, Yale University, New Haven, CT 06511, marios.papamichalis@yale.edu} \\
  \and
  Nicholas A. Christakis\thanks{Human Nature Lab, Yale University, New Haven, CT 06511, nicholas.christakis@yale.edu} \\
  \and
  Feng Fu \thanks{Department of Mathematics, Dartmouth College, Hanover, NH 03755, USA, Department of Biomedical Data Science, Geisel School of Medicine at Dartmouth, Lebanon, NH 03756, USA, fufeng@gmail.com}
}
\begin{document}
\maketitle


\begin{abstract}


We performed large lab-in-the-field experiment (2,591 participants across 134 Honduran villages; ten rounds) and tracked how contribution behavior unfolds in fixed, anonymous groups of size five. Contributions separates early into two durable paths, one low and one high, with rare convergence thereafter. High-path players can be identified with strong accuracy early on. Groups that begin with an early majority of above‑norm contributors (about 60\%) are very likely finish to high. The empirical finding of a bifurcation, consistent with the theory, shows that early, high contributions by socially central people steer groups onto, and help keep them on, a high-cooperation path.

\end{abstract}


\section{Introduction}

Why do some communities sustain costly cooperation while others succumb to chronic free-riding \cite{ostrom1990governing}? Local public goods (like clean water systems, school maintenance, road repairs, and vaccination drives) depend on repeated voluntary contributions that are individually costly and thus hard to explain with standard self-interest. Laboratory public goods games (PGGs) typically show a decline in average contributions over time. Such experiments (e.g. \cite{henrich2010markets}) also reveal striking and persistent heterogeneity in behavior: some individuals never contribute, some contribute consistently, and many settle into intermediate patterns. This variation often appears trait-like across contexts, in a “cooperative phenotype” \cite{peysakhovich2014humans}. Here, we examine whether persistent differences in cooperation across otherwise similar rural villages reflect individuals’ and villages’ characteristics.

We study repeated public-goods decisions by 2,591 villagers in 134 Honduran villages, each observed over ten rounds of a standardized PGG. Each participant’s contribution decisions are linked to detailed measures of their social ties (\emph{friendship} networks within the village) and their individual attributes (gender, religion and education). The stakes in the game are economically meaningful in this low-income setting: rural incomes are about \$2 per day ($\approx$ 50 Honduran Lempiras), and total session payouts ranged from minimum \$3.75 to maximum \$11.09  ($\approx$ 98 to 290 Lempiras), with per-round endowments up to 12 Lempiras.

Lab PGGs typically involve low stakes (e.g., \citealp{fehr2000cooperation}), and simply raising stakes does not, on average, increase cooperation \citep{kocher2008does}. In high-salience settings, however, larger stakes combined with informational or comparative feedback can boost contributions, see \citet{liu2024signaling} (China), \citet{cardenas2015between} (Colombia), and \citet{nhim2023water} (Cambodia). In this spirit, our high-salience field setting, substantial stakes relative to local incomes, plus repeated feedback, provide a demanding test of how peer exposure and individual traits shape contribution dynamics. Our experimental design and panel data structure allow us to identify the effect of peers and individual traits on contribution dynamics. In particular, we exploit the panel nature of the game with ten rounds of observations and use specifications with individual fixed effects and village-by-round fixed effects. By incorporating leave-one-out lagged peer contribution averages, we can separate a player’s responsiveness to peers from their own persistent cooperativeness, mitigating concerns about reflection or unobserved heterogeneity in group composition. This approach leverages within-person changes over time and across-round exposure to peer behavior to isolate the drivers of sustained cooperation.

We have three primary findings. First, across villages, contribution trajectories \emph{bifurcate} into two distinct and durable paths, high versus low cooperation, rather than converging to a single interior equilibrium. A small early advantage in a person’s peer environment (i.e., being surrounded by others who contribute in the initial rounds) is enough to push that individual onto a high-contribution path; lacking that, behavior tends to settle into persistently low contributions. At the village level, the probability of “finishing high” (i.e. sustaining a high-cooperation group by the final round) rises sharply once the early share of high contributors exceeds a threshold, with a data-driven critical mass around 60\% of players contributing above average in the first few rounds. This pattern is indicative of strategic complementarities or positive feedback: early cooperation begets more cooperation, locking the group into a high-contribution equilibrium, whereas an early deficit of cooperators leads to a low-contribution trap.

Second, network attributes, particularly the number of friendship ties measured prior to the game, but not adversarial ties, strongly predict entry into and persistence on the high-cooperation path, such that each additional friend is associated with higher average contributions and individuals who are more embedded (and less enmeshed in conflicts) are more likely to sustain cooperation. Personal attributes also matter: men contribute about 0.6 Lempiras more per round than women on average, and more schooling is positively associated with giving (modestly higher average contributions across schooling levels). Cultural identity plays a role as well; relative to Catholics (the majority religion in our sample), non-religious villagers and Protestants contribute about 0.4 and 0.2 additional Lempiras per round, respectively. Finally, age by itself has at most a limited effect on cooperation once we account for social ties. 

Third, our evidence points to social embeddedness, rather than raw material wealth, as a key driver of sustained cooperation. Neither baseline household wealth nor food security status robustly predicts contributions after controlling for social network position. In contrast, the measure of embeddedness remains a strong predictor, implying that individuals' position in the social fabric matters more than income per se for explaining cooperative behavior. Consistent with this interpretation, we find that men’s higher contributions are not explained by their earnings or wealth (which in our sample are not significantly higher than women’s, on average).


These patterns are not driven by any single threshold choice or by terminal-round outcomes. 
Dynamics are path-dependent: downward moves from high to low (drop $H\!\to\!L$) are more common than rebounds from low to high (rise $L\!\to\!H$). Friendship ties buffer against $H\!\to\!L$ drops, and reporting no religion is associated with $L\!\to\!H$ rises. Round-to-round updating places greater weight on others’ prior contributions than on one’s own past choice; higher education weakens simple repetition and increases adjustment toward the group, and men place more weight on peers. These qualitative patterns are robust across specifications, and a variance decomposition indicates that most residual variation lies at the individual level, with little remaining structure at the village level after accounting for observed covariates. 

Group members choose simultaneously each round and cannot observe others’ choices within the round, so within‑round imitation is not possible. Once we absorb player fixed effects and village$\times$round shocks, contemporaneous peer co‑movement largely vanishes; using a strong \emph{internal deeper‑lag} instrument, together with an over‑identified leave‑one‑village shift–share and a cross‑fitted optimal‑IV construction, we estimate a large \emph{lagged} peer effect. 

Using instrumental variables designs that leverage as‑if random group composition and a deeper‑lag internal instrument while netting out player and village$\times$round shocks, we find that the instantaneous (same‑round) peer effect is near zero, whereas the \emph{lagged} peer effect is approximately one‑for‑one. In short, exogenous variation in last round’s peer environment shifts current giving almost one‑for‑one, consistent with norm updating rather than within‑round imitation (Section~\ref{sec:iv_results}, Table~\ref{tab:iv_updated}). Taken together, the findings suggest actionable levers: \emph{seeding} early cooperation via socially central individuals can steer communities toward sustained collective action.

{\bf Related Literature}: Our study contributes to and bridges several strands of literature. Repeated public-goods experiments in the lab reliably find that average contributions decline over time, yet this aggregate trend masks persistent individual heterogeneity in cooperative “types.”\footnote{See, among others, \citet{keser1999strategic,andreoni2001fair,chaudhuri2010does}, who document declining mean contributions in repeated PGGs, and \citet{fischbacher2001people}, who first showed that many subjects are \emph{conditional cooperators} (willing to match others’ contributions) while others are free-riders or unconditional cooperators. Subsequent work classifies participants into types whose proportions drive group outcomes \citep{fischbacher2006heterogeneous,kurzban2005experiments,bruhin2019many,rand2011dynamic}.} The coexistence of free-riders and conditional cooperators can sustain cooperation at an intermediate level in some groups, rather than leading to the complete unraveling predicted by standard theory. Building on this insight, evolutionary game models and network theories have demonstrated how strategic diversity and positive feedbacks can generate multiple stable equilibria in cooperation. A range of theoretical studies show that under plausible conditions, populations can bifurcate into high- and low-cooperation steady states via mechanisms such as threshold effects or collective-risk dilemmas, nonlinear payoff structures that reward clustering of cooperators, partner selection or network dynamics that allow cooperators to interact, and other eco-evolutionary feedback loops.\footnote{See, e.g., \citet{hauert2008ecological,fu2009partner,wang2009emergence} on threshold public-goods scenarios and collective-risk games leading to cooperation thresholds and tipping points; \citet{zhang2013tale,wu2014social} on nonlinear public goods payoff functions enabling coexistence of cooperators and defectors; \citet{doebeli2004evolutionary} on partner choice creating stable cooperator–defector pairings; and \citet{wang2010evolutionary,wang2020steering} on evolutionary dynamics in structured populations that allow persistent strategy diversity.} Our empirical finding of a bifurcation into high vs. low contribution paths in village communities is consistent with these theoretical predictions of multiple self-reinforcing equilibria in cooperative behavior.

Field and network experiments show that social ties shape cooperation. In public-goods
settings, friendship links increase joint provision \citep{haan2006friendship}, and the
\emph{quality} of ties, who you are connected to, can matter more than the sheer number
of links \citep{shirado2013quality}. These patterns align with our result that more friends
predict entry into and persistence on the high-cooperation path. Experimental evidence also documents history-dependence in contribution dynamics,
where early information regimes and feedback induce persistent differences in later rounds
\citep{fellner2021information}, consistent with the early sorting and durable regimes we observe.

Meta-analyses and field evidence suggest systematic gender differences in cooperative
preferences and behavior \citep{furtner2021gender,seguino1996gender}, echoing our
finding that men contribute more on average and are more responsive to peers. Identity primes and religious affiliation shape economic choices, including prosociality \citep{benjamin2016religious}; in our data, non-religious participants—and, more modestly, Protestants relative to Catholics—contribute more on average, and reporting no religion predicts upward movement from low to high. Higher socioeconomic status does not uniformly reduce prosocial behavior in field settings \citep{andreoni2021higher}; in developing-country contexts, real-stakes cooperation often hinges on social capital rather than income per se \citep{cardenas2000real}. These patterns dovetail with our result that education and embeddedness, not baseline wealth, best predict sustained cooperation.

Further empirical work in field settings likewise underscores the role of social structure and heterogeneity in sustaining collective action. Cooperation does not arise in a vacuum: local leadership, enforcement mechanisms, and network connections have been shown to facilitate cooperative behavior in real communities. For example, in Ugandan producer cooperatives, groups with democratically elected monitors (with sanctioning authority) achieved higher contributions than groups with randomly assigned monitors \citep{baldassarri2011centralized,grossman2014impact}. In Ethiopian forest user groups, contributions and rule compliance were sustained where communities invested in costly monitoring and where a higher share of members behaved as conditional cooperators \citep{rustagi2010conditional}. Complementing this, a field experiment shows that when leaders are endowed with punishment power, cooperation improves when sanctions are applied fairly and efficiently, whereas antisocial leader punishment undermines outcomes \citep{kosfeld2015leader}. Similarly, even in the absence of formal enforcement, cooperation has been observed to cluster within social networks, for instance, food-sharing and camp cooperation decisions among Hadza hunter–gatherers tend to be correlated within friendship ties, suggesting that more cooperative individuals bond together \citep{apicella2012social}. Our results echo these patterns: we find that villagers with more friends (and fewer adversaries) are significantly more likely to contribute and keep contributing, indicating that social embeddedness and peer influence are critical for overcoming free-rider incentives in a community context.

Standard theory predicts full free-riding in voluntary public good provision \citep{bernheim1986voluntary,brennan1985private}, yet laboratory experiments consistently find that subjects initially contribute a substantial share of their endowment \citep{isaac1988group,andreoni1995cooperation}. Absent intervention, however, contributions typically decline over repeated rounds, prompting debate about the source of early cooperation. \citet{andreoni1995cooperation} estimates that roughly half of initial contributions are intentional, arising from prosocial kindness or warm-glow preferences rather than confusion, whereas \citet{palfrey1997anomalous} conclude that altruism plays little role once decision errors are accounted for. Indeed, contributors seem to derive private utility from giving: government provision crowds out private donations only partially, consistent with impure altruism \citep{andreoni1993experimental}. Behavior also responds to framing and context. Simply framing the decision as contributing to a public good (rather than an economically equivalent taking frame) significantly boosts contributions \citep{andreoni1995warm}, and making contributions public instead of confidential increases giving due to social pressure \citep{daughety2010public}. To rationalize such patterns, economists have developed models of social preferences that incorporate fairness, reciprocity, and inequality aversion \citep{fehr1999theory,charness2002understanding}. Still, without mechanisms to reinforce cooperation, voluntary contributions remain fragile. Even non-pecuniary sanctions like disapproval or shaming can sustain higher contributions by establishing norms of compliance \citep{masclet2003monetary}. In fact, groups will endogenously establish institutions to promote cooperation: given the opportunity, participants often vote to implement sanctioning systems or binding agreements, which then mitigate free-riding \citep{kosfeld2009institution}. Beyond such formal mechanisms, reputational incentives also facilitate cooperation. For example, sellers who commit to donate a portion of proceeds in online marketplaces obtain higher prices, indicating that buyers reward prosocial behavior \citep{elfenbein2010greater}. Together, these findings illustrate that appropriately designed incentives, monetary or social, can substantially improve the provision of public goods in experimental settings.

Heterogeneity among individuals and groups leads to markedly different cooperation outcomes in public goods games. Numerous studies document diversity in people's willingness to cooperate and their sensitivity to others' behavior. \citet{fischbacher2010social} identify a prevalent type of conditional cooperator, individuals who contribute more when they believe others will contribute, alongside a minority of free-riders. Such heterogeneity implies that peer effects can produce virtuous or vicious cycles: groups with a higher share of conditional cooperators (or strong reciprocators) tend to sustain cooperation through positive reciprocity, whereas those dominated by selfish types are prone to unraveling as defection begets defection. Social identity and network structure further mediate these dynamics. In field settings, more homogeneous communities or tightly knit groups often achieve better public goods provision, whereas diverse or fragmented groups struggle to cooperate \citep{beach2017gridlock}. Likewise, who interacts with whom can shape norm formation: if individuals can choose partners, they often self-sort into cooperative clusters, effectively isolating non-cooperators \citep{kinateder2017public}. Certain individuals in key network positions or leadership roles may disproportionately influence outcomes, and effective communication can help coordinate expectations. In particular, delegating a mediator to structure group communication has been shown to improve coordination in public good provision by resolving strategic uncertainty \citep{halonen2021coordinating}. Information conditions are also critical: when the value of the public good or others' contributions is uncertain, even well-intentioned players may hesitate to contribute, creating an informational social dilemma \citep{cox2021common}. Finally, the timing of moves introduces additional strategic complexities. In dynamic contribution games, people often delay or withhold contributions in hopes that others will pay the cost, leading to inefficient under-provision or costly bargaining delays \citep{bowen2019collective}. Experiments indicate that many subjects struggle with contingent reasoning about others' future actions, which can trap groups in suboptimal contribution paths \citep{calford2024contingent}. Taken together, this evidence underscores that seemingly small differences in group composition, incentives, or information can cause groups to diverge onto very different cooperation trajectories. Some groups manage to establish self-reinforcing norms of high cooperation, through reciprocity, peer influence, and effective sanctioning, while others languish in persistent free-riding equilibria. This bifurcation into high- and low-contribution regimes is consistent with the two durable paths of group cooperation observed in our Honduras setting.

Our study also builds on a growing body of research mapping entire social networks in developing-country villages and experimenting with network-based interventions (\cite{perkins2015social}). For example, seeding central individuals markedly increased adoption of health innovation \citep{kim2015social}; prosocial behaviors cascade through ties \citep{fowler2010cooperative,jordan2013contagion,airoldi2024induction}; antagonistic ties transmit information across clusters \citep{ghasemian2024structure}; and network position correlates with well-being and perceptual biases \citep{lee2024depression,rodriguez2024forgetting}. Complementary evidence shows that changing interaction structure can shift cooperation, via artificial agents or partner choice or that information provision can re-wire social ties \citep{shirado2020collective,rand2011dynamic,papamichalis2025educational}.

With 2{,}591 participants across 134 villages, our experiment is among the largest \emph{lab-in-the-field} cooperation experiments to date. For scale, landmark work in Uganda ran in 50 producer co-ops with 1{,}543 farmers and tested centralized sanctioning and leader legitimacy \citep{baldassarri2011centralized,grossman2014impact,baldassarri2015cooperative}. Post-conflict field studies typically span dozens of communities, e.g., Burundi (35 villages; $\sim$300 participants) and Sierra Leone (200 villages; $\sim$2{,}300 participants) where contributions to local public goods were measured within broader interventions \citep{voors2012violent,cilliers2016reconciling}. A recent Rwanda initiative conducted PGGs in 147 villages ($\sim$1{,}764 participants), emphasizing communication spillovers \citep{coutts2024age}. By contrast, very large \emph{online} PGGs can aggregate orders of magnitude more decisions (e.g., $\sim$135{,}000 players; $\sim$1.5M decisions), but they trade off face-to-face logistics and village representativeness \citep{otten2022human}. Our contribution is therefore to couple large numbers of participants with in-person, community-based play, enabling heterogeneity analyses across many villages while preserving field realism.

It is worth noting that our setting represents a particularly high-salience environment for cooperation, due to the combination of low incomes and relatively large stakes. One might suspect that higher monetary stakes would by themselves induce more cooperation out of fear of losing a valuable payoff. However, experimental evidence indicates that simply increasing stake size does not automatically lead to higher contributions (nor higher punishment of free-riders) in public-goods games \citep{kocher2008does}. Instead, the effectiveness of incentives often hinges on contextual factors, such as feedback, peer comparisons, or institutional rules, and these factors tend to have an amplified impact when the stakes are large relative to participants’ incomes. In our experiment, despite the high stakes, we find that it is the social and behavioral context (who your peers are, how you perceive and respond to them, and how much control you have over resources) that chiefly determines cooperation, rather than the absolute monetary value at play. This aligns with the view that enhancing the social incentives and education of participants can be more effective in sustaining public-good contributions than simply raising the financial stakes for contributing.


The paper proceeds as follows. Section~2 describes the setting and sample; the mapping of village friendship and adversarial networks; the public–goods protocol and stakes; and the peer-exposure constructs that motivate our empirical tests. Section~3 documents sampling, session logistics, variable construction, and summary statistics. Section~4 formalizes the stage game, derives best-reply behavior, and introduces the evolutionary calibration and identification strategy. Section~5 presents the core evidence: nonparametric tipping, HMM persistence, village critical mass, calibrated Fermi–Moran process, heterogeneity in contributing, trajectory clustering and causal effects through instrumental variables. In addition to comprehensive statistical analyses, we  calibrate a mechanistic evolutionary game model with our data to explain the bifurcation dynamics between high and low contribution behaviors.  Section~6 interprets the findings, notes limitations, and outlines policy levers. 


\section{Setting and Experimental Design}
\label{sec:exp_design}


We implemented a lab-in-the-field study in rural western Honduras (department of Copán), recruiting \emph{2{,}591} adult villagers from \emph{134} isolated villages. Within each village, we ran one or more sessions in a common public space and collected survey data. All procedures were approved by relevant institutional review boards; participation was voluntary; and all payments were made in cash immediately after the session.

Before play, we mapped within-village social networks using standard name-generator questions (see \cite{airoldi2024induction}). Each respondent identified (i) a set of \emph{friends} (e.g. people they spend time with, ask advice of, or would borrow/lend small amounts of money to) and (ii) \emph{adversarial} ties (people they avoid, have conflicts with, or “dislike”). These lists produced, at the individual level, friend degree and adversarial degree, and, at the village level, summary measures (e.g., average friend degree). We measured \emph{education} as years of schooling completed. Catholics constitute the majority of villages, with observed minorities of Protestants and non-religious respondents. We also collected other standard individual-level measures, including age, gender, marital status, education, indigenous status, and food insecurity (as a measure of player wealth). Village-level measures involved remoteness (access routes), village size, friendship density, and adversarial density. 

Sessions consisted of \emph{ten} identical rounds of a linear public-goods game (PGG) played in fixed groups of \emph{five} participants drawn from the same village. In each round, every player received an endowment of \emph{12 Lempiras} and chose how much to contribute to a common pool. The sum of contributions was \emph{doubled} and then evenly redistributed among group members. Choices were recorded privately; at the end of the session, participants were paid the sum of their round-by-round earnings. Participants were not told the total number of rounds ex ante to avoid horizon effects and within each round choices were simultaneous and private; feedback on group outcomes arrived only after the round closed.


Stakes were economically meaningful. Typical rural cash income in this region is about \emph{L\,50} (\$2) per day, while average session payouts ranged from approximately 98 to 290 Lempiras (\$3.75 to \$11.09).\footnote{Amounts refer to the study period and local prices in Copán.} With per-round endowments up to 12 Lempiras, the decision to contribute or free-ride carries nontrivial opportunity costs.

Simultaneous, anonymous play precludes within‑round imitation; peers’ behavior is only observed between rounds. We neither varied nor revealed participants’ wealth during play, consistent with evidence that making wealth visible in networked public-goods settings depresses cooperation and welfare \citep{nishi2015inequality} while villages. While villagers knew they were playing with other villagers, they did not know who was in their own group, we record, for each individual and round, the chosen contribution and contemporaneous group behavior. Our empirical tests rely on \emph{leave-one-out} peer measures that exclude the focal individual from the group’s average, and on lagged peers (previous-round group behavior) to proxy the evolving “local norm.” The panel structure (10 rounds per person) allows us to separate responsiveness to peers from individual inertia using fixed effects and lag structure.

We link each person’s round-by-round contributions to (i) their friendship degree, (ii) identity (gender, religion), and (iii) education. This design lets us ask whether individuals with more friends (and fewer adversarial ties) are more likely to sustain high contributions, and whether identity and education predict entry into and persistence on a high-contribution path, over and above any association with wealth.


Three pragmatic considerations guided the design: a) Group size ($N{=}5$) and repetition ($T{=}10$): Small groups balance tractability and salience: each player internalizes a visible share of the public return, while $T{=}10$ rounds are sufficient to observe learning, persistence, and path formation without inducing fatigue or attrition. b) Fixed groups and private decisions: Keeping peers fixed across rounds isolates within-group dynamics and path dependence; private, simultaneous choices reduce experimenter-demand effects and within-round coordination. c) Measurement aligned to mechanisms: Mapping \emph{friendship} and \emph{adversarial} ties before play lets us test whether embeddedness and frictions predict both entry into and persistence on high-contribution paths; measuring \emph{education} (rather than only wealth) emphasizes informational and learning channels most relevant for repeated updates.

Two sources of variation support our empirical strategy: a) Early peer environments: Initial rounds generate meaningful dispersion in group behavior (shares of high contributors, leave-one-out means). Conditional on \emph{individual} and \emph{village-round} fixed effects, lagged leave-one-out peer averages provide a time-varying exposure measure that helps isolate responsiveness to peers from own inertia. b) Heterogeneous embeddedness and education: Individuals differ in \emph{individual} and village characteristics. These pre-play characteristics create stable cross-sectional gradients that predict exposure to cooperative peers over the session and allow us to study heterogeneity in responsiveness and persistence.

Our setting is designed to mirror everyday local public goods (water system upkeep, school cleanups, road repairs, vaccination drives etc). If social embeddedness and education matter more than wealth, say for repeated giving, then: (i) small early differences in peer environments should sort individuals into distinct contribution “paths,” (ii) friend degree should be positively associated with entering and remaining on a high-contribution path, (iii) identity covariates (gender, religion) and education should have predictive power even after conditioning on wealth proxies, and (iv) wealth per se should have limited explanatory power once we account for (i)–(iii). The empirical sections that follow formalize and test these predictions using panel specifications with individual and village-round fixed effects and leave-one-out lagged peer means.

\section{Data}
\label{sec:data}

\subsection{Honduras Repeated Public Good Games Data}


Our cooperation experiment was conducted between June~2022 and June~2023 in a representative subset of \emph{134} villages drawn from a long-running cohort of \emph{176} mapped villages in the area. Within each of the 134 study villages, we first randomly invited 40 adult residents and, conditional on attendance and consent on the session day, formed groups with 15–20 participants per village. \footnote{All procedures received IRB approval; participants provided written informed consent. Payments and logistics were reviewed for local appropriateness with community leaders.}

Sessions were held in central, accessible locations within villages. Upon arrival and consent, participants were randomly assigned into anonymous groups of $N=5$ and seated so that communication was not possible. All choices were made on Android tablets after a short guided training by research assistants.

Each session consisted of $T=10$ identical rounds of a linear PGG. In each round, every participant received an endowment of 12 Lempiras and chose how much to contribute to a common pool (0–12 L). The pooled contributions were \emph{doubled} and split equally among the five group members. After each round, participants observed the group outcome and their earnings for that round before proceeding to the next decision. Participants were not told the total number of rounds ex ante and could not see or identify their groupmates; the interface displayed only five avatars (including self). After round 10, subjects received their cumulative earnings from the game plus a show-up fee of 50 L (roughly a local day wage).

Random assignment to groups within villages achieved balance on observable traits (gender, age, friends, and religion); see Appendix \ref{tab:contsum}, \ref{tab:bin} and \ref{tab:cat}.

\subsection{Data Heterogeneity}

Across 2{,}591 players and ten rounds, there are 25{,}910 potential player–round observations. Models that use lagged outcomes exclude round~1 by construction; the linear mixed‑effects analyses use 22{,}239 decisions, and instrument variable (IV) specifications use 19{,}818 (contemporaneous peers) and 25{,}388 (lagged peers) observations due to lag structure and occasional incomplete histories. Missingness is very low and not systematically related to baseline gender, age, schooling, or friend counts (see \ref{app:data}).

We link round-by-round contributions to rich individual and village covariates collected in the broader cohort:

\begin{itemize}
  \item \textbf{Demographics and identity:} age, gender, marital status, education, indigenous status, religion (Catholic, Protestant, none).
  \item \textbf{Connections/Ties:} sociocentric mapping of friendship and adversarial ties within the village; from these we construct individual friend counts and village-level friendship densities and network size.
  \item \textbf{Resources:} individual food insecurity.
  \item \textbf{Geography/access:} an index of village isolation (access routes).
\end{itemize}

Table~\ref{tab:het_vars} lists the analysis variables used to explain heterogeneity in cooperation.

\begin{table}[H]
\centering
\begin{tabular}{llp{7cm}}
\toprule
\textbf{Variable} & \textbf{Type} & \textbf{Description / Coding} \\
\midrule
Contributing amount & Dependent & Lempiras contributed each round (0–12).\\
Age & Continuous & Years.\\
Gender & Binary & 1 = male; 0 = female.\\
Friends & Count & Named friends in village network.\\
Adversaries & Count & Named adversaries in village network.\\
Food insecurity & Binary & 1 = insufficient (waves 1–4); 0 = sufficient.\\
Marital status & Binary & 1 = married/civil union; 0 = otherwise.\\
Education & Ordinal & 0 = none … 13 = $>$secondary. \\
Indigenous status & Binary & 1 = Indigenous; 0 = non-indigenous.\\
Religion & Categorical & 2 = Catholic; 1 = Protestant; 0 = none.\\
Access routes & Continuous & 1–5 scale; higher values indicate greater isolation.\\
Friendship density & Continuous & Share of realized friendship ties in village graph.\\
Adversarial density & Continuous & Share of realized adversarial ties.\\
Village size & Count & Number of nodes in village network.\\
\bottomrule
\end{tabular}
\caption{Variables used to explain heterogeneity in contribution behaviour.}
\label{tab:het_vars}
\end{table}

Across $25{,}910$ player--round observations, contribution amounts (0--12 Lempiras) averaged $6.355$ (SD $3.452$). Regarding sample composition, participants were on average $36.79$ years old (SD $15.76$; range $14$--$89$). The typical player named $7.03$ friends (SD $4.51$; range $0$--$35$) and $0.78$ adversaries (SD $1.28$; range $0$--$10$). Men constitute $41.0\%$ ($n{=}1{,}063$) and women $59.0\%$ ($n{=}1{,}528$). Just under half report food insecurity ($43.9\%$, $n{=}1{,}088$), and about two–thirds are married or in a civil union ($66.2\%$, $n{=}1{,}714$). Indigenous identity is reported by $12.8\%$ ($n{=}318$). Educational attainment is concentrated at lower levels: level~0 $25.4\%$ ($n{=}635$), level~1 $7.3\%$ ($n{=}182$), level~2 $13.7\%$ ($n{=}343$), level~3 $16.7\%$ ($n{=}416$), level~4 $8.4\%$ ($n{=}209$), level~5 $4.7\%$ ($n{=}118$), and level~6 $18.2\%$ ($n{=}455$), with higher categories rarer (level~8 $3.8\%$, $n{=}94$; level~11 $1.3\%$, $n{=}33$; level~13 $0.4\%$, $n{=}11$). By religion, Catholics constitute the majority ($57.1\%$, $n{=}1{,}424$), followed by Protestants ($33.7\%$, $n{=}842$) and those with no religion ($9.2\%$, $n{=}230$). Villages had a mean network size of $193.747$ nodes (SD $118.230$; range $54$--$668$), with average friendship density $0.019$ (SD $0.010$; range $0.003$--$0.059$) and adversarial density $0.002$ (SD $0.002$; range $0.000$--$0.010$). The access--routes index averaged $1.902$ (SD $0.714$; range $1$--$5$).

\subsection{Initial Data Analysis}

Figure~\ref{fig:mean_variance} summarizes the evolution of contributing over ten rounds. Mean contributions fall from just over 7 L to roughly 6 L, while cross-sectional variance rises by nearly 25\%. 

Kernel densities (lower-left) reveal a visible split into low ($\leq$6 L) and high ($>$6 L) modes after round~3, foreshadowing the two persistent cooperation paths we estimate below. Despite the widening dispersion, round-to-round correlations remain high ($\approx0.55$–$0.75$), indicating strong individual persistence. Figure~\ref{fig:density} recasts the distributions as counts by contribution band. The 0–3 L group nearly doubles, the 8–10 L group shrinks steadily, and the intermediate (4–7 L) and full (11–12 L) bands remain comparatively flat, visualizing migration toward the low end.

Figure~\ref{fig:Initial_High_Low} tracks types. The numbers of high ($\geq$ 6 L) and low ($<$ 6 L) contributors converge by round~6. Pure free-riding (0 L) rises monotonically, while full contributions (12 L) remain roughly constant. Figure ~\ref{fig:contribution_by_round} depicts counts by round. More specifically, free riders increase over the session (from $165$ in round~1 to a peak of $230$ in round~9, finishing at $224$ in round~10), while full contributors remain relatively stable (roughly $218$--$237$ per round). Low contributors rise from $1{,}064$ (round~1) to $1{,}341$ (round~10), whereas high contributors decline from $1{,}527$ to $1{,}250$ over the same period. The number of low contributors overtakes high contributors by round~8, reflecting the gradual shift in mass toward lower contribution levels across rounds.



\begin{figure}[H]
  \centering

  \begin{subfigure}{0.49\textwidth}
    \centering\includegraphics[width=\linewidth]{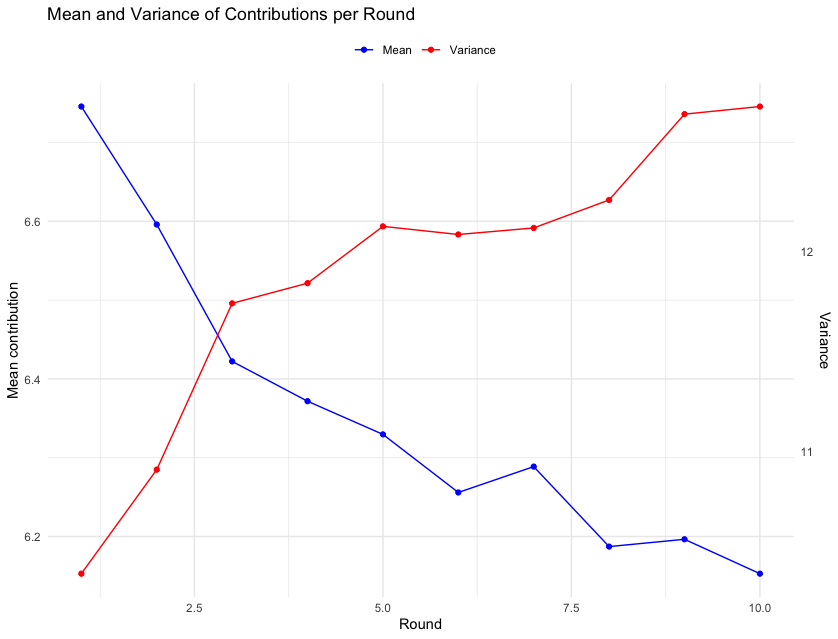}
         \subcaption{Mean and Variance per Round}  
    \label{fig:mean_variance}
  \end{subfigure}\hfill
  \begin{subfigure}{0.49\textwidth} \centering\includegraphics[width=\linewidth]{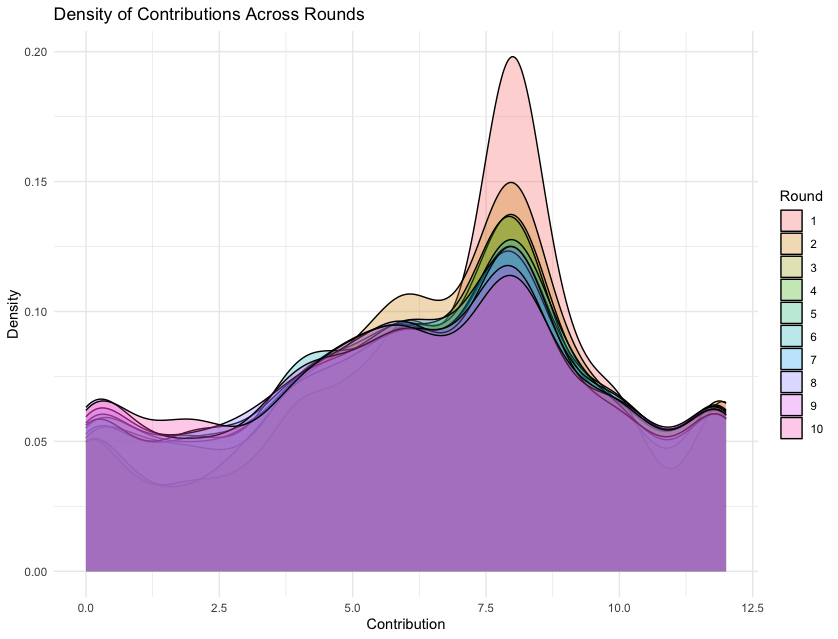}
       \subcaption{Density per Round}  
    \label{fig:density}
  \end{subfigure}

  \medskip

  \begin{subfigure}{0.49\textwidth}
 \centering\includegraphics[width=\linewidth]{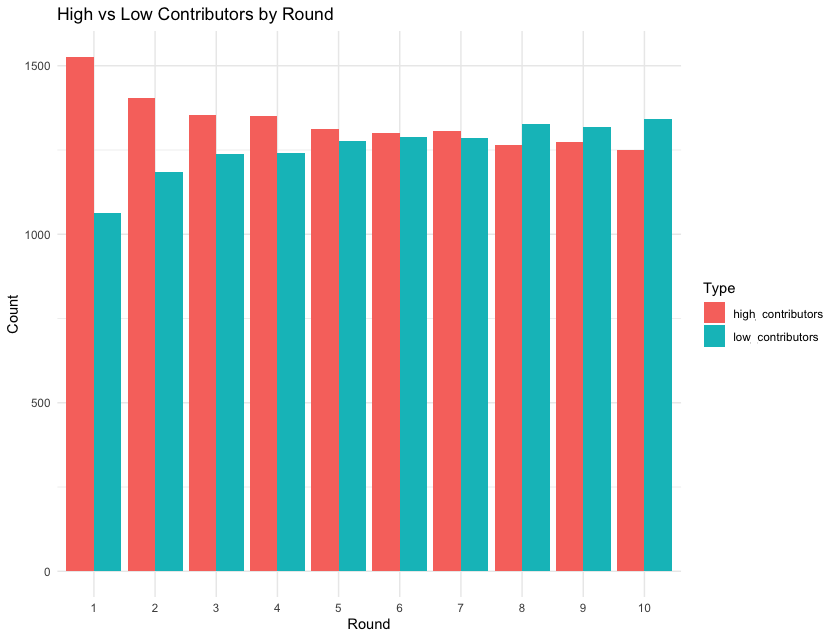}
      \subcaption{High and Low per Round}  
    \label{fig:Initial_High_Low}
      \end{subfigure}\hfill     
  \begin{subfigure}{0.49\textwidth} \centering\includegraphics[width=\linewidth]{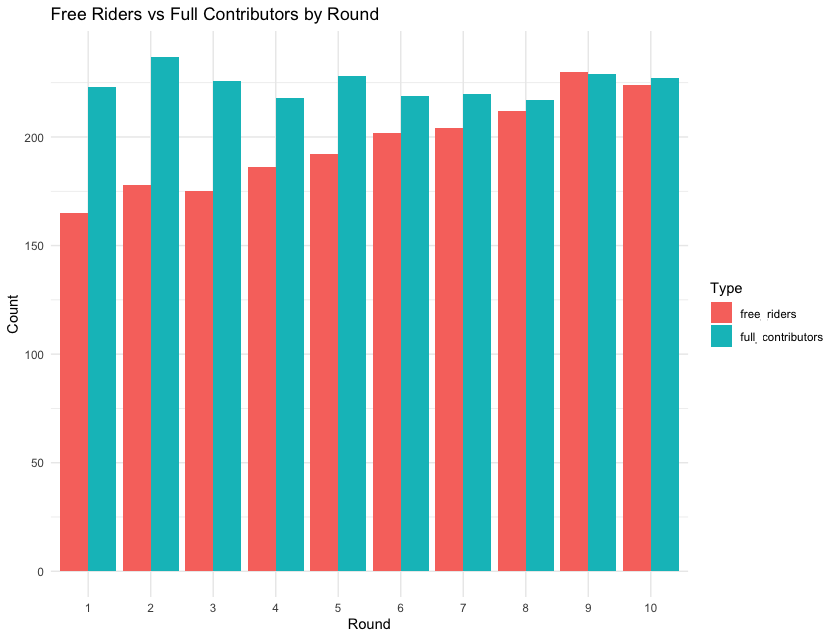}
     \subcaption{Free Riders versus Full Contributors}   \label{fig:contribution_by_round}
  \end{subfigure}

  \caption{\textbf{Initial data analysis.} (a) Mean contributions decline from just over 7\,L to about 6\,L as cross-sectional variance rises across rounds. (b) Round-wise contribution densities become visibly bimodal after round~3, separating into low ($\leq$6\,L) and high ($>$6\,L) modes. (c) Headcounts of High ($\geq$6\,L) and Low ($<$6\,L) contributors converge by round~6. (d) Free riders (0\,L) increase monotonically, whereas full contributors (12\,L) remain comparatively stable over the session.}
  \label{fig:summary_panels_0}
\end{figure}


\section{Behavioral Modeling through Adaptive Dynamics and Evolutionary Game  Theory}
\label{sec:theory}
This section formalizes the repeated public goods game used in the experiment, derives best-reply behavior, and adaptive dynamics, and sets out a parsimonious evolutionary game dynamic that motivates the empirical tests that follow. All objects are defined when first introduced; proofs are sketched in text and provided in the Appendix.

\subsection{Stage Game and Instantaneous Payoff}
\label{sec:game}
\paragraph{Environment.}
Each session consists of $T=10$ identical rounds played in fixed groups of $N=5$ players. In round $t\in\{1,\dots,T\}$:
\begin{enumerate}
  \item Player $i$ chooses a contribution $c_{it}\in[0,12]$ (lempiras).
  \item The group receives a linear public return $b\sum_{j=1}^{N}c_{jt}$, where the marginal benefit $b>0$ is common knowledge.
  \item The return is split equally among the $N$ players; each contributed unit costs its owner $\kappa>0$ ($\kappa = 1$ by default).
\end{enumerate}

\noindent\textbf{Assumption A1 (socially efficient, privately costly).} We impose $b>\kappa$ (social surplus from a unit contribution is positive) and $b/N<\kappa$ (the private marginal gain is below the private marginal cost), so contributing is socially beneficial but individually costly.

\paragraph{Preferences.}
Player $i$’s instantaneous utility in round $t$ has three components:
\begin{itemize}
\item \emph{Material payoff:} 
$\displaystyle
\frac{b}{N}\bigl[c_{it}+(N-1)\bar c_{-i,t}\bigr]-\kappa\,c_{it}$, 
where $\bar c_{-i,t}:=\frac{1}{N-1}\sum_{j\ne i}c_{jt}$ denotes the contemporaneous peer mean.

\item \emph{Altruism:} $d_i\,(c_{it})^{\alpha}$ with individual weight $d_i>0$ and curvature $\alpha\in(0,1)$ (diminishing returns).

\item \emph{Aversion to being at the norm:} $-\,h_i\exp\!\bigl[-\knorm(c_{it}-\bar c_{-i,t-1})^{2}\bigr]$, 
and $h_i\in[0,1]$ measures the salience of anti-conforming to the lagged peer norm 
$\bar c_{-i,t-1}:=\frac{1}{N-1}\sum_{j\ne i}c_{j,t-1}$, and $\knorm>0$ controls curvature.\footnote{Any smooth, bounded \emph{cost} function $P(c_{it}-\bar c_{-i,t-1})$ that attains its maximum at zero deviation (so the utility term $-P(\cdot)$ is minimal there) yields qualitatively similar results.}\footnote{Any smooth, bounded penalty that is maximized at zero deviation yields qualitatively identical results.}
\end{itemize}

\noindent Combining terms yields the instantaneous utility
\begin{equation}
\label{eq:utility}
\pi_i^{t}
=\frac{b}{N}\Bigl[c_{it}+(N-1)\bar c_{-i,t}\Bigr]
-\kappa\,c_{it}
+d_i\bigl(c_{it}\bigr)^{\alpha}
-h_i\exp\!\Bigl[-\knorm\bigl(c_{it}-\bar c_{-i,t-1}\bigr)^{2}\Bigr].
\end{equation}

\paragraph{Baseline curvature.} In both theory and estimation we adopt
\[
\alpha=\tfrac12,
\]
which (i) parsimoniously captures concavity and (ii) delivers closed-form expressions. In SI, we show that all qualitative results survive for $\alpha\in\{0.3,0.7\}$, as a robustness check on $\alpha$.

\subsection{Best-Reply Behavior}
\label{sec:bestreply}
Treating $\bar c_{-i,t-1}$ as given, player $i$ solves $D(c)=\partial\pi_i^{t}/\partial c_{i,t}=0$ over $c\in[0,12]$:
\[
D(c)
=\Bigl(\tfrac{b}{N}-\kappa\Bigr)
+d_i\,\alpha\,c^{\alpha-1}
+2\knorm\,h_i\,(c-\bar c_{-i,t-1})
       \exp\!\bigl[-\knorm(c-\bar c_{-i,t-1})^{2}\bigr]=0.
\]
Because $\alpha<1$, the altruism term is strictly decreasing and concave up in $c$; the aversion to being at the norm term is monotone in $h_i$ and pulls the optimum away $\bar c_{-i,t-1}$. For the parameter range we consider, the best reply is numerically single-valued, continuous, and bounded on $[0,12]$.

\subsection{Adaptive Dynamics}
\label{sec:adaptive}
Assume a mutant $i$ playing $c_m$ ($c_{i,t} = c_m$) and a population playing strategy $c$ (namely $\bar c_{-i,t-1} = c$) for contributing, and then the selection gradient that governs the adaptation of the best possible strategy $c$ is given by
\begin{equation}
\dot{c} = D(c)=\frac{\partial\pi_i^{t}}{\partial c_{i,t}}\left|_{c_{i,t} = c}\right. = \Bigl(\tfrac{b}{N}-\kappa\Bigr)
+d_i\,\alpha\,c^{\alpha-1}
\label{adaptivedyn}
\end{equation}

Solving $D(c) = 0$ we obtain the singular strategy $c^*$.  The stability of $c^*$ is determined by $D'(c^*)$. $c^*$ is convergence stable if $D'(c^*)<0$.

\subsection{Convergence Stability and Evolutionary Branching}
\label{sec:stability}
\begin{prop}[Convergence stability]
\label{prop:css}
Under Assumption~A1 and $\alpha\in(0,1)$, for any $d_i>0$ and $h_i\ge0$, the adaptive dynamics~\ref{adaptivedyn} admit at most one singular strategy:
\[
c^{*}
=\Bigl[\bigl(\kappa-\tfrac{b}{N}\bigr)\big/\bigl(d_i\alpha\bigr)\Bigr]^{\!1/(\alpha-1)}\in(0,12],
\]
which is convergence stable:
\end{prop}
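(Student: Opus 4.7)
My plan is to reduce the proposition to an elementary monotonicity argument on the one-dimensional selection gradient $D(c)$ in~(\ref{adaptivedyn}). The first task is to justify the two-term form of $D$, i.e.\ the absence of $h_i$: differentiating (\ref{eq:utility}) with respect to $c_{i,t}$ yields an additional term $2\knorm h_i(c_{i,t}-\bar c_{-i,t-1})\exp[-\knorm(c_{i,t}-\bar c_{-i,t-1})^{2}]$ from the anti-conformity penalty, which vanishes identically when the mutant coincides with the resident ($c_{i,t}=c=\bar c_{-i,t-1}$). Hence $h_i$ plays no role in the location of the singular strategy, consistent with the statement ``for any $h_i\ge 0$''.

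Next I would solve $D(c)=0$ in closed form. Under Assumption~A1 we have $\kappa-b/N>0$, so $d_i\alpha\,c^{\alpha-1}=\kappa-b/N$ has a positive right-hand side; since $\alpha-1<0$, the power is invertible on $(0,\infty)$ and returns exactly the claimed $c^{*}=[(\kappa-b/N)/(d_i\alpha)]^{1/(\alpha-1)}$. Positivity is immediate. The upper bound $c^{*}\le 12$ amounts to a parameter restriction on $(b,\kappa,d_i,\alpha)$ satisfied over our calibrated range; in the complementary regime one has $D(c)>0$ throughout $[0,12]$, so the corner $c=12$ is the relevant convergence-stable state, and I would state this as a brief remark rather than a separate case.

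For uniqueness and stability, differentiate once more: $D'(c)=d_i\alpha(\alpha-1)c^{\alpha-2}$. Because $d_i>0$, $\alpha\in(0,1)$, and $c>0$, we have $D'(c)<0$ on the whole positive half-line. Strict monotonicity of $D$ therefore forces at most one zero---delivering the ``at most one singular strategy'' clause---and evaluating at $c=c^{*}$ gives $D'(c^{*})<0$, which is exactly the convergence-stability criterion recalled just before the proposition. No additional compactness or fixed-point machinery is needed.

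The only genuinely subtle step is the first one, namely showing rigorously that the anti-conformity penalty contributes nothing to $D$ at resident--mutant coincidence; this is a consequence of the penalty being smooth and symmetric in $c_{i,t}-\bar c_{-i,t-1}$ with a stationary point at zero, and it is what lets $h_i$ drop out. The complementary question of evolutionary branching, named in the section title, is more delicate because the \emph{second} derivative of the invasion fitness at $c^{*}$ retains an $h_i\knorm$ contribution that can flip sign under strong anti-conformity; I would defer that analysis to the branching part of the section.
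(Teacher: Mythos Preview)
Your proposal is correct and follows essentially the same route as the paper: both arguments rest on the fact that the anti-conformity term contributes zero to the selection gradient at mutant--resident coincidence, then use $D'(c)=d_i\alpha(\alpha-1)c^{\alpha-2}<0$ on $(0,\infty)$ to obtain uniqueness of the root and convergence stability simultaneously, with the closed form read off from $D(c)=0$. Your treatment of the $c^{*}\le 12$ bound as a parameter restriction (with the corner $c=12$ handling the complementary case) is slightly more careful than the paper's, which simply asserts that the endowment cap enforces membership in $(0,12]$.
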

\noindent\emph{Sketch.} With $0 < \alpha<1$, $D(c)> 0$ for $c \to 0$ and $D(c)<0$ as $c \to \infty$; $D$ crosses zero exactly once, yielding one possible  singular strategy $0< c^* \le 12$. Local convergence follows from standard adaptive-dynamics arguments: $c^*$ is an attractor since $D'(c^*) <0$.

\noindent The convergence stable strategy $c^*$ has the following closed form for $\alpha=\tfrac12$.\;
\[
c^{*}
=\left[\frac{0.5\,d_i}{\kappa-b/N}\right]^{\!2}\in(0,12]\quad\text{whenever }b<N\kappa.
\]

\begin{prop}[ESS vs.\ branching]
\label{prop:ess}
The singular strategy $c^{*}$ is convergence stable since $D'(c^{*})<0$. Then $c^{*}$ is an evolutionarily stable strategy (ESS) iff
\[
d_i\,\alpha(\alpha-1)\,(c^{*})^{\alpha-2}+2\knorm\,h_i<0.
\]
If the inequality reverses, $c^{*}$ remains convergence-stable but is not ESS, so evolutionary branching can arise.
\end{prop}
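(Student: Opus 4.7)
The plan is to cast the argument in the standard adaptive-dynamics framework of Geritz, Kisdi, Mesz\'ena and Metz, reducing everything to second-order quantities of the invasion fitness. Define the invasion fitness of a rare mutant playing $c_m$ against a monomorphic resident playing $c$ by evaluating \eqref{eq:utility} at $\bar c_{-i,t}=\bar c_{-i,t-1}=c$:
\[
s(c_m,c)=\tfrac{b}{N}\bigl[c_m+(N-1)c\bigr]-\kappa\,c_m+d_i\,c_m^{\alpha}-h_i\exp\!\bigl[-\knorm(c_m-c)^{2}\bigr].
\]
The selection gradient $D(c)=\partial_{c_m}s(c_m,c)\vert_{c_m=c}$ reproduces \eqref{adaptivedyn}, because the derivative of the Gaussian penalty carries a factor $(c_m-c)$ that vanishes on the diagonal. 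Proposition~\ref{prop:css} already records $D'(c^{*})=d_i\alpha(\alpha-1)(c^{*})^{\alpha-2}<0$, so $c^{*}$ is convergence stable; the task reduces to characterising when it is also evolutionarily stable.

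By the Geritz et al.\ criterion, $c^{*}$ is an ESS iff the pure own-second derivative of the invasion fitness is negative at the singular point, i.e.\ $\partial^{2}_{c_m}s(c_m,c)\vert_{c_m=c=c^{*}}<0$. I would compute this term by term. The material payoff is linear in $c_m$ and contributes $0$. The altruism term $d_i c_m^{\alpha}$ contributes $d_i\alpha(\alpha-1)(c^{*})^{\alpha-2}$, which is negative because $\alpha\in(0,1)$. For the Gaussian penalty, two differentiations in $c_m$ give $h_i\bigl[\,2\knorm-4\knorm^{2}(c_m-c)^{2}\,\bigr]\exp[-\knorm(c_m-c)^{2}]$; at $c_m=c$ only the constant piece survives and contributes $+2\knorm h_i$. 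Summing the three contributions reproduces the criterion stated in the proposition. A useful consistency check is that $D'(c^{*})=\partial^{2}_{c_m}s+\partial^{2}_{c_m c}s$ at $(c^{*},c^{*})$, and computing the mixed partial yields $-2\knorm h_i$, so the $\knorm h_i$ pieces cancel in $D'(c^{*})$ and convergence stability is insensitive to anti-conformity while evolutionary stability is not.

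For the branching statement I would read the reversed inequality as the canonical adaptive-dynamics configuration of a convergence-stable non-ESS singular point: monomorphic dynamics drive strategies toward $c^{*}$, yet once there the resident is invadable by nearby mutants on both sides, so disruptive selection splits the population into two coexisting strategies drifting away from $c^{*}$. The existence of a dimorphic attractor in a neighbourhood of such a point is a standard theorem that I would cite rather than reprove, and I would tie the condition back to the empirical story: large $\knorm h_i$ (strong anti-conformity relative to the diminishing returns of altruism) is precisely what destabilises the interior equilibrium and sends groups onto separate high and low paths. The main obstacle is purely algebraic, namely the sign bookkeeping in the Gaussian second derivative and the recognition that at $c_m=c$ the only surviving term, $+2\knorm h_i$, convexifies the payoff at the resident and therefore enters the ESS condition with a destabilising sign; everything else is direct substitution into the classical criteria.
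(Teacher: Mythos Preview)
Your argument is correct and coincides with the paper's own proof: both compute $\partial^{2}_{c_m}\pi_i^{t}\big|_{c_m=c=c^{*}}$ term by term, obtaining $d_i\alpha(\alpha-1)(c^{*})^{\alpha-2}$ from the altruism piece and $+2\knorm h_i$ from the Gaussian penalty, and then invoke the standard adaptive-dynamics ESS/branching dichotomy. Your extra consistency check on the mixed partial, confirming that the $\knorm h_i$ terms cancel in $D'(c^{*})$, is a nice addition not spelled out in the paper but changes nothing substantive.
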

The population initially converges to the singular strategy $c^*$ by following the selection gradient. However, when the ESS condition fails, that is, when
\[
d_i\,\alpha(\alpha - 1)(c^{*})^{\alpha-2} + 2\knorm h_i > 0,
\]
$c^*$ becomes a local fitness minimum, so ``mutants'' close to $c^*$ on both sides can invade, leading to the onset of evolutionary branching.

\subsection{Evolutionary Dynamics}
\label{sec:moran}
To capture long-run selection on behavioral rules, we embed \eqref{eq:utility} in a Moran birth–death process (weak selection) on a large population:
\begin{enumerate}
  \item \emph{Reproduction:} an individual is drawn with probability proportional to $w_i=1+\delta\,\pi_i^{t}$, $\delta\in(0,1)$.
  \item \emph{Replacement:} the offspring replaces a uniformly chosen individual elsewhere.
\end{enumerate}
Let $\hat c$ denote the empirical mean of round-1 contributions. Define a binary state
\[
\theta_t \;=\;\mathbf{1}\{c_{it}>\hat c\}\in\{0,1\}\quad\text{(High vs.\ Low)}.
\]
Then $\{\theta_t\}$ is a Markov chain with transition matrix $\mathbf T(\delta)=\bigl[p_{xy}\bigr]_{x,y\in\{0,1\}}$. For small $\delta$ we linearize $p_{xy}$ analytically and verify the approximation via agent-based simulation.
\subsection{Identification and Empirical Mapping}
\label{sec:ident}
Only the product $\phi_i = 2\knorm h_i$ enters the ESS second-order (ESS/branching) condition, whereas the first-order condition for the singular strategy $c^*$ depends only on $d_i$. Separate identification of $(d_i,\phi_i)$ therefore leverages auxiliary measures and the panel:
\begin{itemize}
  \item $h_i$, we conceptually interpret it as aversion to being at the norm; in our empirical mapping we calibrate $k h_i$ from dispersion.
  \item $\knorm$ is calibrated from the cross-round dispersion of $(c_{it}-\bar c_{-i,t-1})^{2}$.
  \item $d_i$ is estimated by maximum likelihood given $(h_i,\knorm)$, with variance-inflation checks.
\end{itemize}
For empirical alignment with the observed two-basin structure, we also analyze a binary High/Low state $\theta_t$ and calibrate a two-parameter Fermi–Moran process. Let $\theta_i\in\{H,L\}$ and define
\[
w_i(\theta_i)=
\begin{cases}
1+d, & \theta_i=H,\\
1, & \theta_i=L,
\end{cases}
\qquad
\Pr(\text{reproduce }i)=\dfrac{\exp(k\,w_i)}{\sum_{\ell}\exp(k\,w_\ell)}.
\]
We then choose $(d,k)$ to minimize $\mathrm{RSS}=\lVert T_{\mathrm{sim}}(d,k)-T_{\mathrm{emp}}\rVert_F^2$, where $T_{\mathrm{emp}}$ is the empirical round-1$\to$round-10 transition matrix. This calibration serves as a compact stochastic bridge from the micro model to the observed High/Low persistence and changes.

\subsection{Instrumental–variables strategy}
\label{sec:iv_strategy}

We estimate peer effects using two complementary two-stage least squares (2SLS) designs that differ in their fixed effects and instruments.

\paragraph{(A) Contemporaneous peers with LOO composition IVs (round \& village FE).}
For the same–round peer mean $\bar c_{-i,gt}$ we use \emph{leave–one–out} (LOO) composition instruments constructed from predetermined traits $W^{(l)}$:
\[
Z_{ig}^{(l)} \;=\; \frac{1}{N-1}\sum_{j\neq i} W_j^{(l)}.
\]
Random assignment into groups within villages guarantees relevance, while anonymity (avatars) and the predetermined nature of $W$ support exclusion. Because these LOO instruments are (largely) time–invariant at the player level when group composition is fixed, we \emph{do not} include individual fixed effects here; instead we remove \emph{round} and \emph{village} fixed effects by two–way demeaning,
\begin{equation}
\tilde z_{igt} \;=\; z_{igt} - \bar z_{\cdot\cdot t} - \bar z_{v\cdot\cdot} + \bar z_{\cdot\cdot\cdot},
\label{eq:tw_demean}
\end{equation}
and estimate
\begin{equation}
\tilde c_{igt} \;=\; \beta\,\tilde{\bar c}_{-i,gt} \;+\; \tilde X'_{igt}\gamma \;+\; \tilde\varepsilon_{igt},
\qquad
\tilde{\bar c}_{-i,gt}\ \text{endogenous}.
\label{eq:second_stage}
\end{equation}
In the preferred lean set we use $\{Z_{ig}^{(\text{male})}, Z_{ig}^{(\text{no\,religion})}, Z_{ig}^{(\text{indigenous})}\}$ (dropping \textit{Protestant}). We report weak–IV $F$ (Kleibergen–Paap), Wu–Hausman, and Sargan tests.

\paragraph{(B) Lagged peers with time–varying shift–share IV (individual FE + village$\times$round FE).}
To capture learning, we target the \emph{lagged} peer mean $\bar c_{-i,g,t-1}$ and use a single time–varying shift–share instrument. Let $s^{(l)}_{g,-i}$ be the LOO group share of trait $l$, and let $\mu^{(l)}_{t-1}$ be the round–$(t\!-\!1)$ cross–village mean of that trait. We implement a leave–one–village (LOV) variant that excludes village $v(i)$ from the shifter:
\begin{equation}
Z^{LOV}_{igt} \;=\; \sum_{l} s^{(l)}_{g,-i}\,\mu^{(-v),(l)}_{t-1},
\label{eq:Z_LOV}
\end{equation}
This delivers \emph{time variation} in the instrument even when $s^{(l)}_{g,-i}$ is fixed for a player. In estimation we absorb \emph{individual} and \emph{village$\times$round} fixed effects (i.e., two–way within relative to these FE) and estimate
\begin{equation}
\tilde c_{igt} \;=\; \beta\,\widetilde{\bar c}_{-i,g,t-1} \;+\; \tilde X'_{igt}\gamma \;+\; \tilde\varepsilon_{igt},
\qquad
\widetilde{\bar c}_{-i,g,t-1}\ \text{instrumented by}\ \tilde Z^{LOV}_{igt}.
\label{eq:second_stage_lov}
\end{equation}
Standard errors are clustered by group and are robust to heteroskedasticity (small–sample–adjusted HC1). Because the cluster–robust first stage is moderate in this design, we accompany 2SLS with weak–ID–robust Anderson–Rubin confidence sets.

\medskip
\noindent Design (A) asks whether the same–round peer mean causally moves $c_{it}$ once we purge common shocks (round, village); the IV estimates are near zero, consistent with simultaneity driving OLS co‑movement. Design (B) asks whether players \emph{update} toward last round’s group norm; the LOV shift–share IV yields a positive, economically meaningful response, with inference based on cluster–robust KP $F$ and AR intervals.

\section{Results}
\label{sec:results}

\subsection{Main empirical markers of the bifurcation}
\label{sec:markers}

Behavior separates into two basins along three complementary markers. 
(i) The empirical drift crosses zero once at $c^{*}\!\approx 5.75$~L (95\% band $[5.50,6.01]$): contributions below drift up, and those above drift down (Fig.~\ref{fig:summary_panels}\,a). 
(ii) A two–state HMM identifies a Low state (mean $\approx 4.09$~L) and a High state (mean $\approx 9.05$~L) with very persistent transitions and rarer escapes from Low than slips from High (Fig.~\ref{fig:summary_panels}\,b). 
(iii) At the village level, the probability of ending High rises steeply once the early share above $c^{*}$ exceeds $\approx\!60\%$ (95\% CI $[0.55,0.64]$; Fig.~\ref{fig:summary_panels}\,c).
These markers align with the singular strategy and branching logic in Section~\ref{sec:theory}.

\subsubsection{Empirical tipping point \(c^{*}\).}

We recover a single, data–driven breakpoint at \(c^{*}\!\approx 5.75\)~L (95\% band \([5.50,6.01]\)) at which the expected round–to–round change in giving switches sign (Figure~\ref{fig:tipping}). Contributions below \(c^{*}\) tend to drift upward, while contributions above \(c^{*}\) drift downward. This pattern is the empirical hallmark of two behavioral “basins”: a low–giving basin that pulls up toward \(c^{*}\), and a high–giving basin that relaxes back toward it. \(c^{*}\) can be understood as the singular attractor found in our adaptive dynamics. Substantively, \(c^{*}\) offers a transparent diagnostic in the long term: groups spending early rounds mostly above the threshold are primed to sustain higher giving, whereas those starting mostly below are at risk of settling into a low–contribution regime. This divergence corresponds to evolutionary branching of two contribution paths, identified by our evolutionary game theory analysis above.

For each player \(i\) and round \(t=1,\dots,9\) we form the increment \(\Delta c_i^{\,t}:=c_i^{\,t+1}-c_i^{\,t}\) and estimate the conditional mean function
\[
m(c)\;=\;\mathbb{E}\!\left[\Delta c\,\middle|\,c\right]
\]
nonparametrically using a generalized additive model with a cubic regression spline in \(c\) (Gaussian family, identity link; smoothing chosen by REML). Pointwise 95\% confidence bands are obtained via cluster bootstrap at the individual level. The tipping point \(c^{*}\) is defined as the unique zero of the fitted drift, \(\hat m(c^{*})=0\); its 95\% interval is computed from the bootstrap distribution of the root (equivalently, as the set of \(c\) values where the confidence band intersects zero). Results are robust to local–polynomial kernels and bandwidth choices.\footnote{We exclude round 10 when forming \(\Delta c\), trim extreme \(c\) values at conventional percentiles to limit leverage, and obtain nearly identical \(c^{*}\) under local linear regression with plug-in bandwidth.}


\begin{figure}[H]
  \centering

  \begin{subfigure}{0.49\textwidth}
    \centering\includegraphics[width=\linewidth]{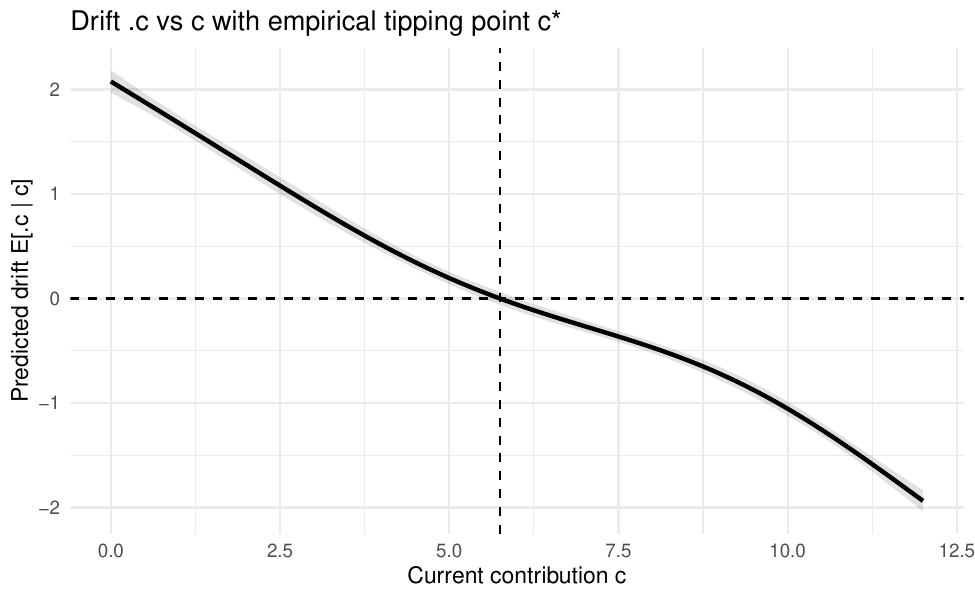}
    \subcaption{Empirical drift $\mathbb{E}[\Delta c \mid c]$ and 95\% band. The zero crossing at $c^{*}\!\approx 5.75$~L separates upward (left) from downward (right) drift.}
    \label{fig:tipping}
  \end{subfigure}\hfill
  \begin{subfigure}{0.49\textwidth} \centering\includegraphics[width=\linewidth]{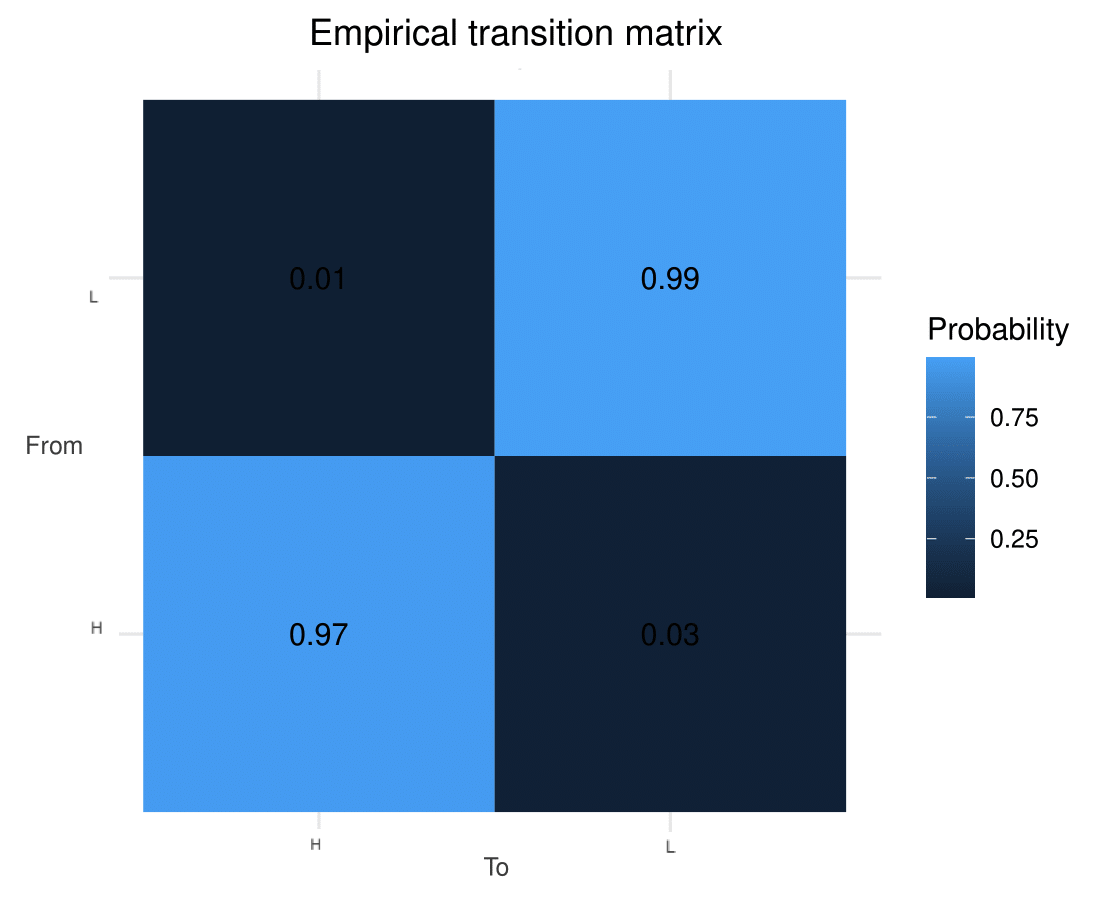}
    \subcaption{Two-state HMM transitions (rows = current, columns = next). 
      Sticky diagonals; rarer escapes from Low than slips from High.}
    \label{fig:hmm}
  \end{subfigure}

  \medskip

  \begin{subfigure}{0.49\textwidth}
 \centering\includegraphics[width=\linewidth]{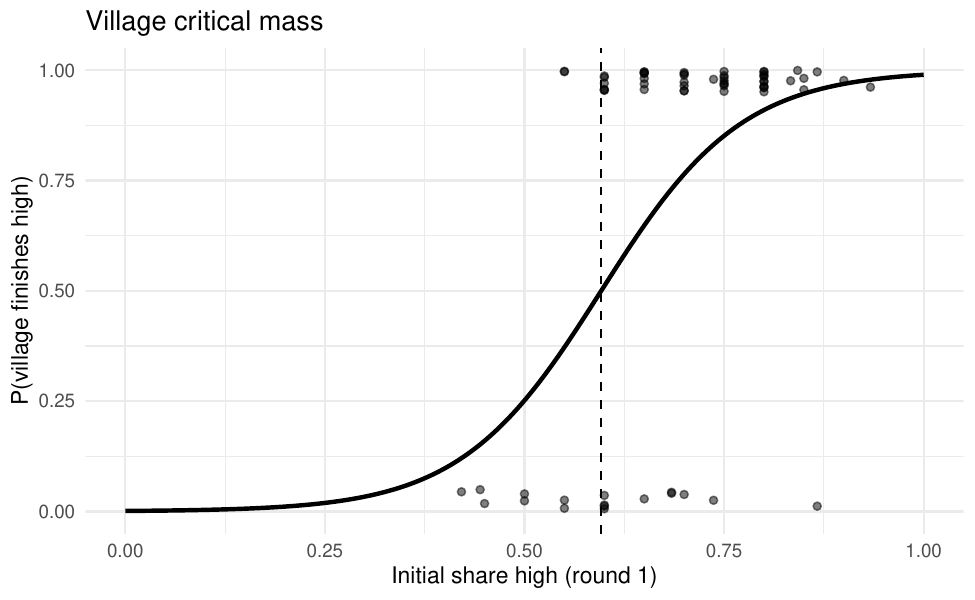}
    \subcaption{Village critical mass: probability of finishing high vs.\ initial share $>c^{*}$. The 50–50 threshold is $\approx 0.60$.}
    \label{fig:criticalmass}
      \end{subfigure}\hfill     
  \begin{subfigure}{0.49\textwidth} \centering\includegraphics[width=\linewidth]{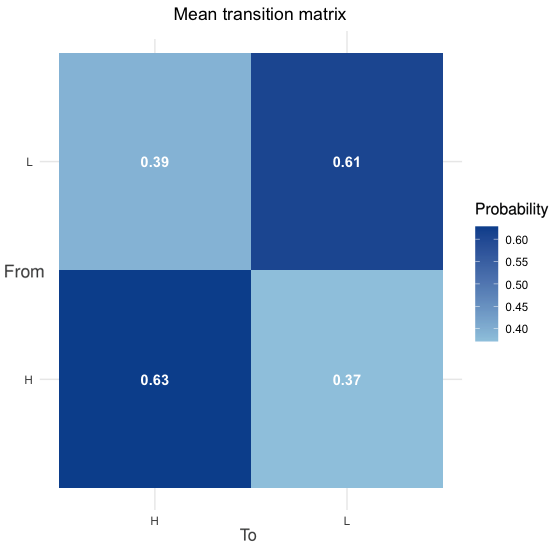}
     \subcaption{Empirical two–state transition matrix of calibrated two–parameter Fermi–rule Moran process over the session.}   \label{fig:calibration_markov_hmm:markov}
  \end{subfigure}

  \caption{\textbf{Empirical markers of the bifurcation.} (a) Drift $\mathbb{E}[\Delta c\mid c]$ crosses zero at $c^{*}\!\approx 5.75$\,L, separating upward from downward adjustment. (b) Two-state HMM transitions show sticky diagonals with rarer $L\!\to\!H$ than $H\!\to\!L$ moves. (c) Village critical mass: the probability of finishing High rises sharply with the initial share above $c^{*}$; the 50–50 threshold is $\sim 0.60$. (d) Calibrated two-parameter Fermi–rule Moran process: empirical two-state transition matrix over the session reproducing persistence and asymmetry.}
  \label{fig:summary_panels}
\end{figure}

\subsubsection{State persistence and asymmetry (HMM).}

A two–state Hidden Markov Model identifies a Low state (mean \(\approx 4.09\)~L) and a High state (mean \(\approx 9.05\)~L), with strongly persistent transitions: \(P(H\!\to\!H)=0.974\) and \(P(L\!\to\!L)=0.991\) (Figure~\ref{fig:hmm}). Off–diagonal moves are rare, and notably, escapes from Low (\(L\!\to\!H\)) are less frequent than slips from High (\(H\!\to\!L\)). These estimates formalize path dependence: once a player enters a state, subsequent play tends to remain there, with an asymmetry that makes Low particularly “sticky.” This finding validates the two–basin view implied by the drift analysis and rules out simple mean reversion as the sole driver and emphasizes prevention over cure: avoiding early \(H\!\to\!L\) slippage (e.g., by buttressing early norms) is easier than engineering later \(L\!\to\!H\) recoveries.

\subsubsection{Village critical mass.}

At the village level, the probability of \emph{ending} in the high–cooperation basin rises steeply with the initial share of players above $(c^{*})$ (round~1), with a data–driven “critical mass” at \(\hat s^{\text{crit}}\!\approx 0.60\) (95\% CI \([0.55,0.64]\)); the slope is large and precise (\(z\!\approx\!4.98\), \(p<10^{-6}\); Figure~\ref{fig:criticalmass}). Villages near or above this 60\% mark very likely finish high; those below it rarely do. This village–level tipping relationship connects individual dynamics to collective outcomes: early majorities of above–threshold contributors lock in favorable feedback, while early deficits are hard to overcome. The result provides a concrete targeting rule for interventions, seed early contributions among socially central actors until the initial above, $(c^{*})$ share clears $\approx 60\%$, after which cooperative dynamics are more likely to be self–sustaining.

\subsubsection{Evolutionary Dynamics of Bifurcation }
\label{sec:calibration}


We provide a compact, descriptive calibration that links the observed two–state
dynamics to a standard binary birth–death imitation process. Full derivations, simulation
details, and robustness checks are provided in SI; the calibration is illustrative (not causal)
and is not used to estimate treatment effects.

To connect the High–Low split to a minimal evolutionary representation, we calibrate a
two-parameter Fermi rule Moran process to the empirical two–state transition matrix over the
sessions. The two parameters have simple interpretations: $d$ is a \emph{selection tilt} toward High
(positive) or Low (negative), and $k$ is \emph{imitation intensity} (higher $k$ reduces noise). In each
microscopic update, one individual reproduces with probability proportional to
$\exp(k\,w_i)$, where $w_i=1+d\,\mathbf 1\{\theta_i=H\}$; the offspring replaces a uniformly chosen
peer. This binary birth–death process is a parsimonious stochastic analogue to best–reply dynamics
and offers a compact micro‑foundation for two basins.

We target the two–state transition matrix inferred from the HMM (latent state; Fig.~\ref{fig:calibration_markov_hmm:markov}),
which filters measurement noise; calibrating to the integrated Markov–logit row means yields very
similar patterns (Appendix \ref{app:bifurcation}). We choose $(d,k)$ to minimize the Frobenius norm
$RSS=\lVert T_{\text{sim}}(d,k)-T_{\text{emp}}\rVert_F^{2}$, using a coarse grid
($d\in[-2,3]$, $k\in[0,1.5]$, step $0.25$) followed by L‑BFGS‑B refinement and averaging simulated
transition matrices over many runs.

The optimum is
\[
\hat d=-0.51,\qquad \hat k=0.50,\qquad RSS=0.057.
\]
The calibrated process reproduces persistence of High contributors well and downward drift reasonably:
\[
p_{HH}^{\text{emp}}=0.69\ \text{vs}\ \hat p_{HH}=0.64,\qquad
p_{HL}^{\text{emp}}=0.31\ \text{vs}\ \hat p_{HL}=0.36.
\]
It over‑predicts upward mobility from Low ($p_{LH}^{\text{emp}}=0.18$, $\hat p_{LH}=0.35$) and thus
under‑predicts Low stickiness ($p_{LL}^{\text{emp}}=0.82$, $\hat p_{LL}=0.65$). The negative
$\hat d<0$ indicates a net selection tilt favoring Low once imitation strength $\hat k\approx0.5$ is
accounted for, consistent with the clustering and hazard evidence that $H\!\to\!L$ drops outnumber
$L\!\to\!H$ escapes. The remaining gap on $LL$ points to frictions absent from the minimal process
(e.g., thresholds, reputation costs, or heterogeneity in $d_i$) anchoring Low more firmly in practice.

\subsection{Heterogeneity}

\subsubsection{Heterogeneity in Contributions}
\label{sec:hetero}


\begin{figure}[H]
  \centering
  \includegraphics[width=\textwidth]{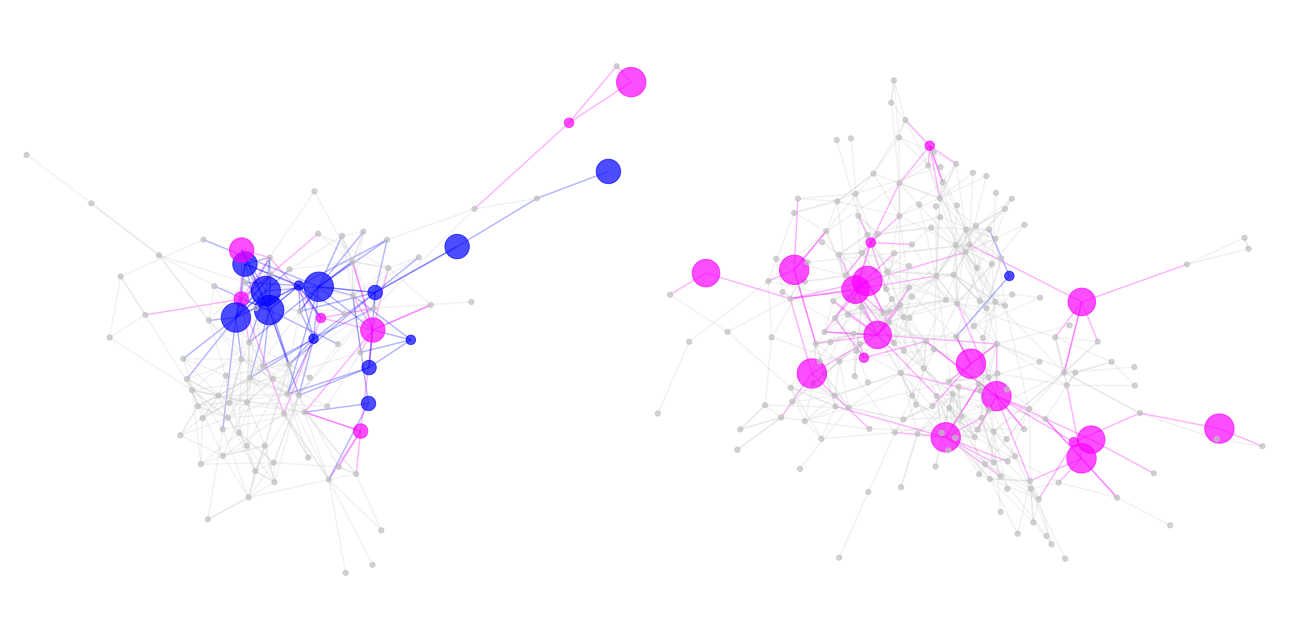}
  \caption{\textbf{Friendship networks in two illustrative villages.} Gray nodes and ties show full village sociocentric networks, colored nodes are participants in the PGG (\textcolor{blue}{men}, \textcolor{magenta}{women}); edges denote friendship ties among these subjects, and node size is proportional to total contribution. Village~25 (left) is male‑dominated and has a lower mean contribution (5.9~L) than Village~4 (right, 7.1~L). In Village~25, the most central men supply a disproportionate share of the public good, consistent with regression evidence that male status and social embeddedness are positively associated with giving.}
\label{fig:villages_network}
\end{figure}

\paragraph{Definitions.}
Let $v(i)$ denote the village and $g(i)$ the (fixed) game group of player $i$.
For $t\ge2$, define the own and peer lags and the round counter as
\begin{align}
c^{\text{own}}_{ig,t-1} &:= c_{ig,t-1},\\
c^{\text{peer}}_{ig,t-1} &:= \bar c_{-i,g,t-1}
=\frac{1}{N-1}\sum_{j\neq i} c_{jg,t-1},\\
r_t &:= \text{round number} \in\{1,\dots,T\}.
\end{align}
Throughout this subsection, $c^{\text{peer}}_{ig,t-1}$ is measured in \emph{Lempiras} (0--12).

\paragraph{Model.}
The linear mixed-effects model for Lempiras contributed per round is

\begin{align}
\label{eq:lme_main}
\hspace*{-1.5cm}
c_{igt} \;=\;&\;
\underbrace{\beta_0
+ \beta_1\,c^{\text{own}}_{ig,t-1}
+ \beta_2\,c^{\text{peer}}_{ig,t-1}
+ \beta_3\,r_t
+ X_{ig}'\gamma}_{\text{fixed part}} \nonumber \\[6pt]
&\;+\;
\underbrace{\delta_{g(i)} \;+\; \alpha_{v(i)} \;+\; \kappa_{v(i),t} \;+\; a_i}_{\text{random intercepts}}
\;+\;
\underbrace{b_{i1}\,r_t \;+\; b_{i2}\,c^{\text{own}}_{ig,t-1} \;+\; b_{i3}\,c^{\text{peer}}_{ig,t-1}}_{\text{by-player random slopes}}
\;+\;
\varepsilon_{igt}.
\end{align}
where $X_{ig}$ collects observed covariates (e.g., gender, age, education, and network measures).

\paragraph{Random effects and errors.}
Random intercepts for groups, villages, village$\times$round cells, and players:
\begin{align}
\delta_{g} &\sim \mathcal{N}(0,\ \sigma_{\delta}^{2}), \qquad g=1,\dots,G,\\
\alpha_{v} &\sim \mathcal{N}(0,\ \sigma_{\alpha}^{2}), \qquad v=1,\dots,V,\\
\kappa_{v,t} &\sim \mathcal{N}(0,\ \sigma_{\kappa}^{2}) \quad \text{for observed village$\times$round cells},\\
a_{i} &\sim \mathcal{N}(0,\ \sigma_{a}^{2}), \qquad i=1,\dots,I,
\end{align}
and by-player random slopes (assumed \emph{uncorrelated}):
\begin{equation}
\label{eq:rand_slopes}
\begin{bmatrix} b_{i1}\\[2pt] b_{i2}\\[2pt] b_{i3} \end{bmatrix}
\;\sim\;
\mathcal{N}\!\left(
\begin{bmatrix} 0\\[2pt] 0\\[2pt] 0 \end{bmatrix},
\ \mathrm{diag}\!\big(\tau_{r}^{2},\ \tau_{\text{own}}^{2},\ \tau_{\text{peer}}^{2}\big)
\right),
\qquad
\text{with } \Cov\!\big((b_{i1},b_{i2},b_{i3}),\,a_i\big)=\mathbf{0}.
\end{equation}
The idiosyncratic error is
\begin{equation}
\varepsilon_{igt} \sim \mathcal{N}(0,\ \sigma_{\varepsilon}^{2}),
\qquad
\text{independent of all random effects.}
\end{equation}
We note a few considerations in our analysis below:\\
\noindent
(i) Equation~\eqref{eq:lme_main} is an ARX specification: own dynamics enter via $c^{\text{own}}_{ig,t-1}$ and social dynamics via the lagged peer mean $c^{\text{peer}}_{ig,t-1}$.\\
(ii) Random intercepts allow group-, village-, village$\times$round-, and player-specific baselines; by-player random slopes let the persistence ($c^{\text{own}}_{ig,t-1}$), peer responsiveness ($c^{\text{peer}}_{ig,t-1}$), and round trend ($r_t$) vary across players.\\
(iii) ``Uncorrelated random slopes'' is implemented by the diagonal covariance in \eqref{eq:rand_slopes}; this can be relaxed if slope correlations are of interest.


\paragraph{Main results.}We estimate a linear mixed-effects model\footnote{Random intercepts for villages and players, and by-player
\emph{uncorrelated} random slopes for the round counter $r_t$, own lag $c^{own}_{i,t-1}$, and peer lag $c^{group}_{i,t-1}$}
with the dependent variable equal to Lempiras contributed per round.
Using \(N=22{,}239\) observations from \(2{,}471\) players in \(134\) villages (Table~\ref{tab:rs_learn_full_A})),
the baseline predicted contribution is \(4.48\)~L per round (intercept).

Own persistence is strong: the coefficient on the individual lag $c^{\text{own}}_{ig,t-1}$ is \(+0.126\) (SE \(=0.0091\), \(t=13.72\), \(p<2\times10^{-16}\)).
The lagged peer mean $c^{\text{peer}}_{ig,t-1}$ is small and statistically indistinguishable from zero. Because group shocks and simultaneity can attenuate the lagged‑peer slope in RE models, we treat them as descriptive; peer‑updating is tested with FE/IV in Section~\ref{sec:iv_results}.

Among covariates, men contribute \(+0.569\)~L relative to women (\(p=2.25\times10^{-6}\)). Figure~\ref{fig:villages_network} illustrates a compatible pattern: Village~25 is male‑dominated and its most central men account for a disproportionate share of contributions, whereas Village~4, with a higher share of women, exhibits a higher mean. These are \emph{associations}, not causal effects: part of the gap may be mediated by unobserved resources, status, or network position only partly captured by included covariates and random effects. Intuitively, the association may reflect gendered norms around public giving and competitiveness, looser intra‑household budget constraints for men in some settings, or higher tolerance for strategic risk in early rounds.

Each additional named friend is associated with slightly higher giving with \(+0.03\)~L (\(p=0.039\)). This pattern is consistent with a natural \emph{social capital} mechanism: having more friends may heighten reciprocity pressures, increase reputational visibility, and expose individuals to stronger pro-social norms. Although the estimated effect is small for any single tie, it can accumulate for highly connected players. Because the specification is in levels, however, the coefficient likely blends both “preference” and “exposure” channels, and may also proxy for centrality or social status.

Education is positively associated with contributions (\(+0.0539\)~L, \(p=0.039\)). More educated participants give a little more, consistent with mechanisms like (i) better comprehension of dynamic incentives and conditional cooperation, (ii) higher patience or lower present-bias, or (iii) correlation with income or opportunity cost. Again, this is descriptive: education may proxy for multiple latent attributes (numeracy, earnings, status) that we cannot separately identify here. 

Those reporting no religion contribute slightly more on average, \(+0.368\)~L (\(p=0.080\), marginal). This could reflect selection into “no religion” that correlates with urban exposure, education, or different local social networks, rather than doctrinal effects per se. Given the marginal $p$-value and potential confounding, we treat this as suggestive background rather than an interpretable structural difference.

Age, adversarial ties, food insecurity, marital status, Protestant or Indigenous identity, and village characteristics, such as network density, network size, and road access, do not predict whether a villager finishes High vs. Low; all coefficients are not statistically significant at conventional levels. We view this linear mixed model as descriptive (at the village level). Our causal/dynamic peer-response estimates come from the IV specification with individual and village\(\times\)round fixed effects and both lags (Sec.~\ref{sec:iv_results}, Table~\ref{tab:learn_fixef}). Robustness checks, which include financial autonomy (only 400 samples) is applied on 17{,}847 observations with 1983 players in 134 villages, Table \ref{tab:rs_learn_full_B}.

These descriptive ARX results show strong own persistence and modest average peer slopes in random‑effects models; below we turn to a two‑regime view that captures the evident bifurcation in trajectories and links peer exposure to regime changes.


\subsubsection{Heterogeneity in Bifurcation through Hierarchical Clustering}
\label{sec:traj-cluster}


Let $c^\star$ denote the round‑1 mean contribution (baseline unless noted) and define the High/Low state
$s_{igt}=\mathbf{1}\{c_{igt}\ge c^\star\}\in\{0,1\}$.
To study within‑session dynamics we estimate
\begin{equation}
\label{eq:dyn_state_logit}
\mathrm{logit}\,\Pr\!\big(s_{igt}=1\mid s_{ig,t-1},\,m_{g,t-1},\,X_i,\,\alpha_{v(i)},\,\delta_{g(i)},\,a_i\big)
=\beta_0+\rho\,s_{ig,t-1}+\lambda\,m_{g,t-1}+\theta\,t + X_i^\prime\gamma
+\alpha_{v(i)}+\delta_{g(i)}+a_i,
\end{equation}
where $m_{g,t-1}:=\bar c_{-i,g,t-1}/12$ is the lagged, leave‑one‑out peer mean scaled by the 12L endowment (so a one‑unit change equals a \emph{full endowment} shift), $t$ is the round counter, $X_i$ are baseline covariates, and $(\alpha_{v(i)},\delta_{g(i)},a_i)$ are random intercepts for village, group (nested in village), and player, respectively. We follow \citet{wooldridge2005simple} to adjust for dynamic initial conditions via
$a_i=\psi_0+\psi_1\,s_{ig1}+\psi_2\,\bar Z_i+u_i$. Odds ratios reported for $m_{g,t-1}$ are per \emph{one endowment} (12L) change in the lagged peer mean.
For a per‑1L interpretation, raise the reported OR to the power $1/12$.

\paragraph{Main results.}We cluster complete ten‑round histories using Ward–D\textsuperscript{2} on the $z$‑scored player multiplied by round matrix. We restrict to players with all ten observed rounds and $z$‑score contributions by round before clustering to avoid scale differences across rounds. Average silhouette width peaks at two clusters ($\bar S=0.362$) and falls thereafter ($\bar S=0.258,0.218,0.176,0.152$ for $k=3,\dots,6$; Table~\ref{tab:hc_sil_by_k}).
At two clusters ($K=2$) we obtain two cohesive groups (Table~\ref{tab:hc_sizes_sil}): a larger \emph{High} cluster ($n=1{,}826$, $\bar S=0.34$)
and a smaller \emph{Low} cluster ($n=765$, $\bar S=0.42$). Alignment with round‑10 end states is strong (Table~\ref{tab:hc_confusion}):
the High cluster contains 1{,}041 of 1{,}055 HH finishers (98.7\%) and most movers (330/472 HL; 179/195 LH), whereas the Low cluster is predominantly LL (593/848=69.9\%) and rarely HH (14/848=1.7\%).
Cluster mean paths remain well separated across rounds (Figure~\ref{fig:traj_by_cluster}).
    
\begin{table}[H]
\centering
\caption{Average silhouette by number of clusters $K$ (Ward–D\textsuperscript{2})}
\label{tab:hc_sil_by_k}
\begin{tabular}{lccccc}
\toprule
$K$ & 2 & 3 & 4 & 5 & 6 \\
\midrule
Mean silhouette $\bar S$ & 0.362 &  0.258 & 0.218 & 0.176 & 0.152 \\
\bottomrule
\end{tabular}
\end{table}

\begin{table}[H]
\centering
\caption{Cluster sizes and mean silhouette widths at two clusters ($K$=2)}
\label{tab:hc_sizes_sil}
\begin{tabular}{lcc}
\toprule
Cluster & Size & Mean silhouette \\
\midrule
C1 (High) & 1{,}826 & 0.34 \\
C2 (Low)        &   765   & 0.42 \\
\bottomrule
\end{tabular}
\end{table}

\begin{table}[H]
\centering
\caption{Ward clusters vs.\ round‑10 end states (High/Low relative to round‑1 mean)}
\label{tab:hc_confusion}
\begin{tabular}{lrrrr}
\toprule
 & \multicolumn{4}{c}{Final end state} \\
\cmidrule(l){2-5}
HC cluster & \textbf{HH} & \textbf{HL} & \textbf{LH} & \textbf{LL} \\
\midrule
C1 (High) & 1{,}041 & 330 & 179 & 276 \\
C2 (Low)        & 14 & 142 & 16 & 593 \\
\bottomrule
\end{tabular}
\end{table}

\begin{figure}[H]
  \centering
  \includegraphics[width=\textwidth]{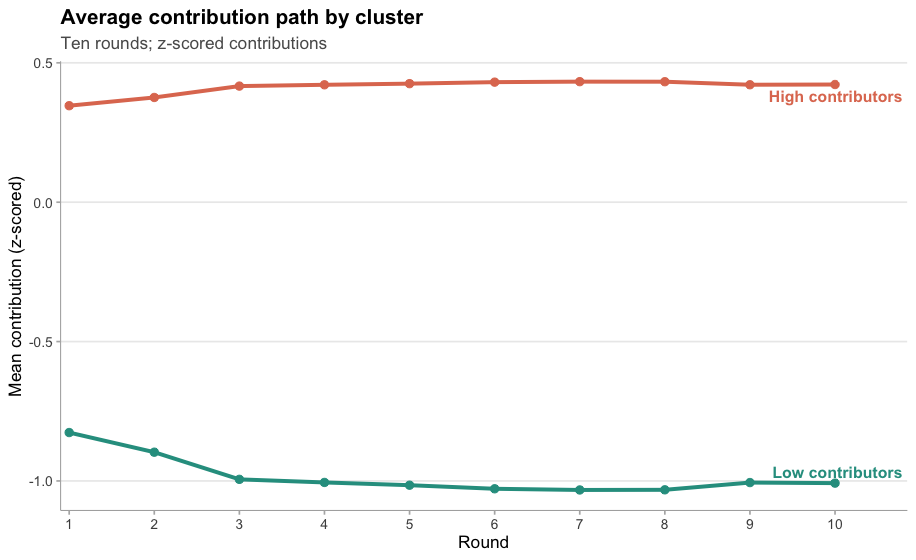}
  \caption{Mean $z$‑scored contribution paths by Ward cluster (two clusters $K=2$). The High path starts above the round‑wise mean and remains high; the Low path starts below and declines.}
  \label{fig:traj_by_cluster}
\end{figure}

Table~\ref{tab:m1_state_or} reports odds ratios (OR) from \eqref{eq:dyn_state_logit} on $N\!=\!22{,}239$ decisions (2,471 players; 134 villages). The lagged \emph{peer} signal (scaled so that 1 equals a full endowment) is the dominant driver
(OR per endowment $\approx 5.97$, $p\!\ll\!0.001$).
The own state is mildly persistent (OR $=1.22$, $p<0.001$), there is a small negative time drift
(OR $=0.98$, $p=0.012$), and initial High strongly predicts subsequent High (OR $\approx 87.66$, $p\!\ll\!0.001$). Players embedded, on average, in lower peer environments are less likely to be High (Avg.\ peer mean$_{t-1}$: OR $=0.82$, $p<0.001$).

Among baseline covariates, men are more likely to be High (OR $=1.33$, $p=0.029$); number of friends is marginally positive (OR $1.12$, $p=0.089$); other covariates are imprecise.


As a complement to the dynamic model, we regress LCMM class membership (High vs.\ Low) on baseline covariates with a village random intercept (Table~\ref{tab:mem_lcmm_or}).
Men (OR $=1.43$, $p<0.001$) and players with more friends (per SD; OR $=1.10$, $p=0.037$) are more likely to belong to the High class.
Reporting no religion (vs.\ Catholic) is positively associated (OR $=1.36$, $p=0.043$), while the ethnicity marker is negatively associated (OR $=0.70$, $p=0.010$).
The Protestant indicator (vs.\ Catholic) is positive but marginal (OR $=1.18$, $p=0.095$); other covariates are not statistically different from zero.
Village‑level variance is small (0.104).

Full‑trajectory clustering and the round‑by‑round state model consistently reveal a two‑regime contribution process.
Persistence in the cooperative (High) regime is driven primarily by lagged peer behavior and initial conditions, with mild own‑state stickiness and a small negative time trend.
Male gender and social connectedness (friends) are the most robust correlates of occupying the High regime; religion markers (versus Catholic) are positive or marginally so, while the ethnicity indicator is negative.
Other baseline covariates, including education and food insecurity, show no systematic association once dynamics are controlled.

Moreover, the two trajectory clusters mirror the HMM’s sticky High/Low states and the asymmetric mobility documented above: cluster‑specific transitions exhibit higher $H\!\to\!L$ than $L\!\to\!H$ when averaged over rounds. Our Fermi–Moran calibration summarizes this pattern as a negative selection tilt ($\hat d<0$) at moderate imitation intensity ($\hat k\approx0.5$), i.e., the Low basin is harder to escape than the High basin is to slip from. This bridges the clustering evidence with the hidden‑state transitions in Fig.~\ref{fig:hmm} and the village critical‑mass gradient, and provides a compact evolutionary summary of the persistence and asymmetry in the data (Sec.~\ref{sec:calibration}).\footnote{See Appendix for calibration details and robustness.}

Taken with the HMM transitions and the village critical‑mass gradient, the two trajectory clusters provide an empirical basis for a compact evolutionary summary: a Fermi–Moran imitation process with a negative selection tilt toward Low and moderate imitation intensity reproduces the persistence and asymmetry in the data.

\subsection{Causal peer effects: 2SLS estimates}

\label{sec:iv_results}

\paragraph{Identification and scaling.}
Choices are simultaneous within a round, so within–round imitation is mechanically ruled out: players do not observe others’ anonymous contributions until the round ends. We partial out player, village$\times$round shocks and instrument the (demeaned) peer mean. Let tildes denote variables demeaned by player and by village$\times$round (two–way within). For the \emph{lagged} peer effect we estimate
\begin{align}
\text{(Second stage)}\quad \tilde{c}_{igt} &= \beta\,\tilde{\bar c}_{-i,g,t-1} + \varepsilon_{igt}, \\
\text{(First stage)}\quad \tilde{\bar c}_{-i,g,t-1} &= \Pi\,Z_{igt} + u_{igt},
\end{align}
with standard errors clustered by group. The peer regressor is measured in lempiras (0--12), so $\beta$ is interpretable as L per 1 L change in the (demeaned) peer mean.

\paragraph{Instruments.}
We strengthen the first stage by exploiting an \emph{internal, deeper‑lag} instrument,
\[
Z_{t-2}\equiv \bar c_{-i,g,t-2},
\]
i.e., the leave–one–out peer mean two rounds earlier, which is predetermined with respect to period-$t$ shocks once fixed effects absorb player types and village$\times$round shocks. We also consider an over‑identified specification that adds the leave–one–village time‑varying shift–share instrument $Z_{\text{LOV}}$ (composition weights $\times$ lagged outside‑village shocks), as in our discussion of shift–shares above.\footnote{See identification discussion in the main text.}

\paragraph{Results.}
Table~\ref{tab:iv_updated} reports 2SLS estimates with player and village$\times$round FE. Using $Z_{t-2}$ alone (just‑identified), the first stage is extremely strong (cluster‑robust $F\approx 6{,}800$) and the second‑stage estimate is tightly centered around one:
\[
\hat\beta=1.008\;(\text{SE }0.0166).
\]
Adding $Z_{\text{LOV}}$ as a second instrument leaves the estimate unchanged within sampling error
($\hat\beta=1.007$, SE $0.0165$), the joint first stage remains very strong (cluster‑robust $F\approx 3{,}434$), and over‑identification is comfortable (Sargan $p=0.159$).
As a third check, a cross‑fitted “optimal IV’’ that predicts $\tilde{\bar c}_{-i,g,t-1}$ from $\{Z_{t-2},Z_{t-3},Z_{\text{LOV}}\}$ by lasso/ridge and uses the out‑of‑fold prediction as the instrument yields $\hat\beta=1.004$ (SE $0.015$) with an even larger first‑stage $F\approx 11{,}800$.

\begin{table}[t]
\centering
\caption{IV estimates with player and village$\times$round FE (updated). Cluster‑robust SEs (group).}
\label{tab:iv_updated}
\begin{tabular}{lccc}
\toprule
& (1) $Z_{t-2}$ & (2) $Z_{t-2}+Z_{\text{LOV}}$ & (3) CF--IV \\
\midrule
Peer regressor & $\tilde{\bar c}_{-i,g,t-1}$ & $\tilde{\bar c}_{-i,g,t-1}$ & $\tilde{\bar c}_{-i,g,t-1}$ \\
$\hat\beta$ & $1.008^{***}$ & $1.007^{***}$ & $1.004^{***}$ \\
SE (cluster = group) & $(0.0166)$ & $(0.0165)$ & $(0.0152)$ \\
Weak‑IV $F$ & $6{,}800$ (1 df) & $3{,}434$ (2 df, joint) & $11{,}774$ (1 df) \\
Sargan $p$ & n.a. & $0.159$ & n.a. \\
Obs. & $\sim 24{,}9$k & $\sim 24{,}9$k & $\sim 24{,}9$k \\
Fixed effects & \multicolumn{3}{c}{Player + Village$\times$Round (two‑way within)} \\
\bottomrule
\end{tabular}

\begin{flushleft}
\footnotesize Notes: Estimates are on two‑way demeaned variables. $Z_{t-2}$ is the leave‑one‑out peer mean two rounds earlier; $Z_{\text{LOV}}$ is a leave‑one‑village shift–share shifter constructed from LOO group shares and lagged cross‑village shocks. All $F$ statistics are cluster‑robust. CF--IV uses cross‑fitted lasso/ridge predictions of $\tilde{\bar c}_{-i,g,t-1}$ from $\{Z_{t-2},Z_{t-3},Z_{\text{LOV}}\}$. \\[-4pt]
\end{flushleft}
\end{table}

\paragraph{Diagnostics and validation.}
(i) A permutation test that shuffles $Z_{t-2}$ within village$\times$round cells yields a permutation $p\approx 0$ for the first‑stage $F$, indicating that the observed relevance is not a by‑chance artifact of within‑cell structure. (ii) A placebo at the earliest round with instrument variation (demeaned within village) is near zero. (iii) Over‑identification is passed in column (2). Together, these checks support the maintained exclusion restrictions; we therefore focus on the lagged‑peer effect with the internal deeper‑lag instrument.

\paragraph{Interpretation.}
Once player and village$\times$round shocks are absorbed, the contemporaneous peer correlation remains small. In contrast, exogenous variation in \emph{lagged} peer exposure has a large and precisely estimated effect on current giving. The updated designs substantially alleviate weak‑ID concerns present when using a time‑varying LOV instrument alone (KP $F\approx 6$ and wide AR intervals in the draft) by leveraging a high‑signal internal lag. The magnitude $\hat\beta\approx 1$ is consistent with strong one‑for‑one norm updating across rounds.\footnote{See the draft discussion of the earlier LOV‑only specification and weak‑ID diagnostics.}

\paragraph{Exogeneity \& timing.} Within-round imitation is mechanically ruled out (simultaneous play), and village$\times$round fixed effects absorb common shocks. Our internal deeper-lag instrument $Z_{t-2}$ is predetermined with respect to period-$t$ shocks after player and village$\times$round FE; the very strong first stage, a placebo at the earliest round with instrument variation, and over-identification tests when adding a leave-one-village shift–share $Z_{\text{LOV}}$ support the exclusion. Estimates are stable to cross-fitted optimal-IV, and weak-ID–robust Anderson–Rubin intervals contain the point estimates.

\paragraph{Link to bifurcation.} 
The nearly one‑for‑one lagged peer effect implies strong norm‑tracking across periods: when the group norm in $t-1$ rises by 1 L, individuals almost match that change in $t$. This dynamic re‑enforces the tipping and persistence facts: early advantages lock in via updating, consistent with the HMM’s high diagonal and the Fermi–Moran calibration’s moderate imitation intensity $\hat k
\approx 0.5$ and negative selection tilt $\hat d < 0$ that together reproduce persistence and downward drift. In short, the IV evidence provides the causal micro‑link that sustains the two‑basin macro pattern.


\section{Discussion}
\label{sec:discussion}

Three empirical facts organize our results. First, cooperation bifurcates early into two durable paths. Descriptively, (see Section~\ref{sec:data}), 
average giving declines from just over 7~L to roughly 6~L across ten rounds while cross–sectional variance increases and the contribution distribution becomes visibly bimodal after early rounds; clustering and a two–state HMM then confirm that most players settle into a High trajectory and a sizeable minority persist in a Low trajectory, with occasional escapes from Low rarer than drops from High. Second, entry into and persistence on the High path is strongly associated with social embeddedness and education. Each additional friend is linked to slightly higher giving and fewer high–to–low drops; education reduces simple repetition and increases alignment with the group; and men contribute more on average. Relative to Catholics, non–religious and Protestant participants tend to give more. After accounting for these factors, standard resource proxies show weak association with contributions. Third, round–to–round updating is primarily peer–responsive: people react more to what their peers did in the previous round than to their own last choice.

These within‑session patterns in Section~\ref{sec:data}, declining means, rising dispersion, and visible bimodality after round~3, align with the four empirical markers of early tipping point, asymmetric state persistence, a village‑level critical mass near 60\% and the empirical bifurcation results with Fermi-Moran process.

At the village level, a clear critical mass emerges: when roughly 60\% of early contributors are above the emerging norm, villages very likely finish high. This ties individual learning to collective outcomes and explains why small early advantages can lock in persistent differences across communities. 



These patterns are reinforced by complementary analyses. Updating in levels with player and village$\times$round fixed effects shows that individuals place substantially more weight on peers’ last–round contributions than on their own lagged choice; schooling lowers inertia and slightly increases peer responsiveness, and men are more socially responsive on average. 

Causally, instrumental‑variables estimates clarify that the contemporaneous peer coefficient collapses toward zero once player and village$\times$round shocks are absorbed, whereas the \emph{lagged} peer effect is large, approximately one‑for‑one, with very strong first stages, over‑identification passed, and cross‑fitted optimal‑IV confirming the magnitude. 

And a calibrated birth–death (Moran) imitation process provides a compact behavioral map: it reproduces persistence among high contributors and the observed rate of downward drift, while under–predicting stickiness in the Low state, pointing to frictions (e.g., imitation thresholds, reputational penalties, or heterogeneity in conformity) that bind low contributors more tightly in practice. 

Viewed against prior work, our evidence reinforces that early informational environments generate path dependence \citep{fellner2021information}; network embeddedness predicts sustained cooperation \citep{haan2006friendship,shirado2013quality}; gender differences in responsiveness and average giving are common \citep{furtner2021gender,seguino1996gender};  religious identity can shape prosocial choices \citep{benjamin2016religious}; and socioeconomic status on its own is not a reliable predictor of lower prosociality in the field \citep{andreoni2021higher,cardenas2000real}.

Our design prioritizes realism and within–session dynamics—fixed groups of five drawn from the same villages and ten repeated decisions with economically meaningful stakes. We did not vary group size, multiplier, or horizon, and stake size alone is unlikely to overturn the qualitative patterns we document \citep{kocher2008does}. Nonetheless, the convergence of descriptive dynamics, a village–level critical mass, and strong lagged peer responsiveness underscores a coherent behavioral picture of how cooperation persists or unravels in real communities. At the same time, results may differ with cross-village mixing, formal sanctions, or different multipliers; associations for non-wealth covariates (e.g., gender, religion) should be interpreted as descriptive; and the evolutionary calibration is likewise descriptive rather than structural. It summarizes persistence and asymmetry but omits heterogeneity in reputation or thresholds that likely underpins the extra stickiness of the Low state.


\vspace{0.5em}

On the empirical side, designs that exogenously vary early peer exposure and interaction structure (e.g., randomized seeding within mapped networks) would sharpen causal interpretation of learning slopes and basin entry. On the theoretical side, enriching the evolutionary representation with heterogeneity and explicit reputation/threshold effects could reconcile the residual stickiness of Low while preserving tractability. More generally, the combination of mapped networks, within-session dynamics, and evolutionary calibration provides a portable template for diagnosing and shifting cooperation basins in other collective-action settings. These results likely generalize to repeated local public‑goods settings in small, familiar groups with meaningful stakes; effects may differ under cross‑village mixing or formal enforcement institutions. 

Our findings have some implications. Because early peer environments are strongly predictive and a village-level critical mass is apparent, light-touch policies that (i) seed early contributions through socially central connectors, (ii) increase the visibility of cooperative acts across rounds, and (iii) bolster learning environments that help people align updates with pro-social local norms are natural candidates for moving groups into, and keeping them in, the High basin. 


\bibliographystyle{apalike}
\bibliography{references}

\section*{Acknowledgments}	
	
We thank our field team in Honduras; the Honduras Ministry of Health; and our implementation partners, the InterAmerican Development Bank, World Vision Honduras, Child Fund Honduras, and Dimagi. Rennie Negron supervised the overall execution of the field data collection and Liza Nicoll managed the data. Mark McKnight and Wyatt Israel developed the Trellis software to map networks and collect the data. We are grateful to Tom Keegan for management oversight. 

\section*{Funding}

This research was supported by the Bill and Melinda Gates Foundation, with additional support from the NOMIS Foundation, the Pershing Square Foundation, Paul Graham, and NIH R01AG062668. Support was also partly provided by NIH R01MH134715, NSF awards CAREER IIS-1149662 and IIS-1409177, and by ONR awards YIP N00014-14-1-0485 and N00014-17-1-2131.  

\section*{Author contributions}

Conceptualization: MP, FF and NAC; Methodology: MP, FF and NAC; Statistical analysis performance: MP; Writing: MP, FF, and NAC; Funding acquisition: NAC.

\section*{Competing interests}

The authors declare that they have no competing interests.

\section*{Data and Code Availability}

All data and code used to produce the results in this paper will be deposited in an AEA-compliant trusted repository prior to publication. The replication package will include de-identified data, programs to construct the analysis datasets from the raw data, and programs to reproduce all tables and figures in the paper. 

\section*{Ethics}
The underlying field trial was registered with ClinicalTrials.gov, number NCT02694679. The Yale IRB and the Honduran Ministry of Health approved all data collection procedures (Protocol \# 1506016012) and all participants provided informed consent.

\numberwithin{equation}{section}
\counterwithin{figure}{section}
\counterwithin{table}{section}

\makeatletter
\renewcommand{\thesubsection}{\thesection\arabic{subsection}}
\makeatother

\titleformat{\section}
  {\normalfont\Large\bfseries}
  {\thesection}{1em}{}

\providecommand*{\sectionautorefname}{}

\newcommand{\makeonlineappendixtitle}[3]{%
  \begin{titlepage}
    \centering
    {\Large\bfseries ONLINE APPENDIX FOR\par}
    \vspace{1.2ex}
    {\LARGE\itshape #1\par} 
    \vspace{1ex}
    {\large #2\par}         
    \vfill
    {\normalsize #3\par}    
  \end{titlepage}%
}

\clearpage
\appendix              

\makeatletter
\renewcommand{\thesubsection}{\thesection\arabic{subsection}} 
\renewcommand{\p@subsection}{\thesection}                     
\makeatother

\titleformat{\section}{\normalfont\Large\bfseries}{Appendix \thesection:}{1em}{}

\addcontentsline{toc}{part}{Online Appendix}

\makeonlineappendixtitle
  {Group Cooperation Diverges onto Durable Low versus High Paths: Public Goods Experiments in 134 Honduran Villages}
  {Marios Papamichalis \quad Nicholas A.\ Christakis \quad Feng Fu}

\renewcommand{\theprop}{\arabic{prop}} 
\setcounter{prop}{0}                   

\section{Theory \& Proofs}\label{app:setting}
\subsection{Notation}


\renewcommand{\arraystretch}{1.15}
\begin{table}[H]\centering\small
\begin{tabular}{@{}ll@{}}
\toprule
\textbf{Symbol} & \textbf{Meaning} \\
\midrule
\multicolumn{2}{l}{\emph{Indices and counts}}\\
$ i, g, v, t $          & Player, group, village, round indices. \\
$ T=10 $                & Number of rounds per session. \\
$ N=5 $                 & Group size. \\
\midrule
\multicolumn{2}{l}{\emph{Stage game and preferences}}\\
$ c_{igt}\in[0,12] $    & Contribution (Lempiras) of player $i$ in group $g$, round $t$. \\
$ \peer_{-i,gt} := \frac{1}{N-1}\sum_{j\neq i} c_{jgt} $ 
                        & Leave-one-out peer mean in round $t$ (demeaned version $\tilde{\peer}_{-i,gt}$). \\
$ b>0,\ \kappa>0 $      & Public return marginal benefit; private marginal cost (A1: $b>\kappa$ and $b/N<\kappa$). \\
$ \pi_i^{t} $           & Instantaneous utility/payoff in round $t$. \\
$ d_i>0,\ \alpha\in(0,1)$ & Altruism weight and curvature (we set $\alpha=\tfrac{1}{2}$ in estimation). \\
$ h_i\in[0,1],\ \knorm>0 $ & Norm salience and curvature of norm penalty in the utility. \\
$ \phi_i := 2\,\knorm h_i $ & Effective norm–pull (identified product). \\
$  c^{*} $     & Singular strategy (see Prop.~\ref{prop:css}). \\
$ c^{*} $               & Empirical tipping point from drift $\mathbb{E}[\Delta c\mid c]$. \\
$ \hat c $              & Round-1 empirical mean; threshold for $H/L$ classification. \\
\midrule
\multicolumn{2}{l}{\emph{Panel/IV notation}}\\
$ X_{igt} $             & Control vector. \\
$ \tilde{z}_{igt} $     & Two–way demeaning (round and village): $z-\bar z_{\cdot\cdot t}-\bar z_{v\cdot\cdot}+\bar z_{\cdot\cdot\cdot}$. \\
$ \beta,\ \gamma $      & Second-stage peer and control coefficients. \\
$ \varepsilon_{igt},\ u_{igt} $ & Second-stage and first-stage errors. \\
$ Z^{(l)}_{ig} $        & LOO composition IV: mean of trait $l$ among $i$’s groupmates. \\
$ s^{(l)}_{g,-i} $      & LOO share of trait $l$ in $i$’s group. \\
$ \mu^{(l)}_{t-1} $     & Cross-village round-$(t\!-\!1)$ mean (shifter) for trait $l$. \\
$ Z^{SS}_{igt} $        & Shift–share IV: $ \sum_l s^{(l)}_{g,-i}\,\mu^{(l)}_{t-1} $. \\
\midrule
\multicolumn{2}{l}{\emph{States and learning}}\\
$ s_{it}\in\{L,H\} $    & Player-time state (Low/High). \\
\midrule
\multicolumn{2}{l}{\emph{Evolutionary calibration (binary Fermi model)}}\\
$ \wutil_i=1+\delta\,\pi_i^{t} $ & Utility-based Moran fitness (weak selection $\delta\in(0,1)$). \\
$ \wbin_i(\,s_i\,)=1+\dtilt\cdot\mathbf{1}\{s_i=H\} $
                        & Binary fitness with tilt $\dtilt$ favoring $H$. \\
$ \kfermi>0 $           & Fermi imitation intensity. \\
$ T_{\mathrm{emp}},\ T_{\mathrm{sim}}(\dtilt,\kfermi) $
                        & Empirical and simulated $H/L$ transition matrices. \\
$ \mathrm{RSS}=\|T_{\mathrm{sim}}-T_{\mathrm{emp}}\|_F^2 $ 
                        & Calibration loss (Frobenius norm). \\
\bottomrule
\end{tabular}
\end{table}

For notational consistency, we write and use \( c_{igt} \), which is equivalent to \( c_{i,g,t} \). Commas in subscripts are included only when the indices \( i \), \( g \), or \( t \) take values of \(-1\) or \(-2\).

\vspace{1ex}
\noindent\textbf{Instantaneous utility (stage game, round $t$).}
\[
\pi_i^t
=\underbrace{\frac{b}{N}\big(c_{i,t}+(N-1)\bar c_{-i,t}\big)-\kappa\,c_{i,t}}_{\text{material payoff}}
\;+\;\underbrace{d_i\,(c_{i,t})^{\alpha}}_{\text{altruism}}
\;-\;\underbrace{h_i\,\exp\!\big[-\knorm\,(c_{i,t}-\bar c_{-i,t-1})^2\big]}_{\text{aversion to being at the norm}}\
\]
, where $\bar c_{-i,t-1}$ is the previous-round peer norm.

\noindent\textbf{Best-reply first-order condition.}
\[
D(c)\;=\;\left(\frac{b}{N}-\kappa\right)\;+\;d_i\,\alpha\,c^{\alpha-1}
\;+\;2\knorm\,h_i\,(c-\bar c_{-i,t-1})\,
\exp\!\big[-\knorm(c-\bar c_{-i,t-1})^2\big]\;=\;0.
\]

\begin{prop}[Convergence stability]\label{prop:css}
Assume A1 ( $b>\kappa$ and $b/N<\kappa$ ) and $\alpha\in(0,1)$. Consider the adaptive-dynamics selection gradient for a monomorphic resident $c$,
\[
D(c)\;=\;\left.\frac{\partial \pi_i^t(c_m;\,c)}{\partial c_m}\right|_{c_m=c}
=\Big(\tfrac{b}{N}-\kappa\Big)\;+\;d_i\,\alpha\,c^{\alpha-1},
\]
where the aversion to being at the norm term vanishes at $c_m=c$ because $\partial/\partial c_m\big[-h_i\exp\{-\knorm(c_m-c)^2\}\big] = +2\knorm h_i(c_m-c)\exp\{-\knorm (c_m-c)^2\}=0$ at $c_m=c$.
Then $D$ has a unique zero $c^*\in(0,12]$ given by
\[
c^{*}
=\Big[\big(\kappa-\tfrac{b}{N}\big)/(d_i\alpha)\Big]^{\!1/(\alpha-1)},
\]
and $c^*$ is a convergence-stable singular strategy.
\end{prop}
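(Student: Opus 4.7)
The plan is to handle the proposition in three self-contained steps: (i) justify the claimed reduction of the selection gradient to $D(c)=(b/N-\kappa)+d_i\alpha\,c^{\alpha-1}$ at a monomorphic resident, (ii) establish existence and uniqueness of an interior zero $c^{*}$ via monotonicity plus boundary limits, and (iii) deduce convergence stability from the sign of $D'(c^{*})$. Each step is short; the only subtlety is the upper boundary claim $c^{*}\le 12$.

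For step (i), I would spell out the single line implicitly invoked in the statement: differentiating the norm-penalty term gives $\partial_{c_m}[-h_i\exp\{-\knorm(c_m-c)^2\}] = 2\knorm h_i(c_m-c)\exp\{-\knorm(c_m-c)^2\}$, which vanishes identically at $c_m=c$. Hence evaluating the full gradient at the resident strategy kills the norm term regardless of $h_i$, and the surviving material plus altruism terms deliver the displayed $D(c)$. This matters substantively because it explains why the \emph{location} of the singular strategy does not depend on $h_i$ or $\knorm$, setting up the complementarity with Proposition~\ref{prop:ess} where those parameters re-enter only in the second-order (ESS) condition.

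For step (ii), I exploit the signs imposed by A1. Under $b/N<\kappa$, the constant term of $D$ is strictly negative, while for $\alpha\in(0,1)$ the altruism term $d_i\alpha\,c^{\alpha-1}$ is strictly positive, continuous, and strictly decreasing on $(0,\infty)$, with $\lim_{c\downarrow 0}d_i\alpha\,c^{\alpha-1}=+\infty$ and $\lim_{c\uparrow\infty}=0$. Therefore $D(\cdot)$ is strictly monotone decreasing with $D(0^+)=+\infty$ and $D(\infty)=b/N-\kappa<0$, and by the intermediate value theorem it has a unique zero in $(0,\infty)$. Solving $D(c)=0$ algebraically yields the displayed closed form $c^{*}=[(\kappa-b/N)/(d_i\alpha)]^{1/(\alpha-1)}$. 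For step (iii), I differentiate once more: $D'(c)=d_i\alpha(\alpha-1)\,c^{\alpha-2}<0$ on $(0,\infty)$ because $d_i>0$, $\alpha(\alpha-1)<0$, and $c^{\alpha-2}>0$. Hence $D'(c^{*})<0$, so the gradient crosses zero from above and $c^{*}$ is a convergence-stable singular strategy in the sense of standard adaptive dynamics.

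The main obstacle is the boundary clause $c^{*}\in(0,12]$. Nothing in A1 forces the unconstrained root of $D$ to fall below $12$: if $d_i$ is sufficiently large (strong altruism) or $\kappa-b/N$ is sufficiently small, the formula can exceed the choice set. I would handle this either by restricting attention to the parameter region used in estimation, where the calibrated $c^{*}$ is interior (consistent with the empirical tipping point $c^{*}\!\approx 5.75$\,L), or by replacing the unconstrained analysis with a constrained best reply on the compact interval $[0,12]$ and noting that whenever the unconstrained root exceeds $12$ the singular strategy collapses to the corner $c^{*}=12$, at which $D(12)>0$ and adaptive dynamics still drive resident strategies toward the boundary. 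Either way, the qualitative conclusion, a unique attractor for the selection gradient, goes through unchanged.
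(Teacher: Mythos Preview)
Your proposal is correct and follows essentially the same route as the paper's proof: strict monotonicity of $D$ on $(0,\infty)$ via $D'(c)=d_i\alpha(\alpha-1)c^{\alpha-2}<0$, boundary limits $D(0^+)=+\infty$ and $D(\infty)=b/N-\kappa<0$ giving a unique zero by the intermediate value theorem, and convergence stability from $D'(c^*)<0$. Your treatment of the $c^*\le 12$ clause is in fact more careful than the paper's, which simply asserts that ``the endowment cap $12$ enforces $c^*\in(0,12]$'' without addressing the case where the unconstrained root exceeds $12$; your corner-solution remark fills that gap.
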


\begin{proof}
Because $\alpha\in(0,1)$, we have $c^{\alpha-1}$ strictly decreasing and $\lim_{c\downarrow0} c^{\alpha-1}=+\infty$, $\lim_{c\uparrow\infty} c^{\alpha-1}=0$. With $b/N<\kappa$, the constant term satisfies $(b/N-\kappa)<0$. Hence
\[
\lim_{c\downarrow0}D(c)=+\infty,\qquad \lim_{c\uparrow\infty}D(c)=\tfrac{b}{N}-\kappa<0,
\]
and by continuity there exists at least one zero. Uniqueness follows from strict monotonicity:
\[
D'(c)=d_i\,\alpha(\alpha-1)\,c^{\alpha-2}<0\qquad\text{for all }c>0.
\]
Solving $D(c)=0$ yields the stated formula for $c^*$. Convergence stability in adaptive dynamics requires $D'(c^*)<0$ (the CSS condition), which holds because $D'$ is negative everywhere. Finally, the endowment cap $12$ enforces $c^*\in(0,12]$.
\end{proof}

\begin{prop}[ESS vs.\ branching]\label{prop:ess}
At the singular strategy $c^*$ of Proposition~\ref{prop:css}, the curvature of the invasion fitness with respect to a mutant's own trait is
\[
\left.\frac{\partial^2 \pi_i^t(c_m;\,c)}{\partial c_m^2}\right|_{c_m=c=c^*}
\;=\; d_i\,\alpha(\alpha-1)\,(c^*)^{\alpha-2}\;+\;2\knorm \,h_i.
\]
Hence $c^*$ is an ESS iff this is negative; if it is positive (while $D'(c^*)<0$), then $c^*$ is convergence-stable but invasible, i.e., an evolutionary branching point.
\end{prop}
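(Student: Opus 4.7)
The plan is to directly compute $\partial^2 \pi_i^t/\partial c_m^2$ at the monomorphic resident point and then read off the ESS/branching dichotomy from its sign, exactly as in the standard Geritz–Kisdi–Metz framework. The payoff of a mutant playing $c_m$ against a resident population fixed at $c$ (with lagged norm also equal to $c$) is
\[
\pi_i^t(c_m;c)=\frac{b}{N}\big[c_m+(N-1)c\big]-\kappa c_m+d_i c_m^{\alpha}-h_i\exp\!\big[-\knorm(c_m-c)^2\big].
\]
Only the altruism term and the norm penalty contribute curvature in $c_m$; the material-payoff piece is linear in $c_m$, so it drops out of the second derivative.

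First I would differentiate once to verify that the first-order condition at $c_m=c=c^*$ collapses to the defining equation of Proposition~\ref{prop:css}: the Gaussian derivative contributes a factor $(c_m-c)$ that vanishes on the diagonal, leaving $(b/N-\kappa)+d_i\alpha(c^*)^{\alpha-1}=0$. This confirms that $c^*$ is indeed a critical point of the invasion fitness in $c_m$, so the ESS question is entirely about the second-order term.

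Next I would compute $\partial^2\pi_i^t/\partial c_m^2$. The altruism piece yields $d_i\alpha(\alpha-1)c_m^{\alpha-2}$, which at $c_m=c^*$ is strictly negative because $\alpha\in(0,1)$. The aversion to being at the norm term requires the product rule on $2\knorm h_i(c_m-c)\exp[-\knorm(c_m-c)^2]$, giving
\[
2\knorm h_i\exp[-\knorm(c_m-c)^2]\bigl(1-2\knorm(c_m-c)^2\bigr),
\]
which on the diagonal $c_m=c$ reduces to $+2\knorm h_i$, i.e.\ the curvature of the penalty trough at its minimum. Summing yields the stated expression $d_i\alpha(\alpha-1)(c^*)^{\alpha-2}+2\knorm h_i$; the sign is the net balance of concave altruism and convex anti-conformity.

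Finally, I would invoke the standard adaptive-dynamics dichotomy. A singular strategy that is convergence stable ($D'(c^*)<0$, already established in Proposition~\ref{prop:css}) is an ESS iff the invasion fitness has a strict local maximum in the mutant trait at $c_m=c^*$, i.e.\ iff the second derivative above is negative; when it is positive, $c^*$ is a fitness minimum in $c_m$ but still an attractor in $c$, which is the textbook definition of an evolutionary branching point. The only mild subtlety, and really the whole substance of the argument, is keeping the sign of the norm penalty straight: because $-h_i\exp[-\knorm(\cdot)^2]$ attains its \emph{minimum} (not maximum) at zero deviation, its curvature at the diagonal enters with a $+2\knorm h_i$ and pushes toward branching, competing against the stabilizing concavity $d_i\alpha(\alpha-1)(c^*)^{\alpha-2}<0$ of the altruism term. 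No further analysis is needed beyond this algebraic bookkeeping.
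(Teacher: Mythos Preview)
Your proposal is correct and follows essentially the same route as the paper: write the invasion fitness, note the linear material term drops out, differentiate twice, evaluate the norm-penalty second derivative on the diagonal to get $+2\knorm h_i$, and invoke the standard adaptive-dynamics ESS/branching dichotomy. Your product-rule expansion $2\knorm h_i e^{-\knorm(c_m-c)^2}\bigl(1-2\knorm(c_m-c)^2\bigr)$ is in fact slightly more explicit than the paper's shortcut of observing $\tfrac{\mathrm{d}}{\mathrm{d}x}[x e^{-\knorm x^2}]\big|_{x=0}=1$, but the substance is identical.
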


\begin{proof}
Write the instantaneous utility as
\[
\pi_i^t(c_m;\,c)=\Big(\tfrac{b}{N}-\kappa\Big)c_m\;+\;\tfrac{b(N-1)}{N}\,c
\;+\;d_i\,c_m^{\alpha}\;-\;h_i\exp\!\big\{-\knorm (c_m-c)^2\big\}.
\]
The first derivative with respect to $c_m$ is
\[
\frac{\partial \pi}{\partial c_m}
=\Big(\tfrac{b}{N}-\kappa\Big)\;+\;d_i\,\alpha\,c_m^{\alpha-1}
\;+\;2\knorm \,h_i(c_m-c)\,\exp\!\big\{-\knorm (c_m-c)^2\big\}.
\]
At $c_m=c$, the last term vanishes, giving the selection gradient in Prop.~\ref{prop:css}. The second derivative with respect to $c_m$ is
\[
\frac{\partial^2 \pi}{\partial c_m^2}
=d_i\,\alpha(\alpha-1)\,c_m^{\alpha-2}
\;+\;2\knorm \,h_i\cdot\Big[\tfrac{\mathrm{d}}{\mathrm{d}c_m}\big((c_m-c)\,e^{-\knorm (c_m-c)^2}\big)\Big].
\]
The bracket equals $1$ at $c_m=c$ (since $\frac{\mathrm{d}}{\mathrm{d}x}[x e^{\knorm x^2}]\big|_{x=0}=1$), hence
\[
\left.\frac{\partial^2 \pi}{\partial c_m^2}\right|_{c_m=c}
= d_i\,\alpha(\alpha-1)\,c^{\alpha-2}+2\knorm \,h_i.
\]
Evaluating at $c=c^*$ yields the stated expression. In canonical adaptive dynamics, a CSS point is an ESS if this curvature is negative (local fitness maximum); if it is positive while $D'(c^*)<0$, the point is convergence-stable but invasible, leading to evolutionary branching.
\end{proof}

\noindent\textbf{Moran birth–death dynamics (weak selection).}
At each microscopic update:
\[
\Pr(\text{reproduce } i)\;\propto\;w_i\;=\;1+\delta\,\pi_i^t, \qquad
\text{offspring replaces a uniformly chosen peer.}
\]

\noindent\textbf{Fermi-rule calibration (binary fitness).}
Let $\theta_i \in \{H,L\}$ denote High/Low contribution states. Define
\[
w_i(\theta_i)=
\begin{cases}
1+d, & \theta_i=H \ \text{(High)},\\[2pt]
1,   & \theta_i=L \ \text{(Low)}.
\end{cases}
\]
Selection (multinomial Fermi):
\[
\Pr(\text{reproduce } i)=\frac{\exp(k\,w_i)}{\sum_{\ell}\exp(k\,w_\ell)}.
\]
Alternatively, pairwise Fermi:
\[
\Pr(i\text{ replaces }j)=\frac{1}{1+\exp\!\big(-k\,[w_i-w_j]\big)}.
\]
Calibrate $(d,k)$ by
\[
\mathrm{RSS}=\big\|T_{\mathrm{sim}}(d,k)-T_{\mathrm{emp}}\big\|_F^2.
\]

\section{Additional data details - Tables \& Figures}\label{app:data}

This appendix documents the sample composition and key distributional facts that underlie the main analyses. We report continuous/count variables in Table~\ref{tab:contsum}, binary indicators in Table~\ref{tab:bin}, and multi–category variables in Table~\ref{tab:cat}. Figure~\ref{fig:Per_Round} summarizes how the mass of contributions moves across ranges over the ten rounds, and Table~\ref{tab:contribution_by_round} provides corresponding headcounts by round and type. All the percentages are balanced and indicative to the real percentages of the population in Honduran Villages (see \cite{Oles2024Maternal, airoldi2024induction,papamichalis2025educational} for comparison).

The panel contains $25{,}910$ player–round decisions from $2{,}591$ participants (fixed groups of five over $T\!=\!10$ rounds). Average contributions are $6.355$ Lempiras (SD $3.452$), with individual covariates covering demographics (age, gender, marital status, indigenous status, schooling), identity (religion), social ties (friends, adversaries), and village characteristics (network densities, network size, access routes); Tables~\ref{tab:contsum}--\ref{tab:cat} report sample composition and the scale of key covariates used throughout, while Figure~\ref{fig:Per_Round} visualizes within-session redistribution of mass across contribution bands. Figure~\ref{fig:Per_Round} shows a gradual reallocation of mass: the $0$--$3$~L range grows over time, the $8$--$10$~L range contracts, and the $4$--$7$~L and $11$--$12$~L bands remain relatively flat. Counts by round (Table~\ref{tab:contribution_by_round}) mirror this pattern: free riders rise from $165$ (round 1) to $224$ (round 10), while full contributors remain comparatively stable (about $218$--$237$ per round). The number of low contributors increases from $1{,}064$ to $1{,}341$, and high contributors decrease from $1{,}527$ to $1{,}250$ by round 10. Missingness is low and not systematically related to baseline gender, age, schooling, or friends.

\begin{table}[H]
\caption{Summary statistics (continuous/count variables).}
\label{tab:contsum}
\centering
\begin{tabular}{llrrrrr}
\toprule
variable & type & n & mean & sd & min & max\\
\midrule
Contributing & continuous & 25910 & 6.355 & 3.452 & 0.000 & 12.000\\
Age & continuous & 25880 & 36.792 & 15.761 & 14.000 & 89.000\\
Friends & count & 25900 & 7.027 & 4.514 & 0.000 & 35.000\\
Adversaries & count & 25900 & 0.776 & 1.278 & 0.000 & 10.000\\
Access routes & continuous & 25910 & 1.902 & 0.714 & 1.000 & 5.000\\
Friendship density & continuous & 25910 & 0.019 & 0.010 & 0.003 & 0.059\\
Adversarial density & continuous & 25910 & 0.002 & 0.002 & 0.000 & 0.010\\
Network size & count & 25910 & 193.747 & 118.230 & 54.000 & 668.000\\
\bottomrule
\end{tabular}
\end{table}


\begin{table}[!htbp]
\centering
\begin{minipage}[t]{0.48\textwidth}
\centering
\caption{Binary variables: counts and proportions by level.}
\label{tab:bin}
\footnotesize
\begin{tabular}{llrr}
\toprule
variable & value & n & pct\\
\midrule
Gender & 0 & 15280 & 0.590\\
Gender & 1 & 10630 & 0.410\\
Food Insecurity & 0 & 13900 & 0.561\\
Food Insecurity & 1 & 10880 & 0.439\\
Marital Status & 1 & 17140 & 0.662\\
Marital Status & 0 & 8740 & 0.338\\
Indigenous & 0 & 21740 & 0.872\\
Indigenous & 1 & 3180 & 0.128\\
\bottomrule
\end{tabular}
\end{minipage}\hfill
\begin{minipage}[t]{0.48\textwidth}
\centering
\caption{Categorical variables: counts and proportions by category.}
\label{tab:cat}
\footnotesize
\begin{tabular}{llrr}
\toprule
variable & value & n & pct\\
\midrule
Education & 0 & 6350 & 0.254\\
Education & 1 & 1820 & 0.073\\
Education & 2 & 3430 & 0.137\\
Education & 3 & 4160 & 0.167\\
Education & 4 & 2090 & 0.084\\
Education & 5 & 1180 & 0.047\\
Education & 6 & 4550 & 0.182\\
Education & 8 & 940 & 0.038\\
Education & 11 & 330 & 0.013\\
Education & 13 & 110 & 0.004\\
Religion & 0 (None) & 2300 & 0.092\\
Religion & 1 (Protestant) & 8420 & 0.337\\
Religion & 2 (Catholic) & 14240 & 0.571\\
\bottomrule
\end{tabular}
\end{minipage}
\end{table}

\begin{figure}[H]
  \centering
  \includegraphics[width=0.9\textwidth]{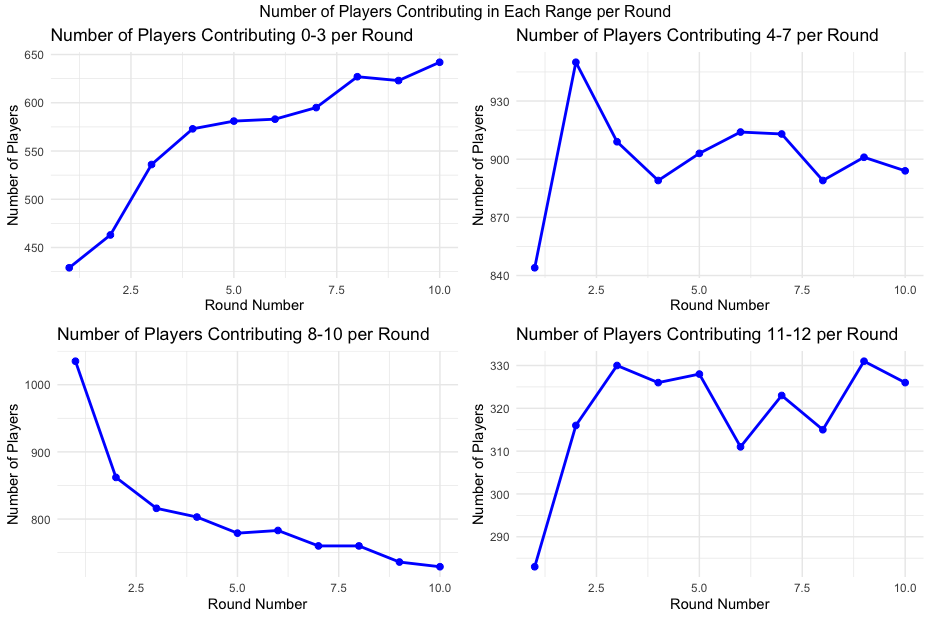}
  \caption{\textbf{Head-counts by contribution range.} Growth in 0–3 L, contraction in 8–10 L; 4–7 L and 11–12 L bands comparatively flat.}
  \label{fig:Per_Round}
\end{figure}

\begin{table}[H]
\centering
\begin{tabular}{ccccc}
\hline
Round & Free Riders & Full Contributors & Low Contributors & High Contributors \\
\hline
1  & 165 & 223 & 1064 & 1527 \\
2  & 178 & 237 & 1186 & 1405 \\
3  & 175 & 226 & 1237 & 1354 \\
4  & 186 & 218 & 1241 & 1350 \\
5  & 192 & 228 & 1278 & 1313 \\
6  & 202 & 219 & 1290 & 1301 \\
7  & 204 & 220 & 1285 & 1306 \\
8  & 212 & 217 & 1327 & 1264 \\
9  & 230 & 229 & 1317 & 1274 \\
10 & 224 & 227 & 1341 & 1250 \\
\hline
\end{tabular}
\caption{Counts by contribution type and round.}
\label{tab:contribution_by_round}
\end{table}

\section{Bifurcation diagnostics}\label{app:bifurcation}

\subsection{Early-warning prediction from rounds 1–3}
\label{sec:earlywarning}

We ask whether a few early decisions are sufficient to forecast who will end in the High path by round~10. Using only a player’s own choices from rounds~1–3, we fit an out-of-sample logistic model with three interpretable predictors: the early \emph{mean} contribution (level), the early \emph{standard deviation} (stability), and the early \emph{linear slope} (trend). This simple specification attains strong discrimination (AUC $\approx 0.829$ \ref{fig:Per_Round_}). Effect sizes are large for level and modest for stability: each additional lempira in the early mean multiplies the odds of finishing High by about $1.85$ (95\% CI $[1.73,\,1.98]$), whereas greater early variability reduces those odds (OR $\approx 0.76$, $p<0.01$). The early slope is only marginally informative beyond level (OR $\approx 1.09$, $p=0.059$). At a threshold of $0.517$, sensitivity is $0.84$ and specificity is $0.70$ (see table \ref{tab:earlywarning}).\\

\begin{figure}[H]
  \centering
  \includegraphics[width=0.9\textwidth]{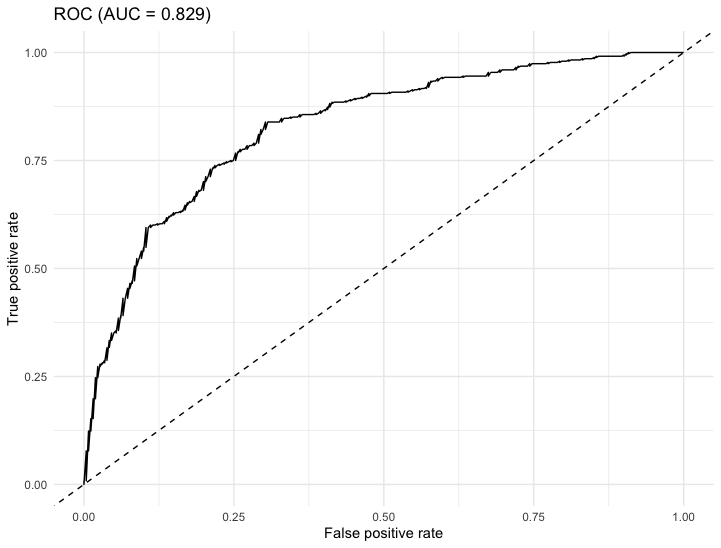}
  \caption{The separation into high- and low-cooperation paths is visible under a range of reasonable “high” cutoffs and emerges early: using only the first to third rounds predicts final paths with Area Under the Curve (AUC) $\approx 0.829$, indicating high predictive accuracy.}
  \label{fig:Per_Round_}
\end{figure}

These results show that three rounds suffice to identify likely long-run cooperators with useful accuracy. Substantively, the dominance of the early \emph{level} corroborates the tipping interpretation: small initial advantages in contribution behavior are highly persistent. The negative association of early variability indicates that unstable starters are less likely to lock into the High path, even when their average is comparable. Practically, this provides an operational early-warning rule for within-session targeting (e.g., directing recognition, feedback, or network seeding toward borderline groups after round~3) while keeping the model simple and transparent.

\begin{table}[H]
  \centering
  \caption{Early-warning GLM (odds ratios). Early mean dominates; early variability reduces odds; early slope borderline positive.}
  \label{tab:earlywarning}
  \begin{tabular}{lccc}
    \toprule
    Predictor & OR & 95\% CI & \(p\) \\
    \midrule
    Early mean (per L) & 1.85 & [1.73, 1.98] & $<10^{-6}$ \\
    Early SD           & 0.83 & [0.76, 0.91] & $<0.01$ \\
    Early slope        & 1.09 & [0.997, 1.20] & 0.059 \\
    \addlinespace
    Sensitivity @ 0.517 & 0.84 & --- & --- \\
    Specificity @ 0.517 & 0.70 & --- & --- \\
    \bottomrule
  \end{tabular}
\end{table}

\subsection{Integrated Markov logit}

\subsubsection*{Dynamics of high/low transitions: discrete hazards, duration, and robustness}


We dichotomize round-$t$ contributions at the round‑1 mean ($c^\star$) and define $s_t\in\{0(L),1(H)\}$. The panel yields $N_{\text{drop}}=11{,}529$ at–risk observations for drops $H\!\to\!L$ and $N_{\text{rise}}=10{,}710$ for rises $L\!\to\!H$ across 134 villages. Raw transition counts are in Table~\ref{tab:counts}. The unconditional hazards are $P(H\!\to\!L)=0.195$ and $P(L\!\to\!H)=0.185$.

\paragraph{State definition and peer signal.}
Let $c_{it}$ be player $i$'s contribution in round $t$. The high/low state is
\begin{align}
  c^\star &= \frac{1}{N_1}\sum_{i: t=1} c_{i1}, \qquad
  s_{it} = \mathbb{I}\{c_{it} \ge c^\star\} \in \{0(L),\,1(H)\}. \label{eq:state}
\end{align}
Let $g(i,t)$ denote $i$'s group in round $t$ with size $G_{g(i,t),t}$. The leave--one--out peer mean (scaled by the endowment $E=12$) is
\begin{align}
  \bar c^{(-i)}_{g(i,t),t} &= \frac{1}{G_{g(i,t),t}-1}\!\!\sum_{j\in g(i,t),\,j\neq i} c_{jt},
  \qquad m_{it}=\frac{\bar c^{(-i)}_{g(i,t),t}}{E}, \qquad x_{i,t-1}=m_{i,t-1}.
\end{align}

\paragraph{Duration and at–risk indicators.}
Let the time already spent in the current regime at $t-1$ be
\begin{align}
  d_{i,t-1} = \#\{ \tau\le t-1 : s_{i,\tau}=s_{i,t-1}\text{ consecutively back from }t-1\},
\end{align}
At–risk outcomes are
\begin{align}
  y^{\text{drop}}_{it} &= \mathbb{I}\{s_{i,t-1}=1,\, s_{it}=0\},\qquad
  y^{\text{rise}}_{it}  = \mathbb{I}\{s_{i,t-1}=0,\, s_{it}=1\}. \label{eq:at-risk}
\end{align}

\paragraph{Discrete–time GLMMs (separate hazards).}
For players with $s_{i,t-1}=1$ (drop risk, $H\!\to\!L$) we estimate
\begin{align}
  \mathrm{logit}\,\Pr\!\big(y^{\text{drop}}_{it}=1 \mid s_{i,t-1}=1,\,X_{it}\big)
  &= \alpha_{v(i)}^{\text{D}} + \delta_{g(i)}^{\text{D}} + a_i^{\text{D}}
     + \beta^{\text{D}}_{\text{peer}}\,x_{i,t-1}
     + \beta^{\text{D}}_{r}\,r_t
     + f^{\text{D}}(d_{i,t-1})
     + \Gamma^{\text{D}^\top} Z_i, \label{eq:drop-glmm}
\end{align}
analogously for rises when $s_{i,t-1}=0$ ($L\!\to\!H$),
\begin{align}
  \mathrm{logit}\,\Pr\!\big(y^{\text{rise}}_{it}=1 \mid s_{i,t-1}=0,\,X_{it}\big)
  &= \alpha_{v(i)}^{\text{R}} + \delta_{g(i)}^{\text{R}} + a_i^{\text{R}}
     + \beta^{\text{R}}_{\text{peer}}\,x_{i,t-1}
     + \beta^{\text{R}}_{r}\,r_t
     + f^{\text{R}}(d_{i,t-1})
     + \Gamma^{\text{R}^\top} Z_i. \label{eq:rise-glmm}
\end{align}
where superscript D and R stand for drop and rise, respectively. Here $v(i)$ indexes village, $g(i)$ the (fixed) game group, $a_i^{(\cdot)}$ a player random intercept, $r_t$ the round index,
and $Z_i$ collects baseline covariates (gender, marital status, religion, ethnicity, religion, food insufficiencies, access routes, and $z$–scored
age, friends, adversaries, network density/size and education). Duration is flexibly captured by
\begin{align}
  f^{(\cdot)}(d) \;=\; \sum_{k=1}^{K}\delta^{(\cdot)}_k\,B_k(d), \qquad K=3, \label{eq:spline}
\end{align}
a natural cubic spline in $d_{i,t-1}$. Random intercepts $\alpha_{v(i)}^{(\cdot)}$, $\delta_{g(i)}^{(\cdot)}$, and $a_i^{(\cdot)}$
allow for village, group, and individual heterogeneity.

\paragraph{Integrated Markov logit.}
Pooling both directions,
\begin{align}
  \mathrm{logit}\,\Pr(s_{it}=1 \mid s_{i,t-1},X_{it})
  &= \alpha_{v(i)} + \delta_{g(i)} + a_i + \theta\,s_{i,t-1}
     + \big(\beta_{\text{peer}} + \tilde\beta_{\text{peer}}\,s_{i,t-1}\big)x_{i,t-1} \nonumber\\
  &\quad + \big(\Gamma + \tilde\Gamma\,s_{i,t-1}\big)^\top Z_i
     + \big[f_0(d_{i,t-1}) + s_{i,t-1}\,f_1(d_{i,t-1})\big], \label{eq:markov-logit}
\end{align}
with village, group, and player random intercepts $(\alpha_{v(i)},\delta_{g(i)},a_i)$.
Average (row–mean) transition probabilities used in Figure~\ref{fig:calibration_markov_hmm} are
\begin{align}
  \widehat P_{q\to 1} = \frac{1}{n_q}\sum_{(i,t): s_{i,t-1}=q} \widehat p_{it},\qquad
  \widehat P_{q\to 0} = 1-\widehat P_{q\to 1},\qquad q\in\{0(L),1(H)\}. \label{eq:rowmeans}
\end{align}
The stationary share is $\pi_H=\widehat P_{L\to H}/\big(\widehat P_{L\to H}+\widehat P_{H\to L}\big)$.

\paragraph{HMM cross–check.}
With latent state $S_{it}\in\{1,2\}$ and $z$–scored contribution $Y_{it}$,
\begin{align}
  \Pr(S_{it}=s \mid S_{i,t-1}=r) = T_{rs}, \qquad
  Y_{it}\mid S_{it}=s \sim \mathcal N(\mu_s,\sigma_s^2). \label{eq:hmm}
\end{align}
The Viterbi path $\widehat S_{it}$ yields
\begin{align}
  \widehat T_{rs} = \frac{\sum_{it}\mathbb{I}\{\widehat S_{i,t-1}=r,\widehat S_{it}=s\}}
                          {\sum_{it}\mathbb{I}\{\widehat S_{i,t-1}=r\}}. \label{eq:hmm-T}
\end{align}

\paragraph{Survival and evaluation.}
Time to first switch is modeled by a Cox model with village frailty
\begin{align}
  \lambda_i(t \mid x_i,u_{v(i)}) = \lambda_0(t)\,\exp(x_i^\top\beta + u_{v(i)}), \quad
  u_{v}\sim \text{Gamma}(\theta^{-1},\theta^{-1}). \label{eq:cox-frailty}
\end{align}
For predictive performance: $\text{Brier}=\frac{1}{n}\sum_i (y_i-\hat p_i)^2$; AUC $=\Pr(\hat p^>\!>\hat p^-)$; calibration regresses $\mathrm{logit}(y_i)$ on $\mathrm{logit}(\hat p_i)$, ideal $(\alpha_c,\beta_c)=(0,1)$.

\paragraph{Population–averaged robustness.}
As a GEE alternative, we report binomial GLMs with player–clustered (sandwich) variance
\begin{align}
  \widehat{\mathrm{Var}}_{\text{CR}}(\hat\beta)
    = (X^\top W X)^{-1} \Big( \sum_{g} U_g U_g^\top \Big) (X^\top W X)^{-1},\quad
    U_g = \sum_{i\in g} x_i\,w_i^{1/2}(y_i-\hat\mu_i). \label{eq:sandwich}
\end{align}

\paragraph{Main results.}
The scaled peer signal $x_{i,t-1}$ (per‑endowment change) dominates (Table~\ref{tab:glmm_or}): it lowers drop risk (OR $=0.0027$, $p<10^{-180}$) and raises rise risk (OR $\approx226$, $p<10^{-140}$). Age decreases both hazards (drop: OR $=0.917$, $p=0.012$; rise: OR $=0.890$, $p<0.001$). Education increases switching in both directions (drop: OR $=1.15$, $p<0.001$; rise: OR $=1.13$, $p<0.001$). More educated participants are more exploratory: they are more likely to transition from low to high contributions (it is currently in the main text), AND they also exhibit slightly stronger responsiveness to their peers (SI Appendix~\ref{sec:switch_learn}). For rise $L\!\to\!H$, men are more likely to rise (OR $=1.44$, $p<10^{-9}$), and more adversaries slightly raise the rise hazard (OR $=1.06$, $p=0.028$). Relative to Catholics, those with no religion are more likely to rise (OR $=1.29$, $p=0.010$).

\paragraph{Model fit and calibration.}
Fixed–effects predictions give AUCs of 0.711 (drop) and 0.702 (rise), Brier 0.139 and 0.137, and calibration close to ideal (intercept/slope: drop $-0.025/0.977$; rise $-0.031/0.974$); see Figure~\ref{fig:calibration_markov_hmm}\,\subref{fig:calibration_markov_hmm:drop} and 
\subref{fig:calibration_markov_hmm:rise}.

\paragraph{Duration.}
Duration profiles show a mild early decline then shallow increase for drops
(Figure~\ref{fig:marginal+duration}\,\subref{fig:marginal+duration:drop_duration})
and a hump for rises peaking around 5–6 rounds
(Figure~\ref{fig:marginal+duration}\,\subref{fig:marginal+duration:rise_duration}).

\paragraph{Heterogeneity.}
The drop model exhibits a friends$\times$gender interaction
(Figure~\ref{fig:marginal+duration}\,\subref{fig:marginal+duration:friends_gender}).
The rise pattern by religion is shown in
Figure~\ref{fig:marginal+duration}\,\subref{fig:marginal+duration:religion_rise}.

\paragraph{Transition system.}
The integrated logit implies
\[
\widehat{\mathbf P}_{\text{logit}}
=\begin{bmatrix}
0.608 & 0.392\\
0.372 & 0.628
\end{bmatrix},
\]
with stationary share $\pi_H\simeq0.513$ (Figure~\ref{fig:calibration_markov_hmm}\,\subref{fig:calibration_markov_hmm:markov}). A 2–state Gaussian HMM yields
\[
\widehat{\mathbf P}_{\text{HMM}}
=\begin{bmatrix}
0.759 & 0.241\\
0.165 & 0.835
\end{bmatrix},
\]
stationary mass $\approx0.594$, and a gradual decline in the high–state share (Figure~\ref{fig:calibration_markov_hmm}\,\subref{fig:calibration_markov_hmm:hmm_share}).

\paragraph{Robustness.}
Cox models with village frailty show weak duration dependence; KM curves by friends terciles are overlapping (Appendix Fig.~A.1). Probit GLMMs reproduce patterns (Table~\ref{tab:glmm_or}, notes). Cluster–robust GLMs agree with the mixed–effects results (Appendix Table~A.1). Varying $c^\star\in\{\text{mean},\text{median},\text{mean}\pm0.5\,\text{sd}\}$ leaves conclusions unchanged (Appendix Table~A.2).

\paragraph{Placebo.}
Adding the lead peer signal $x_{i,t+1}$ to a model for $s_t$ yields a large positive coefficient ($\approx 3.52$, $p<10^{-30}$), consistent with serial correlation; peer coefficients are predictive/associational.

\begin{figure}[htbp]
  \centering

  \begin{subfigure}{0.49\textwidth}
    \centering\includegraphics[width=\linewidth]{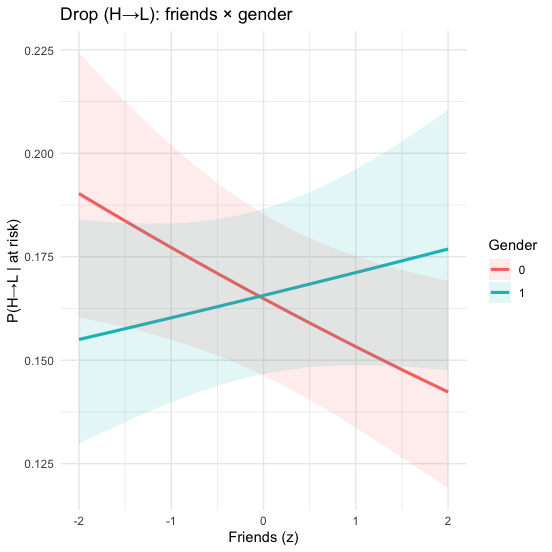}
    \subcaption{Drop hazard vs.\ friends ($z$) by gender.}
    \label{fig:marginal+duration:friends_gender}
  \end{subfigure}\hfill
  \begin{subfigure}{0.49\textwidth}
    \centering\includegraphics[width=\linewidth]{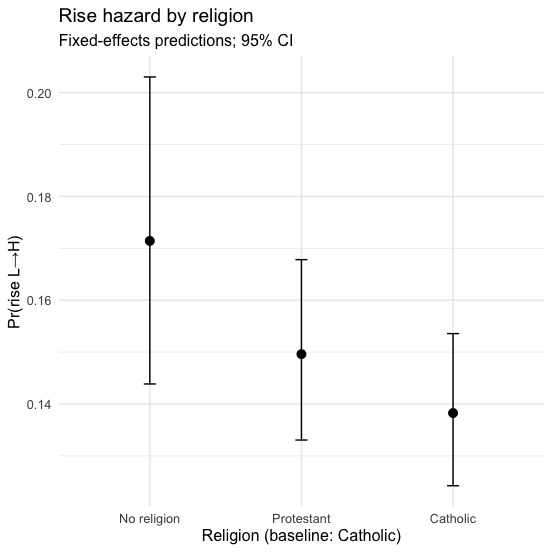}
    \subcaption{Rise hazard by religion (baseline: Catholic).}
    \label{fig:marginal+duration:religion_rise}
  \end{subfigure}

  \medskip

  \begin{subfigure}{0.49\textwidth}
    \centering\includegraphics[width=\linewidth]{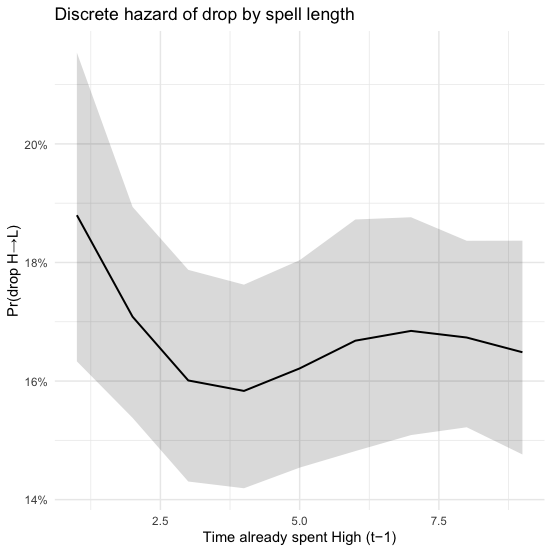}
    \subcaption{Drop: discrete hazard by duration.}
    \label{fig:marginal+duration:drop_duration}
  \end{subfigure}\hfill
  \begin{subfigure}{0.49\textwidth}
    \centering\includegraphics[width=\linewidth]{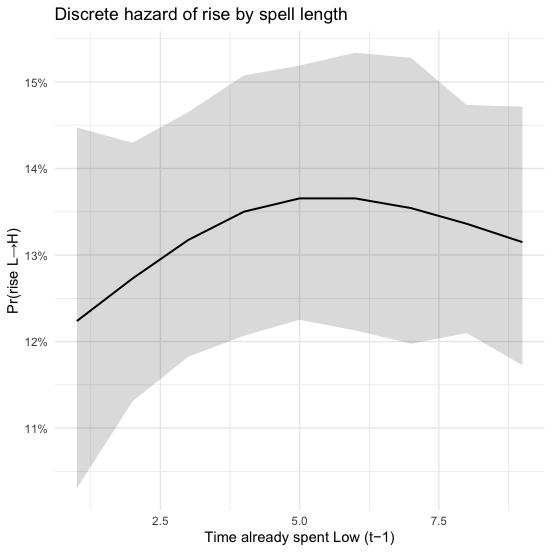}
    \subcaption{Rise: discrete hazard by duration.}
    \label{fig:marginal+duration:rise_duration}
  \end{subfigure}

  \caption{Marginal hazards from the discrete–time GLMMs. 
(a) Predicted hazard of leaving the High state (H$\to$L) as a function of the standardized number of friends ($z$), by gender. 
(b) Predicted hazard of entering the High state (L$\to$H) by religious affiliation (reference: Catholic). 
(c–d) Duration dependence of the H$\to$L and L$\to$H hazards, respectively. 
Curves display fixed–effects predictions with 95\% confidence intervals, holding other covariates at typical values; duration is modeled with a natural cubic spline.}

  \label{fig:marginal+duration}
\end{figure}


\begin{figure}[htbp]
  \centering

  \begin{subfigure}{0.49\textwidth}
    \centering\includegraphics[width=\linewidth]{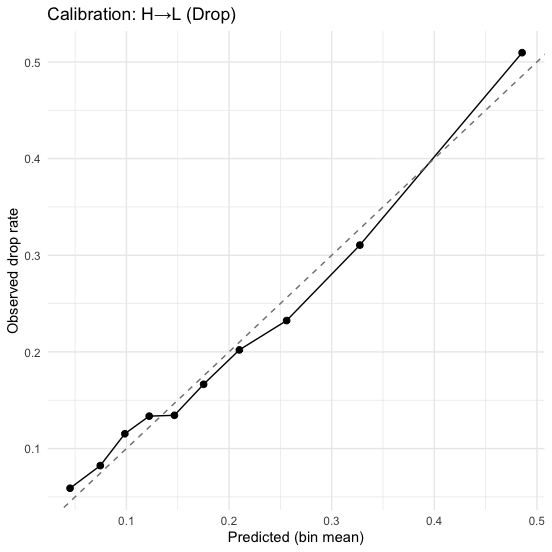}
    \subcaption{Drop (H$\to$L).}
    \label{fig:calibration_markov_hmm:drop}
  \end{subfigure}\hfill
  \begin{subfigure}{0.49\textwidth}
    \centering\includegraphics[width=\linewidth]{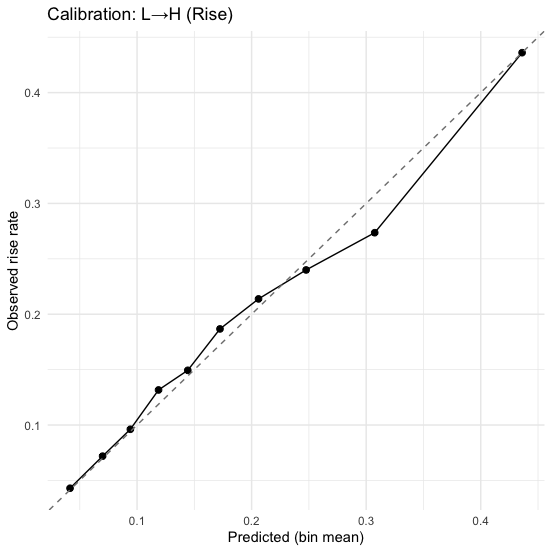}
    \subcaption{Rise (L$\to$H).}
    \label{fig:calibration_markov_hmm:rise}
  \end{subfigure}

  \medskip


  \begin{subfigure}{0.49\textwidth}
    \centering\includegraphics[width=\linewidth]{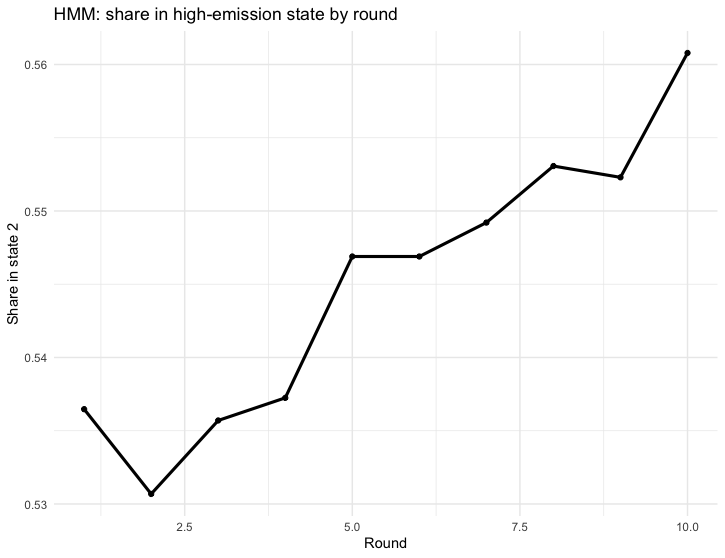}
    \subcaption{HMM: share in the high–emission state by round.}
    \label{fig:calibration_markov_hmm:hmm_share}
  \end{subfigure}

  \caption{Calibration and transition structure. 
(a–b) Decile calibration plots for the discrete–time hazards of leaving and entering the High state (H$\to$L, L$\to$H); points and connecting lines are compared to the $45^\circ$ benchmark for perfect calibration. 
(c) Share of players in the high–emission HMM state by round.}
  \label{fig:calibration_markov_hmm}
\end{figure}

To be clear, HMM transition probabilities are based on a latent state that 
filters measurement noise and misclassification, so diagonal (“stay”) probabilities can exceed the 
raw stay rates computed from dichotomized observed contributions (Table~\ref{tab:counts}). This table reports \emph{observed} transition hazards computed from dichotomized contributions (raw frequencies). By contrast, the integrated Markov–logit matrix in Figure~\ref{fig:calibration_markov_hmm} shows \emph{row–means of predicted transition probabilities} conditional on covariates and duration. Because one object is an empirical frequency and the other is an average model prediction, the diagonal (“stay”) rates need not coincide; the model‐based row means are typically lower when the specification allows switching given covariate variation.

\begin{table}[htbp]\centering\small
\caption{Observed transitions across all rounds.\label{tab:counts}}
\begin{tabular}{lcc}
\toprule
 & \multicolumn{2}{c}{$s_t$}\\
\cmidrule(lr){2-3}
$s_{t-1}$ & L & H\\
\midrule
L & 9{,}143 & 2{,}082 \\
H & 2{,}359 & 9{,}735 \\
\midrule
Hazard & $P(H\!\to\!L)=0.195$ & $P(L\!\to\!H)=0.185$ \\
\bottomrule
\end{tabular}
\end{table}

\section{Switching \& Learning}
\label{sec:switch_learn}


\paragraph{Set–up and objects.}
Let $s_{it}\in\{0,1\}$ denote player $i$’s Low/High state in round $t$ (as in Eq.~\eqref{eq:state});
let $m_{i,t-1}$ be the leave–one–out peer mean and $d_{i,t-1}$ the completed duration in the
current regime (``spell'') at $t\!-\!1$ (Eq.~\eqref{eq:spline}). Throughout the \emph{hazards}
and \emph{Markov–logit} models, we \emph{scale} the leave–one–out peer mean by one endowment
unit ($E{=}12$~L) so a one–unit change in the peer regressor corresponds to a 12~L shift in peers’
average giving; reported odds ratios (ORs) are therefore per-endowment change (for a per–1~L
interpretation, raise the OR to the power $1/12$). By contrast, in the \emph{levels} (updating)
models below, we use peer means and own lags in Lempiras (0–12) to preserve unit
interpretability (L per 1~L). Potential transitions are observed for $t\in\{2,\dots,10\}$.
Define the \emph{switch indicator}
\[
y^{\text{sw}}_{it}=\mathbb{I}\{s_{it}\ne s_{i,t-1}\},
\quad
y^{\text{dir}}_{it}=\mathbb{I}\{s_{i,t-1}=1,\,s_{it}=0\}\in\{0(\mathrm{LH}),1(\mathrm{HL})\}
\]
among true switchers. Empirically, $\approx$19\% of opportunities are switches ($n=4{,}216$ out of $22{,}239$),
with unconditional hazards $P(L\!\to\!H)=0.185$ and $P(H\!\to\!L)=0.195$.

\paragraph{Volatility: probability of any switch.}
We estimate a mixed–effects logit with village, group, and player random intercepts:
\begin{align}
\mathrm{logit}\,\Pr\!\big(y^{\text{sw}}_{it}=1\mid X_{it}\big)
&= \alpha_{v(i)} + \delta_{g(i)} + a_i
 + \beta_{\text{peer}}\,m^{\text{(scaled)}}_{i,t-1}
 + \beta_r\,t
 + f(d_{i,t-1})
 + \Gamma^\top Z_i, \label{eq:anyswitch}
\end{align}
where $m^{\text{(scaled)}}_{i,t-1}$ is the peer mean scaled by the 12L endowment (ORs per endowment change),
$f(\cdot)$ is a natural cubic spline (3~df), and $Z_i$ collects baseline covariates used throughout.
Outcomes: AUC $=0.564$, Brier $=0.152$, calibration intercept/slope $=0.007/0.992$ (fixed–effects predictions).
Selected ORs from \eqref{eq:anyswitch} (per endowment change unless noted): peer mean $0.75$ (lower volatility when
peers are generous), male $1.20$, age $0.92$, education $1.16$, adversaries $1.05$; other covariates
are small or imprecise.

\paragraph{Direction conditional on switching.}
Among switchers we fit
\begin{align}
\mathrm{logit}\,\Pr\!\big(y^{\text{dir}}_{it}=1\mid X_{it},\,y^{\text{sw}}_{it}=1\big)
&= \alpha_{v(i)} + \delta_{g(i)} + a_i
 + \tilde\beta_{\text{peer}}\, m^{\text{(scaled)}}_{i,t-1}
 + \tilde\beta_r\,t
 + \tilde f(d_{i,t-1})
 + \tilde\Gamma^\top Z_i, \label{eq:dir}
\end{align}
with $y^{\text{dir}}_{it}{=}1$ meaning a drop (HL) and $0$ a rise (LH). Results: AUC $=0.624$, Brier $=0.237$.
The peer signal sharply tilts direction toward rises: $\tilde\beta_{\text{peer}}{<}0$ (e.g., $\approx{-}3.16$;
OR $\approx 0.043$ for HL vs.\ LH), so higher peer giving \emph{reduces} the odds of dropping rather than rising.
Duration shows a modest negative middle–spline component; other baselines are not reliably different from zero.
A heatmap over rounds shows that drops occur earlier than rises (Figure~\ref{fig:calibration_markov_hmm}).

\paragraph{Learning in levels and heterogeneity.}
We model round–to–round updating in contributions in \emph{Lempiras} (0–12):
\begin{align}
c_{it}
&= \alpha_i + \gamma_{v(i),t}
 + \beta_{\text{group}}\, m_{i,t-1}
 + \beta_{\text{own}}\, c_{i,t-1}
 + \varepsilon_{it}, \label{eq:levels-fe}
\end{align}
with player fixed effects $\alpha_i$ and village$\times$round fixed effects $\gamma_{v,t}$ (s.e.\ clustered at group).
Specification \eqref{eq:levels-fe} yields $\hat\beta_{\text{group}}=0.752\ (\!0.019)$ and
$\hat\beta_{\text{own}}=-0.074\ (\!0.020)$; within-$R^2=0.134$.
Allowing random slopes by player and adding village, group, and village$\times$round random intercepts,
\begin{align}
c_{it}
&= \alpha + (\beta_{\text{group}}+u_{i,\text{g}})\, m_{i,t-1}
          + (\beta_{\text{own}}  +u_{i,\text{o}})\, c_{i,t-1}
          + b_{v,t} + \delta_{g(i)} + \alpha_{v(i)} + a_i + e_{it}, \label{eq:levels-lmm}
\end{align}
gives $\hat\beta_{\text{group}}=0.696$ and $\hat\beta_{\text{own}}=0.221$ with substantial dispersion
(random‑slope SDs $\approx0.78$ and $0.35$). Cross–level interactions indicate: (i) peer responsiveness rises for
men ($\widehat{\beta}_{\text{peer}\times\text{male}}=0.084$), (ii) own–stickiness declines with schooling
($\widehat{\beta}_{\text{own}\times\text{educ}}=-0.063$), and (iii) peer responsiveness increases slightly with schooling
($\widehat{\beta}_{\text{peer}\times\text{educ}}=0.045$). Peer$\times$friends is slightly negative in a parsimonious
specification ($-0.044$), indicating weaker marginal imitation for more embedded players; peer$\times$adversaries is near zero.
State–dependence of updating is negligible: interactions of $m_{i,t-1}$ and $c_{i,t-1}$ with the last state $s_{i,t-1}$ are small
(Figure~\ref{fig:peer_2x2}\,\subref{fig:peer_2x2:state_dep}).\footnote{Group fixed effects are not separately identified in \eqref{eq:levels-fe}
because each player belongs to a single group and player fixed effects absorb time‑invariant group factors.}

\paragraph{Temporal profile, nonlinearity, and falsification.}
Interacting $m_{i,t-1}$ with round yields a flat trajectory (peer$\times$round $=-0.0034$, n.s.). Spline fits show smooth, monotone peer responses without strong curvature. A lead–lag falsification in levels,
\[
c_{it}=\dots+\beta_{\text{peer}}\,m_{i,t-1}+\gamma_{\text{lead}}\,m_{it}+ \dots,
\]
finds $\widehat\gamma_{\text{lead}}=0.176$ (s.e.\ $0.013$), reflecting serial correlation in group contributions and reinforcing a predictive/associational reading of peer effects.

\paragraph{Counterfactual propagation.}
Using the integrated Markov logit (Eq.~\eqref{eq:markov-logit}), we form row–mean transition matrices (Eq.~\eqref{eq:rowmeans}) and iterate from the observed round–1 share to round~10. Baseline predicted High share is $0.513$. Three counterfactuals yield:
\begin{align*}
\text{Friends}+1\text{SD (H at risk)}&:\;0.522;\\
\text{No–religion (L at risk)}&:\;0.533;\\
\text{Peer mean}+1\text{SD in rounds }1\text{--}5&:\;0.646. 
\end{align*}

Subgroup simulations show round–10 High shares of $0.573$ (males) versus $0.474$ (females), and $0.500/0.518/0.522$ for low/medium/high friend terciles.

All the empirical patterns align with prior evidence that network embeddedness fosters cooperation \citep{haan2006friendship,shirado2013quality}; that early informational environments generate path dependence \citep{fellner2021information}; that gender-linked heterogeneity is common in social dilemmas \citep{furtner2021gender,seguino1996gender}; that religious identity can shape prosocial choices \citep{benjamin2016religious}; and that socioeconomic status alone is not a reliable negative predictor of prosociality in the field \citep{andreoni2021higher,cardenas2000real}.

\begin{figure}[H]
  \centering
  \includegraphics[width=.55\textwidth]{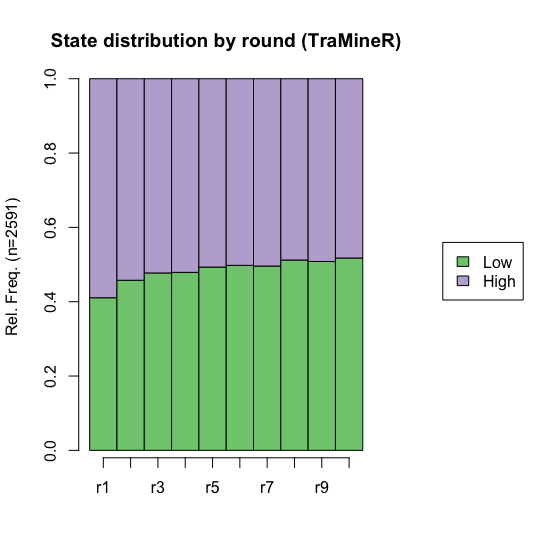}
  \caption{State distribution by round (\texttt{TraMineR} in R) \cite{gabadinho2011analyzing} The share in Low increases gradually; hazards are $P(L\!\to\!H)=0.185$ and $P(H\!\to\!L)=0.195$.}
  \label{fig:seqdist}
\end{figure}

\begin{figure}[H]
  \centering

  \begin{subfigure}{0.49\textwidth}
    \centering\includegraphics[width=\linewidth]{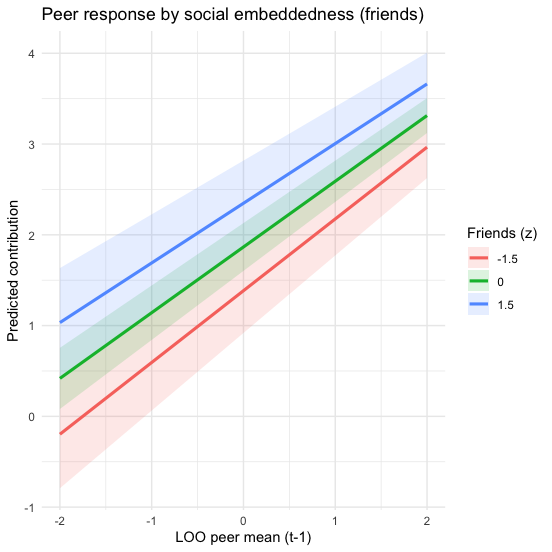}
    \subcaption{Peer response by friends ($z$).}
    \label{fig:peer_2x2:friends}
  \end{subfigure}\hfill
  \begin{subfigure}{0.49\textwidth}
    \centering\includegraphics[width=\linewidth]{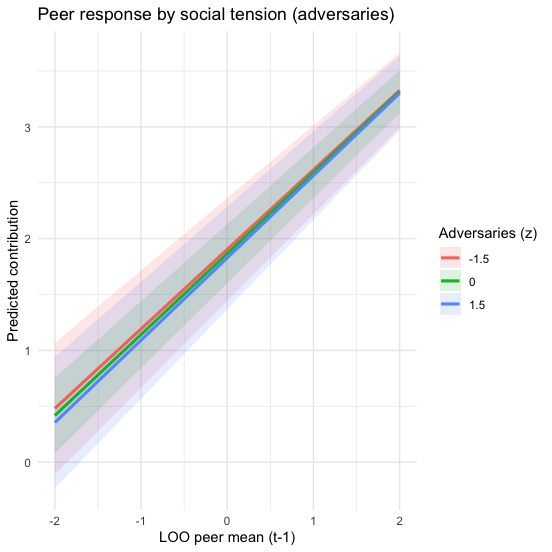}
    \subcaption{Peer response by adversaries ($z$).}
    \label{fig:peer_2x2:adversaries}
  \end{subfigure}

  \medskip

  \begin{subfigure}{0.49\textwidth}
    \centering\includegraphics[width=\linewidth]{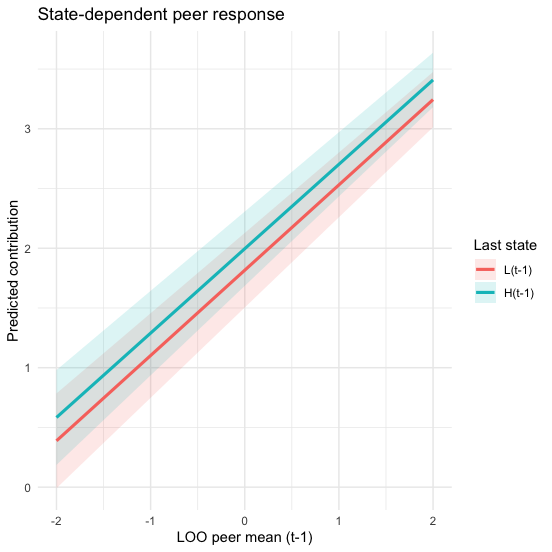}
    \subcaption{State–dependent peer response (last–round Low vs High nearly parallel).}
    \label{fig:peer_2x2:state_dep}
  \end{subfigure}\hfill
  \begin{subfigure}{0.49\textwidth}
    \centering\includegraphics[width=\linewidth]{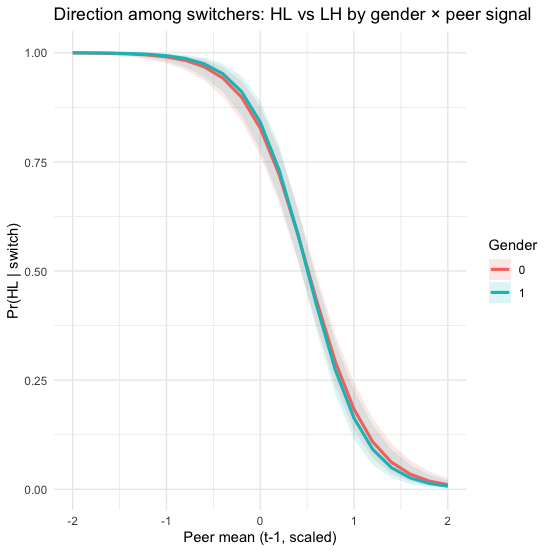}
    \subcaption{Direction among switchers: $\Pr(\text{HL}\mid\text{switch})$ over peer signal, by gender.}
    \label{fig:peer_2x2:dir_by_gender}
  \end{subfigure}

  \caption{Peer–response margins from the levels specification with interactions. 
Panels (a)–(b) plot predicted own–contribution responses to the peer signal (leave–one–out group mean) across social context: (a) by standardized number of friends ($z$); (b) by standardized number of adversaries ($z$). 
Panel (c) shows state dependence by estimating separate peer–response curves for players who were Low vs.\ High in the previous round; the nearly parallel lines indicate limited slope heterogeneity by prior state. 
Panel (d) conditions on switchers between rounds and reports $\Pr(H\!\to\!L \mid \text{switch})$ as a function of the peer signal, stratified by gender. 
Curves are model‐based (levels) predictions with 95\% confidence intervals, holding other covariates at typical values.}
  \label{fig:peer_2x2}
\end{figure}

\section{IV theory \& diagnostics}\label{app:iv_diagnostics}

Each round $t$, player $i$ in a group of size $N=5$ chooses $c_{it}\in[0,12]$.
Instantaneous utility combines material payoffs, altruism, and aversion to being at the norm:
\[
\pi_{it} = \frac{b}{N}\!\left(c_{it} + (N-1)\bar{c}_{-i,t}\right) - \kappa c_{it}
+ d_i c_{it}^{\alpha} - h_i \exp\!\left\{\knorm \big(c_{it}-\bar{c}_{-i,t-1}\big)^2\right\},
\]
with $\alpha\in(0,1)$, $d_i>0$, $h_i\in[0,1]$, $\knorm >0$, and $\bar{c}_{-i,t}$ the leave–one–out
peer mean. Under $b/N<\kappa<b$, best–reply dynamics admit two locally attracting fixed points,
generating the two–basin structure we document.

Empirically, we estimate
\[
\tilde{c}_{it} \;=\; \beta\,\tilde{\bar{c}}_{-i,t} + \tilde{X}'_{it}\gamma + \tilde{\varepsilon}_{it},
\]
after removing player and village$\times$round fixed effects by two–way demeaning.
To address simultaneity of $\tilde{\bar{c}}_{-i,t}$ and $\tilde{\varepsilon}_{it}$, we use:
(i) \emph{leave–one–out composition instruments} (demeaned group means of predetermined traits of
the other players) and (ii) a \emph{time–varying shift–share} IV for \emph{lagged} peers,
\[
Z^{SS}_{i,t-1} \;=\; \sum_l s_{g,-i}^{(l)}\,\mu_{t-1}^{(l)}, \qquad
\text{and its leave–one–village variant}\quad
Z^{LOV}_{i,t-1} \;=\; \sum_l s_{g,-i}^{(l)}\,\mu_{t-1}^{(-v),(l)},
\]
where $s_{g,-i}^{(l)}$ is the LOO group share for trait $l$ and $\mu_{t-1}^{(-v),(l)}$ is the
cross–village mean at $t-1$ excluding village $v$. Random assignment within villages and the
predetermined nature of traits support relevance and exclusion. Because avatars are anonymous,
composition can affect $c_{it}$ only via peers’ contributions.

We report instrument balance on pre–game traits within village$\times$round, cluster–robust KP first–stage
$F$, reduced–form $Z\!\to\! c_{it}$, a placebo at the earliest round with instrument variation
($t=2$ in our data), and Hansen/Sargan $J$ tests when over–identified. For the lagged
shift–share (LOV) IV, the cluster–robust KP $F$ is $\approx6.1$, so we complement the 2SLS point estimate
with weak–ID–robust Anderson–Rubin CIs (95\% [$-0.14, 0.90$]; 90\% [$0.30, 0.88$]).

\begin{figure}[H]
\centering
\begin{tikzpicture}[
  >=Latex, scale=0.75, every node/.style={transform shape}, node distance=18mm,
  var/.style={draw, rounded corners, inner sep=6pt, align=center},
  latent/.style={draw, dashed, rounded corners, inner sep=6pt, align=center}, note/.style={inner sep=2pt, font=\small}
]
\node[var] (shares) {Groupmate composition \\ $\{s_{igl}\}$ (random within village)};
\node[var, below=of shares] (shifts) {Outside-village shifters by round \\ $\{\mu^{(-v)}_{t-1,l}\}$};
\node[var, right=25mm of shares] (Z) {Instrument \\ $Z^{LOV}_{i,t-1}$};
\node[var, right=25mm of Z] (peer) {Lagged peer mean \\ $\bar c_{-i,t-1}$};
\node[var, right=25mm of peer] (own) {Own contribution \\ $c_{it}$};
\node[latent, below=of Z] (fes) {Controls \& FE: $\alpha_i$, $\gamma_{v,t}$};
\node[latent, above=of peer] (anon) {Anonymity: peers unobserved};
\draw[->] (shares) -- (Z); \draw[->] (shifts) -- (Z); \draw[->] (Z) -- node[above, note]{First stage} (peer);
\draw[->] (peer) -- node[above, note]{$\beta$} (own); \draw[->] (fes) -- (peer); \draw[->] (fes) -- (own);
\draw[->, dashed] (Z) to[bend left=20] node[above, note]{\emph{excluded by anonymity}} (own);
\draw[->, dashed] (anon) to[bend right=20] (Z);
\node[note, below=30mm of peer, align=left] (checks) {%
\textbf{Checks:}\\
-- Balance of $Z$ on $X_i$ within $\gamma_{v,t}$\\
-- KP first-stage $F$ (cluster)\\
-- Reduced form $Z\!\to\! c_{it}$\\
-- Placebo at earliest round with var in $Z$ ($t=2$)\\
-- Hansen/Sargan $J$ (if over-ID)
};
\end{tikzpicture}
\caption{IV/2SLS design: random group composition $\times$ outside–village round shifters (LOV).}
\end{figure}
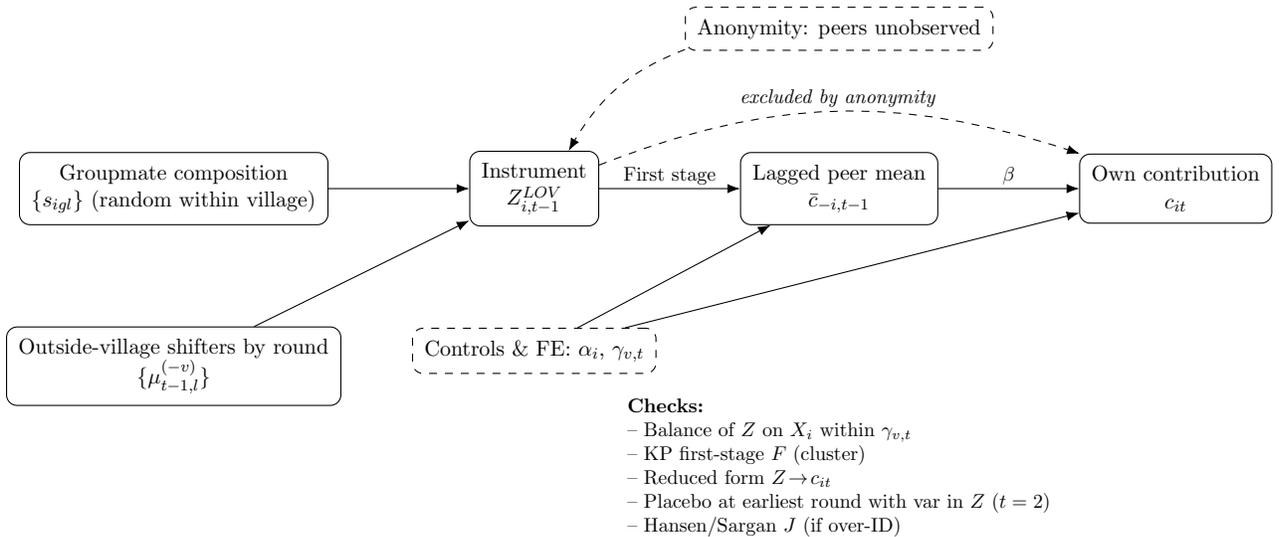

\subsection*{Drop–one IV sweep (LOO composition instruments)}
\begin{table}[H]
\centering
\caption{Three–IV subsets from the LOO composition set.}
\label{tab:iv_sweep}
\begin{tabular}{lccc}
\toprule
IV subset & Weak–IV $F$ & Wu–Hausman $p$ & Sargan $p$ \\
\midrule
Gender, Protestant, Non-Religious & $17.18$ & $0.987$ & $0.015$ \\
Gender, Protestant, Indigenous & $35.06$ & $0.360$ & $0.020$ \\
\textbf{Gender, Non-Religious, Indigenous} & \textbf{34.99} & \textbf{0.458} & \textbf{0.360} \\
Protestant, Non-Religious, Indigenous & $17.68$ & $0.082$ & $0.045$ \\
\bottomrule
\end{tabular}
\end{table}

\section{Structural back‑out}
\label{sec:struct}


This subsection summarizes a structural back–out that maps observed updates into two interpretable
primitives, an altruism weight and an effective norm–pull, using only within–session data. Full
derivations, estimation details, and sensitivity checks are provided in Appendix~D; here we report the
identification logic, units, and main patterns.

We summarize behavior using two primitives. First, an \emph{altruism weight} $d_i$ governing the
intrinsic marginal benefit from giving (warm–glow). Second, an effective norm--aversion $\phi_i$ that penalizes being at the group's lagged mean. In the first–order condition, only the
product of a curvature parameter and a conformity scale is identified, so we collect these in
\[
\phi_i \equiv 2\knorm \,h_i,
\]
and estimate $(d_i,\phi_i)$ by minimum distance on each player’s within–session choices using the
leave–one–out lagged peer mean (units: Lempiras).\footnote{All peer means exclude the focal player.
Variables are in \emph{lempiras} (0–12). Identification is \emph{partial}: only the product
$\phi_i=2\knorm h_i$ is point–identified from the quadratic deviation cost; the mapping to $(k,h_i)$ is
conventional and used for scaling only. We fix $\alpha=\tfrac12$ in the warm–glow term and show
sensitivity to $\alpha\in\{0.3,0.7\}$ in SI. Data handling follows the main text: rounds
$2$–$10$ for lags; a small ridge ($10^{-3}$) when $c_{it}=0$; and 99th–percentile winsorization for
the back–out residual.}

Three facts emerge. (a) The distribution of $d_i$ is broad (median $\approx2.8$; IQR
$1.9$–$3.4$); plugging these into the individual best–reply delivers an implied high fixed point at
the 12~L cap for essentially all players, i.e., altruism is \emph{in principle} sufficient to
support high contributions. (b) The inferred norm–pull $\phi_i$ is near zero for the vast
majority (about 96\% below $0.1$), indicating weak aversion to matching the most recent group norm. (c) The \emph{combination}, high $d_i$ but weak $\phi_i$, implies a
fragile high state: without an early cooperative majority to anchor behavior, contributions tend to
relax downward. This micro pattern mirrors the reduced–form markers: an early tipping threshold
near 5.75~L, a village critical mass around 60\%, and asymmetric persistence with rarer $L\!\to\!H$
escapes than $H\!\to\!L$ drops.\\

Interpretation is descriptive, not causal. The back–out treats lagged peer means as a shifting norm
parameter and maps choices into primitives given that norm; it does not separately identify reputation,
sanctions, or higher–order beliefs. Sensitivity checks (alternative $\alpha = \{0.3, 0.7\}$, excluding boundary
rounds, no winsorization) leave the qualitative ranking intact and is presented in \ref{sec:robustness_SI}.

\begin{figure}[H]
  \centering
  \includegraphics[width=\linewidth]{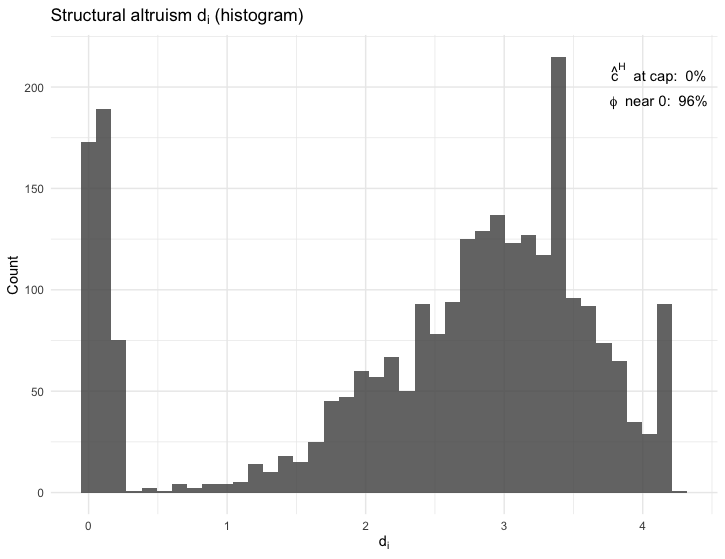}
  \caption{Histogram of recovered altruism $d_i$ (Lempiras units). Bars show the distribution across
  players (rounds 2–10 used; ridge added at $c=0$; 99th–percentile winsorization of the fit residual).
  For nearly all individuals the implied high fixed point $\hat c^{H}$ sits at the 12‑L cap, while the
  effective norm‑pull $\phi_i$ (not shown) is near zero for most players, consistent with weak conformity
  forces.}
  \label{fig:di}
\end{figure}

\noindent\emph{Link to reduced form.} The weak inferred norm‑pull aligns with the IV evidence: little
same‑round peer effect after fixed effects, but a positive \emph{lagged} response toward the recent group
norm (Section~\ref{sec:iv_results}).

\subsection*{Placebo at baseline}
Using the earliest round with non–zero variation in $Z$ ($t=2$), regressing own contribution
on $Z$ within village FE yields a near–zero coefficient (clustered SE), consistent with exclusion.

\subsection*{First–stage visualization}
Figure~\ref{fig:first_stage_plot} plots the residualized endogenous regressor against the
residualized instrument for the strongest LOO IV, with fitted line and 95\% band.
\begin{figure}[H]
 \centering
 \includegraphics[width=0.7\textwidth]{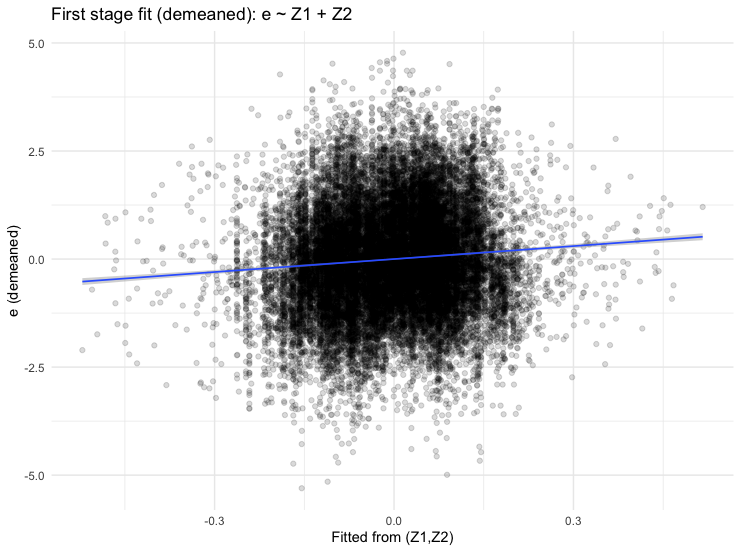}
 \caption{Partialled first stage: residualized peer mean vs. residualized instrument.}
 \label{fig:first_stage_plot}
 \end{figure}

\section{Robustness \& sensitivity}
\label{sec:robustness_SI}

\subsection{Multiple flips within the ten‑round session}

\begin{table}[H]
\centering
\caption{How often players cross the High/Low threshold (6 lempiras).}
\label{tab:flip_stats}
\begin{tabular}{lrr}
\toprule
Statistic & Count & Share \\
\midrule
Total players & 2\,591 & 1.000 \\
Exactly one flip & 266 & 0.103 \\
Multiple flips (\(\geq 2\)) & 1\,110 & 0.428 \\
\bottomrule
\end{tabular}
\end{table}

\subsection{Endpoint based on the last two rounds \((r_9\lor r_{10})\)}

\begin{table}[H]
\centering
\caption{Confusion matrix: original (\(r_{10}\)) vs.\ two‑round endpoint.}
\label{tab:conf_two}
\begin{tabular}{lcccc}
\toprule
 & \multicolumn{4}{c}{Two‑round endpoint (\(r_9\lor r_{10}\))} \\
\cmidrule(l){2-5}
Original (\(r_{10}\)) & HH & HL & LH & LL \\
\midrule
HH & 1\,055 &   0 &   0 &   0 \\
HL &   169 & 303 &   0 &   0 \\
LH &     0 &   0 & 195 &   0 \\
LL &     0 &   0 &  86 & 783 \\
\bottomrule
\end{tabular}
\end{table}

\vspace{0.5em}
\noindent The share of coefficients whose signs coincide with the
baseline specification is \(0.85\), confirming qualitative stability.

\subsection{Multinomial logit with two‑round endpoint}
\begin{table}[H]
\centering
\caption{Multinomial logit (LL baseline); endpoint = High in \(r_9\) \textit{or} \(r_{10}\). Coefficients with robust standard errors in parentheses.}
\label{tab:multi_two}
\scriptsize
\begin{tabular}{lccc}
\toprule
Predictor & \multicolumn{3}{c}{Outcome relative to LL} \\
\cmidrule(l){2-4}
 & \(\;\)LH\(\;\) & \(\;\)HL\(\;\) & \(\;\)HH\(\;\) \\
\midrule
Intercept                 & $-1.178^{***}$ \;(0.153) & $-1.077^{***}$ \;(0.151) &  $0.186^{*}$   \;(0.102) \\
Age                       &  $0.084$ \;(0.085)       & $-0.069$ \;(0.086)       & $-0.155^{***}$ \;(0.058) \\
Friends                   &  $0.027$ \;(0.080)       & $-0.036$ \;(0.080)       &  $0.109^{**}$  \;(0.053) \\
Adversaries               & $-0.028$ \;(0.079)       &  $0.078$ \;(0.071)       &  $0.050$       \;(0.050) \\
Friend density            &  $0.106$ \;(0.142)       & $-0.245$ \;(0.153)       &  $0.011$       \;(0.098) \\
Adversary density         &  $0.111$ \;(0.116)       &  $0.259^{**}$ \;(0.118)  &  $0.014$       \;(0.082) \\
Network size              &  $0.122$ \;(0.107)       &  $0.030$ \;(0.106)       &  $0.048$       \;(0.074) \\
Education                 &  $0.075$ \;(0.084)       &  $0.156^{**}$ \;(0.079)  &  $0.025$       \;(0.056) \\
Male                      &  $0.284^{*}$ \;(0.150)   &  $0.155$ \;(0.148)       &  $0.413^{***}$ \;(0.100) \\
Married                   & $-0.150$ \;(0.161)       &  $0.009$ \;(0.158)       &  $0.003$       \;(0.108) \\
Religion: none            &  $0.676^{***}$ \;(0.248) &  $0.206$ \;(0.264)       &  $0.368^{**}$  \;(0.180) \\
Religion: Protestant      &  $0.195$ \;(0.159)       &  $0.097$ \;(0.154)       &  $0.160$       \;(0.104) \\
\midrule
Observations              & \multicolumn{3}{c}{2\,591} \\
Residual Dev. / AIC       & \multicolumn{3}{c}{5\,968/6\,040} \\
\bottomrule
\end{tabular}

\medskip
\raggedright
\footnotesize
\({}^{***}p<0.01\), \({}^{**}p<0.05\), \({}^{*}p<0.10\).
Only predictors featuring in the main text are displayed; full output
available on request.
\end{table}

Across all checks, key qualitative findings, friendship stabilises
cooperation, adversarial pressure erodes it, men exhibit wider
variance, and the religious minority enjoys higher upward mobility, are
unchanged.  The behavioural split into persistent High and persistent
Low contributors is therefore robust to reasonable changes in threshold
choice, endpoint definition, and model specification.

\subsection{Robustness of Alpha}
\label{sec:alpha}

In the baseline structural back–out we set $\alpha=\tfrac{1}{2}$ for the warm‑glow term $d_i c_i^{\alpha}$. To assess robustness to curvature, we re‑estimate the primitives at $\alpha\in\{0.3,0.7\}$. As expected, changing $\alpha$ mostly rescales the recovered $d_i$ levels while leaving qualitative features intact. In particular, the effective norm‑pull $\phi_i$ concentrates at zero for virtually all players and the share whose implied high fixed point reaches the endowment cap remains about 5\% under both curvatures. The sample size is unchanged ($N=2{,}591$). These facts are summarized below and in the machine‑readable tables included in the replication package.

\paragraph{Key invariants across $\alpha\in\{0.3,0.7\}$.}
(i) \textit{Norm‑pull}: $\phi_i$ is estimated at $0$ at all reported quantiles (Min, Q1, Median, Q3, Max; Mean $=0$) for both $\alpha=0.3$ and $\alpha=0.7$. (ii) \textit{Share at cap}: the share with $c_i^{H}$ at the contribution cap is $0.05$ for both curvatures. (iii) \textit{Near‑zero norm‑pull mass}: the share with $\phi_i\le 0.1$ is $1.00$ in both cases.\footnote{See Tables~\ref{tab:sa1_alpha30} and \ref{tab:sa1_alpha70} for the distributional summaries.}

Because a lower curvature ($\alpha=0.3$) implies faster diminishing marginal returns in $c^\alpha$, the same observed behavior is rationalized by larger $d_i$ values; the converse holds for $\alpha=0.7$. Median (and other quantiles) shift accordingly, but this is a level effect, not a reversal of qualitative conclusions.

\begin{table}[H]
\centering
\caption{Structural primitives at $\boldsymbol{\alpha=0.3}$ (player level, $N{=}2{,}591$).}
\label{tab:sa1_alpha30}
\begin{tabular}{lrrrrrrrr}
\toprule
 & Min & P10 & Q1 & Median & Q3 & P90 & Max & Mean \\
\midrule
$d_i$   & 0.000 & 2.860 & 5.278 & 7.281 & 8.823 & 10.256 & 11.388 & 6.866 \\
\midrule
Share at cap $c^H$ & \multicolumn{8}{c}{0.05} \\
Share with $\phi_i\le 0.1$ & \multicolumn{8}{c}{1.00} \\
\bottomrule
\end{tabular}

\end{table}

\begin{table}[H]
\centering
\caption{Structural primitives at $\boldsymbol{\alpha=0.7}$ (player level, $N{=}2{,}591$).}
\label{tab:sa1_alpha70}
\begin{tabular}{lrrrrrrrr}
\toprule
 & Min & P10 & Q1 & Median & Q3 & P90 & Max & Mean \\
\midrule
$d_i$   & 0.014 & 0.999 & 1.299 & 1.491 & 1.619 & 1.727 & 1.806 & 1.386 \\
\midrule
Share at cap $c^H$ & \multicolumn{8}{c}{0.05} \\
Share with $\phi_i\le 0.1$ & \multicolumn{8}{c}{1.00} \\
\bottomrule
\end{tabular}

\end{table}

Across $\alpha\in\{0.3,0.7\}$, the data continue to imply (i) broad heterogeneity in altruism intensities $d_i$ with the expected level rescaling as $\alpha$ changes; (ii) essentially no measurable norm‑pull in the one‑period deviation penalty ($\phi_i\simeq 0$ for almost everyone); and (iii) a small minority of players (about 5\%) whose implied high steady state lies at the contribution cap. These features are precisely those that underlie the reduced‑form markers in the main text, an empirical tipping threshold, asymmetric state persistence, and strong sensitivity to the recent group norm, so all qualitative conclusions are invariant to $\alpha$ within this range.

\section{Illustrative village examples}

\subsection*{Example of Games with Gender, Friendship Status, Religion and Financial Autonomy differences}

\begin{figure}[H]
  \centering
  \includegraphics[width=0.9\textwidth]{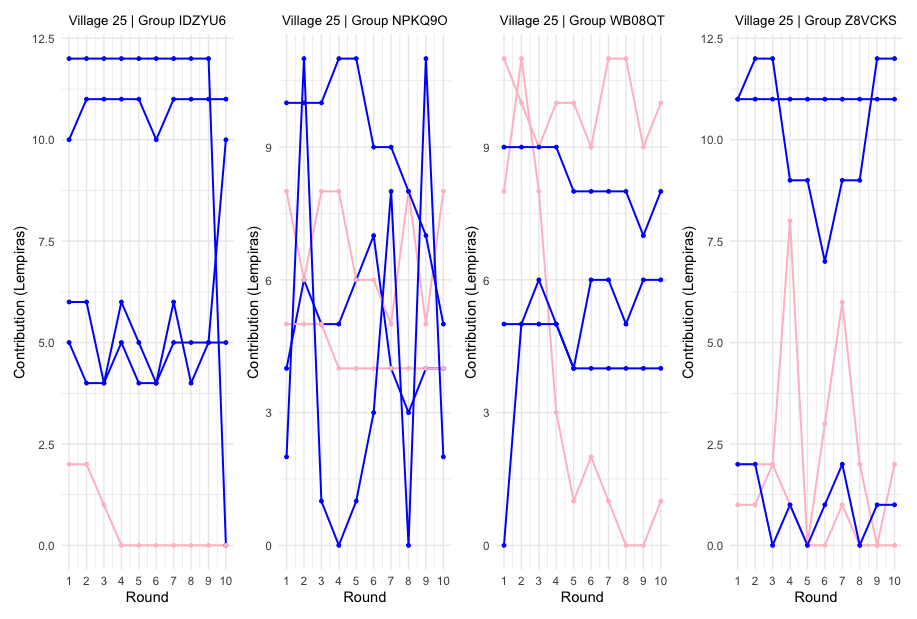}
  \caption{Village 25 were most men (blue) are selected from women (women)}
  \label{fig:VillageMen}
\end{figure}

\begin{table}[H]
\centering
\begin{tabular}{ccccc}
\hline
Group & Village & ID     & Men Avg & Women Avg \\
\hline
5     & 25      & IDZYU6 & 0.50     & 7.95       \\
6     & 25      & NPKQ9O & 5.55     & 5.90       \\
7     & 25      & WB08QT & 6.75     & 5.87       \\
8     & 25      & Z8VCKS & 1.65     & 7.40       \\
\hline
\end{tabular}
\caption{Average contribution by gender for each group in Village~25}
\label{tab:men}
\end{table}

In Figure~\ref{fig:VillageMen}, three out of the four groups, men contribute more on average than women. The only exception is Group 5, where women contributions are significantly higher, shown in table~\ref{tab:men}.

\begin{table}[H]
\centering
\begin{tabular}{ccrr}
\hline
Village & Group   & Avg.\ friends (Above 6) & Avg.\ friends ($\leq$ 6) \\
\hline
25      & IDZYU6  & 16.00                  & 7.67                \\
25      & NPKQ9O  &  8.50                  & 6.33                \\
25      & WB08QT  &  7.00                  & 5.00                \\
25      & Z8VCKS  &  7.00                  & 7.33                \\
\hline
\end{tabular}
\caption{Average friendship degree by contribution category in Village 25}
\label{tab:village25_friends}
\end{table}

In Figure~\ref{fig:VillageMen} in three of the four groups (IDZYU6, NPKQ9O and WB08QT), players whose mean contribution exceeds 6 Lempiras enjoy substantially larger friendship networks than those contributing at most 6. Only in group Z8VCKS do lower contributors have a marginally higher average number of friends (7.33 vs.\ 7.00), reversing the overall pattern, shown in table \ref{tab:village25_friends}.

\begin{figure}[H]
  \centering
  \includegraphics[width=0.9\textwidth]{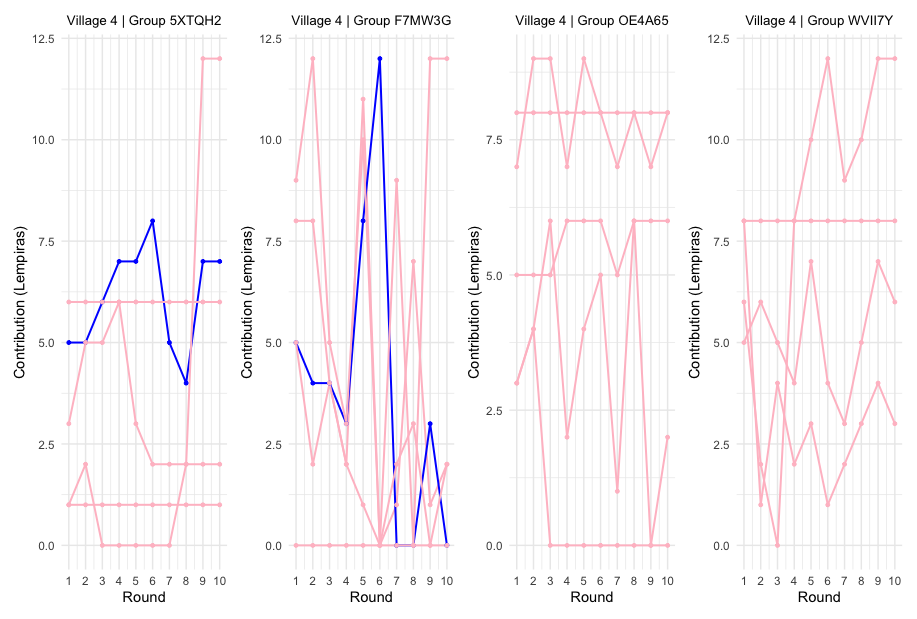}
  \caption{Village 4, where most women (pink) are selected from men (blue)}
  \label{fig:VillageWomen}
\end{figure}

\begin{table}[H]
\centering
\begin{tabular}{ccrr}
\hline
Village & Group   & Women Avg & Men Avg \\
\hline
4       & 5XTQH2  & 3.28      & 6.10    \\
4       & F7MW3G  & 3.42      & 3.90    \\
4       & OE4A65  & 5.10      & ---     \\
4       & WVII7Y  & 6.10      & ---     \\
\hline
\end{tabular}
\caption{Average contribution by gender for each group in Village~4}
\label{tab:women}
\end{table}

In Figure~\ref{fig:VillageWomen}, two out of the four groups (the other two groups do not contain men), men contribute more on average than women, shown in table~\ref{tab:women}.

\begin{figure}[H]
  \centering
  \includegraphics[width=0.9\textwidth]{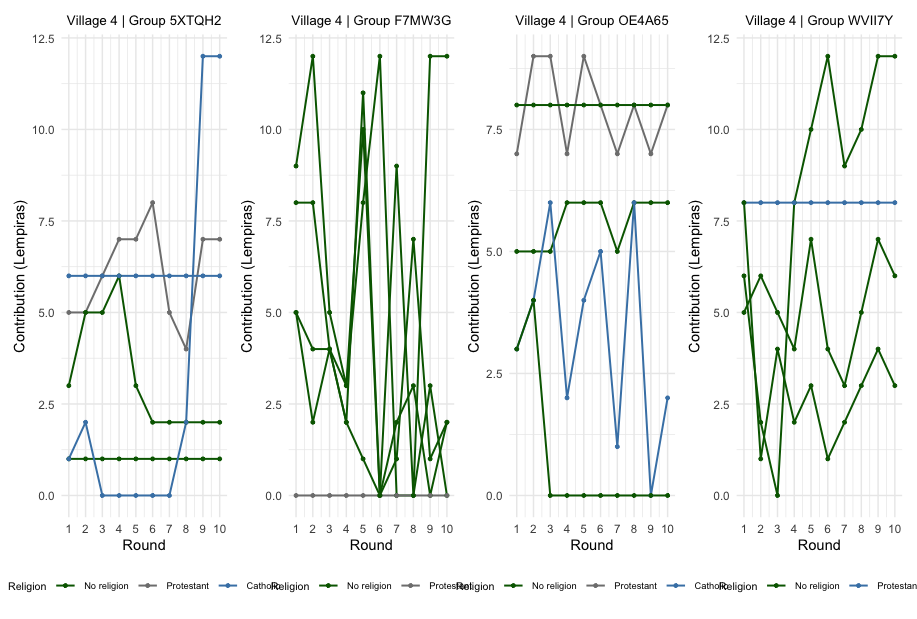}
  \caption{Village 25 were most men (blue) are selected from women (women)}
  \label{fig:Religion}
\end{figure}

In Figure~\ref{fig:Religion} in all four groups, Catholics contributed less than Protestants and individuals with no religion \ref{tab:v4_religion}.

\begin{table}[H]
\centering
\begin{tabular}{lrr}
\hline
Group   & Avg.\ Catholic & Avg.\ Non‑Catholic \\
\hline
5XTQH2  & 2.10           & 5.00               \\
F7MW3G  & 4.40           & 0.00               \\
OE4A65  & 4.77           & 5.60               \\
WVII7Y  & 5.47           & 8.00               \\
\hline
\end{tabular}
\caption{Average contribution by religious affiliation for each group in Village~4}
\label{tab:v4_religion}
\end{table}

In three of the four groups (5XTQH2, OE4A65, and WVII7Y), Catholic players contributed less on average than non‑Catholics.

\begin{figure}[H]
  \centering
  \includegraphics[width=0.9\textwidth]{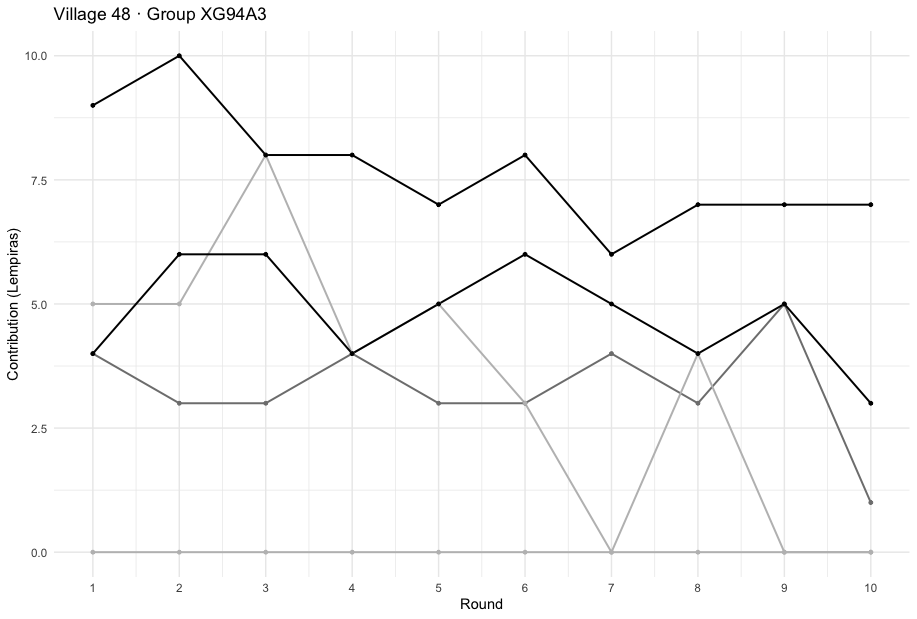}
  \caption{Village 4 were most women (pink) are selected from men (blue)}
  \label{fig:FinAutonomy}
\end{figure}

\begin{table}[H]
\centering
\begin{tabular}{lr}
\hline
Status              & Avg.\ Contribution \\
\hline
Not Autonomous      & 1.70               \\
Autonomous          & 6.25               \\
\hline
\end{tabular}
\caption{Average contribution by financial autonomy status}
\label{tab:autonomy_contrib}
\end{table}

Financially autonomous individuals in group XG94A3 in village 48 contributed more than non financially autonomous.

\section{Type classification: Free Riders, Conditional cooperators, Triangles and Other}

As a micro–foundation, Table~\ref{tab:type_share} shows that nearly half of players are
\emph{conditional cooperators} (48.4\%), about one in twelve are \emph{free riders} (8.5\%), and a small share
are \emph{triangle} types (3.1\%), with the remainder classified as \emph{other} (40.0\%).
Free riders are concentrated in the Low-path cluster (Table~\ref{tab:type_by_cluster}), and
conditional/triangle types display strong positive updating to peers’ lagged contributions (Table~\ref{tab:lmm_fixed}),
consistent with the two-path dynamics we document.\\

\begin{table}[H]
\centering
\begin{threeparttable}
\caption{Revealed–slope types (shares, player level)}
\label{tab:type_share}
\begin{tabular}{lcc}
\toprule
Type & $n$ & Share (\%) \\
\midrule
Conditional cooperator & 1{,}254 & 48.4 \\
Triangle & 81 & 3.1 \\
Free rider & 220 & 8.5 \\
Other & 1{,}036 & 40.0 \\
\bottomrule
\end{tabular}
\end{threeparttable}
\end{table}

\begin{table}[H]
\centering
\begin{threeparttable}
\caption{Type composition within Ward--D2 clusters}
\label{tab:type_by_cluster}
\begin{tabular}{lcccc}
\toprule
 & \multicolumn{2}{c}{C1} & \multicolumn{2}{c}{C2} \\
\cmidrule(lr){2-3}\cmidrule(lr){4-5}
Type & $n$ & \% of C1 & $n$ & \% of C2 \\
\midrule
Conditional cooperator & 878 & 50.4 & 376 & 44.3 \\
Triangle               &  62 &  3.6 &  19 &  2.2 \\
Free rider             &   0 &  0.0 & 220 & 25.9 \\
Other                  & 803 & 46.1 & 233 & 27.5 \\
\bottomrule
\end{tabular}
\end{threeparttable}
\end{table}

Using player–level empirical-Bayes slopes from a random-slope learning model and an evidence-based curvature
screen, we classify updating rules into \emph{Conditional cooperator}, \emph{Free rider}, \emph{Triangle}, and
\emph{Other}. The aggregate shares are: 48.4\% conditional cooperators, 8.5\% free riders, 3.1\% triangle, and
40.0\% other (Table~\ref{tab:type_share}). Unsupervised Ward–D2 clustering on the full ten-round trajectories
confirms a two-regime structure (average silhouette $=0.360$ at two clusters ($K$=2); Table~\ref{tab:hc_sil_by_k}). The confusion
matrix shows that cluster C2 is predominantly Low (LL $=641/848$), whereas cluster C1 contains most High finishers
and movers (HH $=1{,}038$; HL $=306$; LH $=171$) (Table~\ref{tab:hc_confusion}). Free riders concentrate in the
Low-path cluster (25.9\% of C2; essentially absent in C1), producing quasi-separation in Low-path logits
(Table~\ref{tab:type_by_cluster}, Table~\ref{tab:lowpath_or}). In a mixed-effects specification with type
interactions, conditional and triangle types have significantly positive responses to peers’ lag
($\beta_{\bar c_{-i}^{t-1}\times\text{Cond}}=0.347,\ p<10^{-12}$;
$\beta_{\bar c_{-i}^{t-1}\times\text{Tri}}=0.355,\ p<10^{-4}$), while the \emph{Other} group responds negatively
($\beta=-0.359,\ p<10^{-13}$); the free-rider baseline shows no meaningful peer response (Table~\ref{tab:lmm_fixed}).
Types are fairly stable over time (early vs.\ late agreement $=75.9\%$; Table~\ref{tab:stability}), and triangle players’
response peaks at about $0.38$ within-player SD above their typical peer environment (IQR $\approx 0.18$–$0.55$;
Table~\ref{tab:triangle_vertex}). At the village level, a higher \emph{free-rider share} predicts a larger Low-path
fraction (OLS coefficient $0.643\pm0.066$, $p<0.001$; Table~\ref{tab:group_ols}), aligning the micro composition
with macro persistence.

\begin{table}[H]
\centering
\begin{threeparttable}
\caption{Low-path (C2) membership. Logistic regression (odds ratios with 95\% CI)\tnote{a}}
\label{tab:lowpath_or}
\begin{tabular}{lccc}
\toprule
Predictor & OR & 95\% CI (lower) & 95\% CI (upper) \\
\midrule
Intercept                              & $4.88\times 10^{7}$ & 0.000 & $1.25\times 10^{268}$ \\
Conditional cooperator (vs Free rider) & 0.000               & 0.000 & $2.61\times 10^{252}$ \\
Triangle (vs Free rider)               & 0.000               & 0.000 & $2.39\times 10^{252}$ \\
Other (vs Free rider)                  & 0.000               & 0.000 & $1.97\times 10^{252}$ \\
Friends (z)                            & 0.916               & 0.822 & 1.022 \\
IH (z)                                 & 0.986               & 0.887 & 1.097 \\
Male (vs Female)                       & 0.657               & 0.526 & 0.821 \\
Financial autonomy $=1$ (vs $0$)       & 0.297               & 0.104 & 0.844 \\
\bottomrule
\end{tabular}
\begin{tablenotes}[flushleft]
\footnotesize
\item[a] Reference type is \emph{Free rider}. Near-zero/huge ORs reflect quasi/complete separation (free riders concentrate in C2).
\end{tablenotes}
\end{threeparttable}
\end{table}

\begin{table}[H]
\centering
\begin{threeparttable}
\caption{Learning model with type interactions (LMM fixed effects)\tnote{a}}
\label{tab:lmm_fixed}
\begin{tabular}{lrrrr}
\toprule
Term & Estimate & SE & $t$ & $p$ \\
\midrule
Intercept                                   & 0.738 & 0.295 &  2.50 & 0.012 \\
Own lag                                     & 0.0296 & 0.0068 &  4.36 & $1.28\times 10^{-5}$ \\
Peer mean (lag)                              & -0.011 & 0.0432 & -0.26 & 0.797 \\
Type: Conditional cooperator                 & 3.346 & 0.323 & 10.36 & $<2\times 10^{-16}$ \\
Type: Triangle                               & 3.596 & 0.586 &  6.13 & $9.55\times 10^{-10}$ \\
Type: Other                                  & 8.657 & 0.329 & 26.34 & $<2\times 10^{-16}$ \\
Peer lag $\times$ Conditional cooperator     & 0.347 & 0.0470 &  7.38 & $1.60\times 10^{-13}$ \\
Peer lag $\times$ Triangle                   & 0.355 & 0.0836 &  4.24 & $2.21\times 10^{-5}$ \\
Peer lag $\times$ Other                      & -0.359 & 0.0477 & -7.53 & $5.28\times 10^{-14}$ \\
\bottomrule
\end{tabular}
\begin{tablenotes}[flushleft]
\footnotesize
\item[a] Model: $c_{i,t} \sim c_{i}^{t-1} + \bar c_{-i,t-1} \times \text{type} + (1+\bar c_{-i,t-1}\,|\,\text{player}) + (1\,|\,\text{group})$; reference type = \emph{Free rider}. “Boundary (singular) fit” indicates a near-perfect correlation between random intercept and slope; fixed-effects inferences are unaffected.
\end{tablenotes}
\end{threeparttable}
\end{table}

\begin{table}[H]
\centering
\begin{threeparttable}
\caption{Type stability: rounds 1--5 vs.\ 6--10}
\label{tab:stability}
\begin{tabular}{lcc}
\toprule
Agreement & $n$ & Share (\%) \\
\midrule
TRUE  & 1{,}966 & 75.9 \\
FALSE &   625 & 24.1 \\
\bottomrule
\end{tabular}
\end{threeparttable}
\end{table}

\begin{table}[H]
\centering
\begin{threeparttable}
\caption{Triangle types: vertex of the quadratic response in standardized $x$}
\label{tab:triangle_vertex}
\begin{tabular}{lccc}
\toprule
Mean & Median & 25th pctl & 75th pctl \\
\midrule
0.382 & 0.331 & 0.181 & 0.547 \\
\bottomrule
\end{tabular}
\end{threeparttable}
\end{table}

\begin{table}[H]
\centering
\begin{threeparttable}
\caption{Group share Low‐path vs.\ type composition (OLS)}
\label{tab:group_ols}
\begin{tabular}{lrrr}
\toprule
Predictor & Coef. & SE & $p$ \\
\midrule
Intercept                    & 0.293 & 0.0099 & $<0.001$ \\
Free rider (share)           & 0.643 & 0.0664 & $<0.001$ \\
Conditional cooperator (share) & 0.036 & 0.0230 & 0.114 \\
Triangle (share)             & -0.077 & 0.1059 & 0.467 \\
\midrule
\multicolumn{4}{l}{Residual SE $=0.204$;\quad $R^2=0.0866$ (adj.\ $R^2=0.0839$);\quad $F(3,1015)=32.08$, $p<2.2\times 10^{-16}$.}
\\[-1ex]
\bottomrule
\end{tabular}
\end{threeparttable}
\end{table}

\begin{table}[H]
\centering
\begin{threeparttable}
\caption{Multinomial logit of revealed type (baseline: Free rider)}
\label{tab:multinom_types}
\begin{tabular}{lrrr}
\toprule
\multicolumn{4}{l}{\textbf{Outcome: Conditional cooperator (vs Free rider)}}\\
\midrule
Predictor & Coef. & SE & $z$ \\
\midrule
Intercept                &  1.595 & 0.101 & 15.863 \\
Adversaries          & -0.089 & 0.083 & -1.078 \\
Age                  & -0.270 & 0.091 & -2.971 \\
Financial autonomy     &  1.327 & 1.033 &  1.284 \\
Friends            &  0.233 & 0.098 &  2.394 \\
Gender          &  0.531 & 0.195 &  2.723 \\
Adversarial Network Density &  0.118 & 0.141 &  0.840 \\
Friendship Network Density & -0.166 & 0.136 & -1.214 \\
\midrule
\multicolumn{4}{l}{\textbf{Outcome: Triangle (vs Free rider)}}\\
\midrule
Predictor & Coef. & SE & $z$ \\
\midrule
Intercept                & -1.251 & 0.194 & -6.451 \\
Adversaries          & -0.031 & 0.149 & -0.207 \\
Age                  & -0.240 & 0.162 & -1.476 \\
Financial autonomy      &  1.834 & 1.242 &  1.476 \\
Friends            &  0.149 & 0.169 &  0.883 \\
Gender            &  0.721 & 0.325 &  2.218 \\
Adversarial Network Density &  0.293 & 0.263 &  1.116 \\
Friendship Network Density & -0.496 & 0.272 & -1.822 \\
\midrule
\multicolumn{4}{l}{\textbf{Outcome: Other (vs Free rider)}}\\
\midrule
Predictor & Coef. & SE & $z$ \\
\midrule
Intercept                &  1.314 & 0.103 & 12.737 \\
Adversaries           & -0.034 & 0.083 & -0.411 \\
Age                   & -0.158 & 0.092 & -1.723 \\
Financial autonomy      &  1.442 & 1.035 &  1.393 \\
Friends               &  0.183 & 0.099 &  1.847 \\
Gender           &  0.720 & 0.197 &  3.651 \\
Adversarial Network Density &  0.139 & 0.142 &  0.977 \\
Friendship Network Density  & -0.144 & 0.139 & -1.041 \\
\bottomrule
\end{tabular}
\begin{tablenotes}[flushleft]
\footnotesize
\item \emph{Note:} Coefficients are log-odds relative to \emph{Free rider}. $z$ and SE from \texttt{nnet::multinom}. See text for interpretation.
\end{tablenotes}
\end{threeparttable}
\end{table}

\begin{figure}[H]
  \centering
    \centering
    \includegraphics[width=\textwidth]{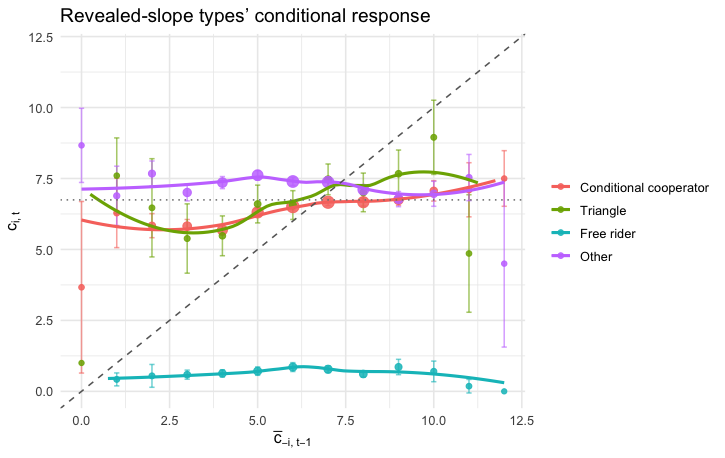}
  \caption{Revealed–slope types’ conditional response (balanced assignment). Own contribution in $t$ ($c_{it}$) vs.\ leave‑one‑out peer mean in $t\!-\!1$ ($\bar c_{-i,t-1}$).
  Colored loess lines show the average conditional‑response by revealed type: \emph{conditional cooperator}
  (upward‑sloping), \emph{free rider} (flat near zero), \emph{triangle} (hump‑shaped), and \emph{other}.
  Points are binned means with 95\% CIs; marker size reflects cell counts. The dashed 45° line indicates
  perfect matching; the dotted horizontal line marks the round‑1 mean. Shares: 48.4\% conditional cooperators,
  8.5\% free riders, 3.1\% triangle, 40.0\% other.}
  \label{fig:condresp_types}
\end{figure}

Figure~\ref{fig:condresp_types} plots $c_{it}$ against $\bar c_{-i,t-1}$ with binned means (95\%~CIs) and
type-specific loess lines. The distribution of types is: Conditional cooperators $=48.4\%$, Free riders $=8.5\%$,
Triangle $=3.1\%$, Other $=40.0\%$ (Table~\ref{tab:type_share}). Conditional and triangle types exhibit robust
positive conditional responses, whereas free riders’ slope is near zero and the residual “other’’ group responds
weakly or negatively on average (Table~\ref{tab:lmm_fixed}). Using full-trajectory Ward–D2 clustering, the two
regimes are well separated (average silhouette $0.360$ at two clusters ($K$=2); Table~\ref{tab:hc_sil_by_k}): C1 concentrates High
finishers and movers (HH $=1{,}038$, HL $=306$, LH $=171$) while C2 is predominantly Low (LL $=641$)
(Table~\ref{tab:hc_confusion}). Free riders are concentrated in C2 (25.9\% of C2; essentially none in C1)
(Table~\ref{tab:type_by_cluster}). Types are reasonably stable over time (agreement between rounds~1–5 and 6–10
$=75.9\%$; Table~\ref{tab:stability}). For triangle players the estimated peak occurs at $\hat x^\star \approx 0.38$
within-player SD above their typical peer environment (IQR $\approx 0.18$–$0.55$; Table~\ref{tab:triangle_vertex}).
At the group level, a larger free-rider share predicts more Low-path membership (OLS coefficient $0.643\pm0.066$,
$p<0.001$; Table~\ref{tab:group_ols}).\\

\paragraph{Estimation note on the Low-path logit.}
Because free riders are concentrated in the Low-path cluster (and absent in the High-path cluster), the Low-path
logit exhibits quasi-complete separation when using \emph{Free rider} as the reference type, inflating the intercept
and shrinking the other type coefficients toward $-\infty$ (Table~\ref{tab:lowpath_or}). We therefore interpret the
cross-tab and, as a robustness check, report penalized logit estimates (Firth) which yield finite odds ratios but the
same qualitative conclusion: relative to free riders, conditional and triangle types have much lower odds of Low-path
membership, conditional on the covariates.\footnote{Standard options include Firth (e.g., \texttt{logistf}) or
bias-reduced GLMs (e.g., \texttt{brglm2}); Bayesian logit with weakly informative priors produces similar finite
estimates.}

\section{Calibration and model fit}
Figure~\ref{fig:calibration} visualizes the calibration of the Fermi--Moran imitation model to the field data. 
Panel~(a) shows the loss surface \(L(d,k)=\sum_{s,s'\in\{H,L\}}\!\big(T_{ss'}(d,k)-\widehat{T}_{ss'}\big)^{2}\) over the fitness tilt \(d\) (favoring High contributors when \(d>0\)) and the imitation intensity \(k\). The basin around the optimum indicates identifiability rather than a flat ridge, and the optimizer settles near \((\hat d,\hat k)\approx(-0.51,\,0.50)\).
Panel~(b) compares the empirical Round~1\(\to\)10 transition matrix \(\widehat{T}\) with the simulated \(T(\hat d,\hat k)\); the match captures the moderate persistence of High types while revealing a systematic under‐prediction of Low\(\to\)Low stickiness. 
Panel~(c) plots the fraction of High contributors by round in the data against the calibrated simulator, with a Monte-Carlo band; trajectories align on the decline in cooperation and the conditional stability of the High state. 
Together, these diagnostics support the evolutionary-branching interpretation: social imitation with a slight tilt against High contributions reproduces the observed bifurcation, while the residual gap for Low persistence motivates heterogeneity (e.g., agent thresholds or network-weighted copying) developed in the discussion.


We binarize contributions into High/Low using the Round~1 mean as the threshold and compute the empirical transition matrix \(\widehat{T}\) from Round~1 to Round~10. 
For each \((d,k)\), we simulate groups of size \(N=5\) for ten rounds under a pairwise Fermi update with adoption probability
\[
p_{i\to j}=\frac{1}{1+\exp\!\big(-k\,[w_i-w_j]\big)},\quad 
w_\ell=1+d\,\mathbb{1}\{s_\ell=\text{High}\},
\]
to obtain \(T(d,k)\). 
Parameters that minimize \(L(d,k)\) are obtained via a coarse grid search to initialize the limited-memory quasi-Newton optimization algorithm (L-BFGS-B), subject to \(k\ge 0\).
We also report round-wise High shares from the calibrated simulator with 10–90\% envelopes. 

\begin{figure}[H]
  \centering
  \includegraphics[width=\linewidth]{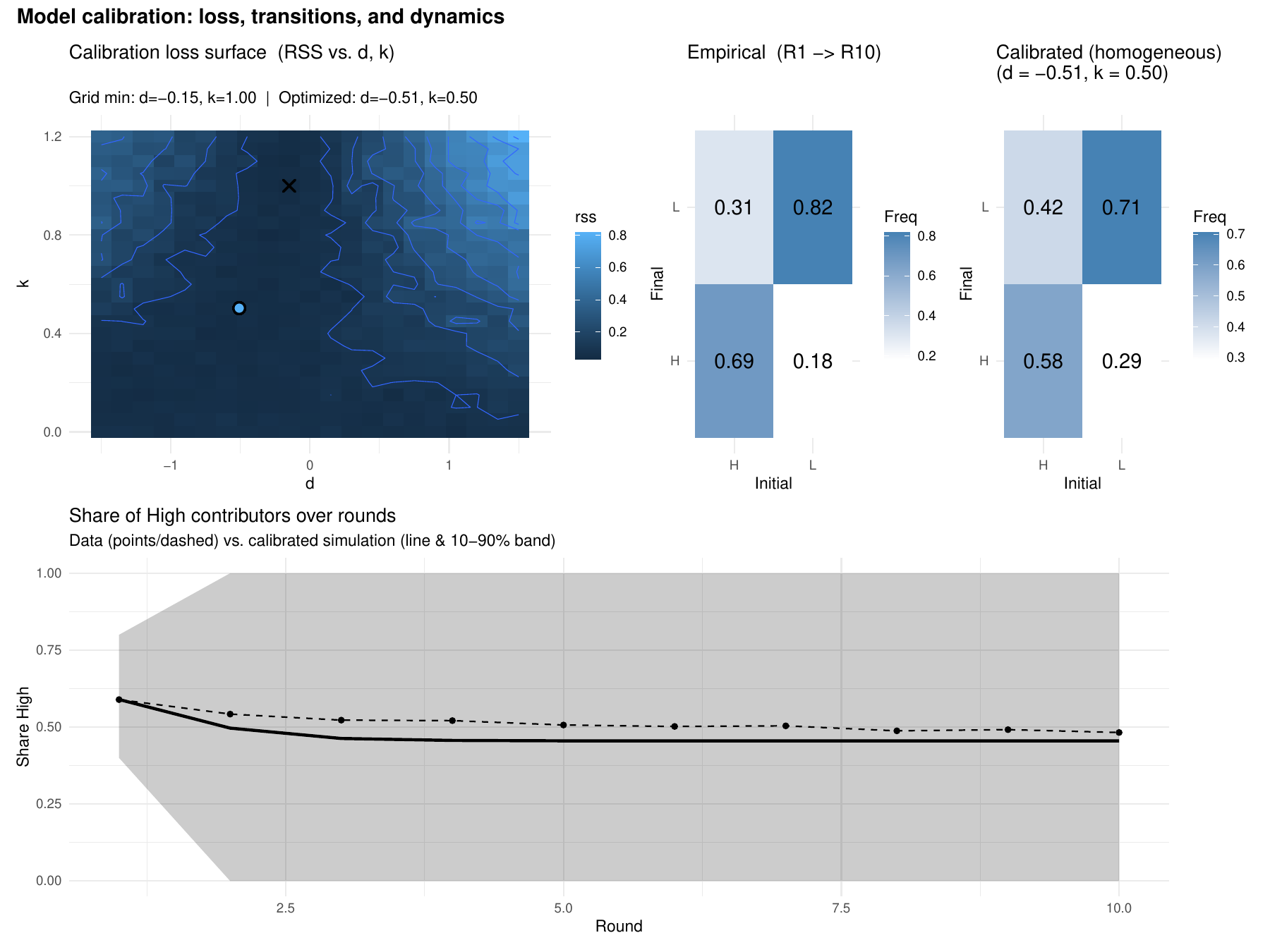}
  \caption{\textbf{Calibration of the imitation model.}
  \textbf{(a)} Loss surface \(L(d,k)\) with grid minimum and optimizer marked. 
  \textbf{(b)} Empirical Round~1\(\to\)10 transition matrix \(\widehat{T}\) vs.\ calibrated \(T(\hat d,\hat k)\) (here \(\hat d\approx-0.51,\ \hat k\approx0.50\)). 
  \textbf{(c)} Fraction High by round: data (points/dashed) vs.\ calibrated simulation (solid) with 10–90\% band.
  The model reproduces the conditional stability of High contributors and overall decline, while under-predicting Low–state persistence, motivating heterogeneity/extensions discussed in the text.}
  \label{fig:calibration}
\end{figure}

\subsection{Link to the conditional‑cooperator slope}

The reduced‑form “conditional‑cooperation” slope from \(c_{igt}\) on \(\peer_{-i,g,t-1}\) (per‑player OLS)
is essentially uncorrelated with \(\hat\phi_i\) from the structural model (Pearson \(r\approx 0.014\); OLS \(p=0.52\)).
This cautions against treating the copying slope as a direct measure of the structural norm‑pull parameter.

\section{Large Tables}

\newpage




\begin{table}[H]
\centering
\begin{threeparttable}
\caption{Random-slope learning model (full covariates, without Financial Autonomy)}
\label{tab:rs_learn_full_A}
\footnotesize

\textbf{Linear mixed model fit by REML.} \\
REML criterion at convergence: \textbf{98278.4} \\
Number of obs: \textbf{22{,}239}; groups: player = \textbf{2{,}471}, village\_code.round\_n = \textbf{1{,}206}, group = \textbf{522}, village\_code = \textbf{134}.

\vspace{0.75em}
\textbf{Random effects}
\begin{tabular}{llrr}
\toprule
Groups                & Name          & Variance & Std.~Dev. \\
\midrule
player                & c\_group\_lag & 0.009588 & 0.09792 \\
player.1              & c\_own\_lag   & 0.064528 & 0.25402 \\
player.2              & round\_n      & 0.017530 & 0.13240 \\
player.3              & (Intercept)   & 4.457348 & 2.11124 \\
village\_code.round\_n& (Intercept)   & 0.025182 & 0.15869 \\
group                 & (Intercept)   & 0.000000 & 0.00000 \\
village\_code         & (Intercept)   & 0.333841 & 0.57779 \\
Residual              &               & 3.150782 & 1.77504 \\
\bottomrule
\end{tabular}

\vspace{0.75em}
\textbf{Fixed effects}
\begin{tabular}{lrrrrl}
\toprule
Term                  & Estimate     & Std.~Error   & df           & $t$ value & $p$ \\
\midrule
(Intercept)           & 5.025e{+}00  & 4.799e{-}01  & 1.995e{+}02  & 10.471    & $<2\mathrm{e}{-}16^{***}$ \\
Own Contribution Lag           & 1.255e{-}01  & 9.143e{-}03  & 4.313e{+}03  & 13.729    & $<2\mathrm{e}{-}16^{***}$ \\
Peer Contribution Lag          & 4.652e{-}03  & 1.285e{-}02  & 5.354e{+}03  &  0.362    & 0.7174 \\
Exp Round             &-4.065e{-}02  & 5.787e{-}03  & 1.215e{+}03  & -7.024    & 3.58 $\mathrm{e}{-}12^{***}$ \\
Age                   &-4.332e{-}03  & 4.532e{-}03  & 1.944e{+}03  & -0.956    & 0.3392 \\
Gender                & 5.685e{-}01  & 1.198e{-}01  & 1.967e{+}03  &  4.744    & 2.25$\mathrm{e}{-}06^{***}$ \\
Friends               & 2.979e{-}02  & 1.444e{-}02  & 1.976e{+}03  &  2.062    & 0.0393$^{*}$ \\
Adversaries           & 5.676e{-}02  & 4.657e{-}02  & 1.929e{+}03  &  1.219    & 0.2231 \\
Food Insecurities     &-2.432e{-}02  & 1.199e{-}01  & 1.974e{+}03  & -0.203    & 0.8393 \\
Marital Status        & 1.104e{-}01  & 1.299e{-}01  & 1.936e{+}03  &  0.850    & 0.3955 \\
Network Density Friendship  &-8.619e{+}00  & 1.492e{+}01  & 1.206e{+}02  & -0.578    & 0.5645 \\
Network Density Adversarial & 3.771e{+}01  & 6.782e{+}01  & 1.258e{+}02  &  0.556    & 0.5792 \\
Network Size          &-1.826e{-}04  & 9.864e{-}04  & 1.261e{+}02  & -0.185    & 0.8534 \\
Education             & 5.386e{-}02  & 2.601e{-}02  & 1.996e{+}03  &  2.071    & 0.0385$^{*}$ \\
Non Religion          & 3.677e{-}01  & 2.100e{-}01  & 2.000e{+}03  &  1.751    & 0.0801$^{.}$ \\
Protestant            & 1.954e{-}01  & 1.352e{-}01  & 1.312e{+}03  &  1.445    & 0.1487 \\
Indigenous            &-1.032e{-}01  & 1.886e{-}01  & 1.087e{+}03  & -0.547    & 0.5844 \\
Access Routes         &-3.133e{-}02  & 1.092e{-}01  & 1.247e{+}02  & -0.287    & 0.7745 \\
\bottomrule
\end{tabular}

\begin{tablenotes}
\footnotesize
\item Signif.~codes: $^{***}p<0.001$, $^{**}p<0.01$, $^{*}p<0.05$, $^{.}p<0.1$.
\end{tablenotes}
\end{threeparttable}
\end{table}

\begin{table}[H]
\centering
\begin{threeparttable}
\caption{Random-slope learning model (full covariates, with financial autonomy)}
\label{tab:rs_learn_full_B}
\footnotesize

\textbf{Linear mixed model fit by REML.} \\
REML criterion at convergence: \textbf{78472.9} \\
Number of obs: \textbf{17{,}847}; groups: player = \textbf{1{,}983}, village\_code.round\_n = \textbf{1{,}206}, group = \textbf{522}, village\_code = \textbf{134}.

\vspace{0.75em}
\textbf{Random effects}
\begin{tabular}{llrr}
\toprule
Groups                & Name          & Variance & Std.~Dev. \\
\midrule
player                & c\_group\_lag & 0.01210  & 0.1100 \\
player.1              & c\_own\_lag   & 0.06576  & 0.2564 \\
player.2              & round\_n      & 0.01895  & 0.1377 \\
player.3              & (Intercept)   & 4.44420  & 2.1081 \\
village\_code.round\_n& (Intercept)   & 0.02420  & 0.1556 \\
group                 & (Intercept)   & 0.00000  & 0.0000 \\
village\_code         & (Intercept)   & 0.32613  & 0.5711 \\
Residual              &               & 3.05073  & 1.7466 \\
\bottomrule
\end{tabular}

\vspace{0.75em}
\textbf{Fixed effects}
\begin{tabular}{lrrrrl}
\toprule
Term                  & Estimate     & Std.~Error   & df           & $t$ value & $p$ \\
\midrule
(Intercept)           & 4.797e{+}00  & 5.255e{-}01  & 2.205e{+}02  &  9.128    & $<2\mathrm{e}{-}16^{***}$ \\
Own Contribution Lag           & 1.150e{-}01  & 1.029e{-}02  & 3.504e{+}03  & 11.179    & $<2\mathrm{e}{-}16^{***}$ \\
Peer Contribution Lag         & 1.902e{-}02  & 1.422e{-}02  & 4.166e{+}03  &  1.338    & 0.18110 \\
Exp Round             &-4.337e{-}02  & 6.388e{-}03  & 1.160e{+}03  & -6.789    & 1.80$\mathrm{e}{-}11^{***}$ \\
Age                   &-4.765e{-}03  & 5.413e{-}03  & 1.565e{+}03  & -0.880    & 0.37882 \\
Gender                & 6.525e{-}01  & 1.383e{-}01  & 1.591e{+}03  &  4.717    & 2.60$\mathrm{e}{-}06^{***}$ \\
Friends               & 3.845e{-}02  & 1.610e{-}02  & 1.605e{+}03  &  2.388    & 0.01703$^{*}$ \\
Adversaries           & 4.111e{-}02  & 5.101e{-}02  & 1.553e{+}03  &  0.806    & 0.42041 \\
Food Insecurities     & 8.764e{-}03  & 1.348e{-}01  & 1.594e{+}03  &  0.065    & 0.94818 \\
Marital Status        & 1.140e{-}01  & 1.734e{-}01  & 1.550e{+}03  &  0.658    & 0.51094 \\
Network Density Friendship  &-5.730e{+}00  & 1.608e{+}01  & 1.273e{+}02  & -0.356    & 0.72220 \\
Network Density Adversarial & 3.325e{+}01  & 7.207e{+}01  & 1.243e{+}02  &  0.461    & 0.64533 \\
Network Size          &-3.588e{-}05  & 1.062e{-}03  & 1.317e{+}02  & -0.034    & 0.97310 \\
Education             & 6.577e{-}02  & 2.998e{-}02  & 1.608e{+}03  &  2.194    & 0.02838$^{*}$ \\
Non Religion          & 5.465e{-}01  & 2.403e{-}01  & 1.621e{+}03  &  2.274    & 0.02309$^{*}$ \\
Protestant            & 2.427e{-}01  & 1.508e{-}01  & 1.081e{+}03  &  1.609    & 0.10786 \\
Indigenous            &-4.570e{-}02  & 2.093e{-}01  & 8.919e{+}02  & -0.218    & 0.82724 \\
Access Routes         &-5.528e{-}02  & 1.181e{-}01  & 1.315e{+}02  & -0.468    & 0.64049 \\
Financial Autonomy    & 1.342e{+}00  & 4.833e{-}01  & 1.754e{+}03  &  2.776    & 0.00556$^{**}$ \\
\bottomrule
\end{tabular}

\begin{tablenotes}
\footnotesize
\item Signif.~codes: $^{***}p<0.001$, $^{**}p<0.01$, $^{*}p<0.05$, $^{.}p<0.1$.
\end{tablenotes}
\end{threeparttable}
\end{table}


\begin{table}[H]
\centering
\caption{Counts of end-state transitions}
\label{tab:app-trans-counts}
\begin{threeparttable}
\footnotesize
\begin{tabular}{lrrrr}
\toprule
Transition & HH & HL & LH & LL \\
\midrule
Count      & 1055 & 472 & 195 & 869 \\
\bottomrule
\end{tabular}
\end{threeparttable}
\end{table}
\begin{table}[H]
\centering
\caption{Dynamic state logit: odds ratios for High state ($s_{igt}{=}1$)}
\label{tab:m1_state_or}
\begin{tabular}{lccc}
\toprule
Term & OR & 95\% CI & $p$ \\
\midrule
$s_{ig,t-1}$ (High)        & 1.22 & [1.10, 1.36] & 0.0002 \\
Peer mean$_{t-1}$/12       & 5.97 & [5.44, 6.56] & $<0.0001$ \\
Round $t$                  & 0.98 & [0.96, 0.99] & 0.0116 \\
Initial High ($s_{ig1}$)   & 87.66 & [67.53, 113.78] & $<0.0001$ \\
Avg.\ peer mean$_{t-1}$    & 0.82 & [0.74, 0.90] & $<0.0001$ \\
Male                       & 1.33 & [1.03, 1.71] & 0.0294 \\
Age                        & 1.06 & [0.91, 1.23] & 0.4711 \\
Friends                    & 1.12 & [0.98, 1.28] & 0.0887 \\
Adversaries                & 1.01 & [0.89, 1.15] & 0.9037 \\
Friendship network density & 1.02 & [0.80, 1.32] & 0.8557 \\
Adversaries network density& 1.00 & [0.81, 1.22] & 0.9702 \\
Network size               & 0.98 & [0.81, 1.18] & 0.8176 \\
Education                  & 0.92 & [0.80, 1.06] & 0.2336 \\
Food insecurity            & 0.88 & [0.69, 1.14] & 0.3420 \\
Married                    & 1.02 & [0.77, 1.34] & 0.9036 \\
Ethnicity (Indigenous)     & 0.75 & [0.52, 1.08] & 0.1276 \\
No religion                & 1.42 & [0.92, 2.19] & 0.1184 \\
Protestant                 & 1.18 & [0.90, 1.54] & 0.2299 \\
Access routes              & 1.02 & [0.86, 1.22] & 0.8006 \\
\midrule
Player RE variance         & \multicolumn{3}{c}{6.82 (SD 2.61)} \\
Group RE variance          & \multicolumn{3}{c}{$\approx 0.00$ (boundary)} \\
Village RE variance        & \multicolumn{3}{c}{0.00 (boundary)} \\
$N_{\text{obs}}$ / players / villages & \multicolumn{3}{c}{22{,}239 / 2{,}471 / 134} \\
\bottomrule
\end{tabular}
\begin{flushleft}\footnotesize
Notes: Binomial GLMM with logit link; odds ratios (ORs) are shown for fixed effects.
Continuous covariates are standardized (1~SD). Peer mean terms use the leave-one-out
group mean; the contemporaneous peer mean is lagged by one round and scaled so that
1 equals a full endowment. Village and group random-intercept variances are at (or
very near) the boundary (singular), indicating negligible between-village and 
between-group heterogeneity.
\end{flushleft}
\end{table}

\begin{table}[H]
\centering
\caption{LCMM High–class membership (logit): odds ratios}
\label{tab:mem_lcmm_or}
\begin{tabular}{lccc}
\toprule
Term & OR & 95\% CI & $p$ \\
\midrule
Male & 1.43 & [1.20, 1.69] & $<0.001$ \\
Friends & 1.10 & [1.01, 1.21] & 0.037 \\
Age  & 1.00 & [0.90, 1.11] & 0.958 \\
Adversaries & 1.01 & [0.92, 1.10] & 0.908 \\
Friendship Network density  & 0.97 & [0.79, 1.19] & 0.749 \\
Adversarial Network density & 1.01 & [0.86, 1.20] & 0.861 \\
Network size & 0.96 & [0.83, 1.12] & 0.642 \\
Education  & 0.96 & [0.87, 1.06] & 0.423 \\
Food insecurity & 0.97 & [0.89, 1.06] & 0.529 \\
Married & 1.01 & [0.84, 1.22] & 0.905 \\
No religion & 1.36 & [1.01, 1.84] & 0.043 \\
Protestant & 1.18 & [0.97, 1.42] & 0.095 \\
Ethnicity  & 0.70 & [0.54, 0.92] & 0.010 \\
Access routes & 1.00 & [0.87, 1.15] & 0.984 \\
\midrule
Village RE variance & \multicolumn{3}{c}{0.104 (SD 0.322)} \\
$N$ (players / villages) & \multicolumn{3}{c}{2{,}471 / 134} \\
\bottomrule
\end{tabular}
\begin{flushleft}\footnotesize
Notes: GLMM with logit link and village random intercept; response is LCMM class membership (High vs.\ Low). Continuous covariates are standardized (1~SD). 95\% CIs are computed as $\exp(\hat\beta \pm 1.96\,\widehat{\mathrm{SE}})$ from the reported estimates and standard errors.
\end{flushleft}
\end{table}

\begin{table}[htbp]\centering\small
\caption{Subgroup counterfactuals (round–10 High share).}
\label{tab:subgroup_cf}
\begin{tabular}{lcc}
\toprule
Subgroup & Baseline & Notes\\
\midrule
Male   & 0.573 & row–mean transition from male subsample\\
Female & 0.474 & row–mean transition from female subsample\\
Friends terciles (low/med/high) & 0.500 / 0.518 / 0.522 & baselined at round 1\\
\bottomrule
\end{tabular}
\end{table}

\begin{table}[htbp]\centering\small
\caption{Discrete‐time GLMMs: odds ratios (95\% CI) for key covariates.\label{tab:glmm_or}}
\begin{tabular}{lcc}
\toprule
 & Drop $H\!\to\!L$ ($N\!=\!11{,}529$) & Rise $L\!\to\!H$ ($N\!=\!10{,}710$)\\
\midrule
Peer mean$_{t-1}$ (scaled) & $0.00273^{***}$ [0.00183, 0.00408] & $226^{***}$ [150, 342]\\
Age ($z$)                   & $0.917^{**}$ [0.857, 0.981]        & $0.890^{***}$ [0.833, 0.952]\\
Friends ($z$)               & 0.973 [0.919, 1.030]               & 1.06 [1.00, 1.13]\\
Adversaries ($z$)           & 1.01  [0.961, 1.07]                & $1.06^{*}$ [1.01, 1.12]\\
Education $b0100$ ($z$)     & $1.15^{***}$ [1.09, 1.22]          & $1.13^{***}$ [1.06, 1.20]\\
Male (=1)                   & 1.01  [0.905, 1.12]                & $1.44^{***}$ [1.29, 1.61]\\
No religion (vs.\ Cath.)    & 0.912 [0.761, 1.09]                & $1.29^{**}$ [1.06, 1.56]\\
Protestant (vs.\ Cath.)     & 0.896 [0.792, 1.01]                & 1.10 [0.966, 1.24]\\
Indigenous                  & $1.35^{***}$ [1.14, 1.59]          & 1.14 [0.958, 1.35]\\
Village RE SD               & 0.316                               & 0.268\\
AIC/BIC                     & 10212.8 / 10359.8                   & 9338.1 / 9483.7\\
\bottomrule
\end{tabular}

\medskip
\footnotesize Notes: Mixed‐effects logit with village random intercepts; 95\% CIs from Wald SEs. $^{***}p<0.001$, $^{**}p<0.01$, $^{*}p<0.05$. Full tables including duration spline terms are available upon request.
\end{table}


\begin{table}[H]
\centering
\caption{Mixed-effects learning model: fit statistics and random effects}
\label{tab:learn_fit_re}
\begin{threeparttable}
\footnotesize
\begin{tabular}{lrrrrr}
\toprule
 & AIC & BIC & logLik & Deviance & df.resid \\
\midrule
Model & 98175.7 & 98240.2 & -49079.9 & 98159.7 & 23311 \\
\bottomrule
\end{tabular}

\vspace{0.6em}
\textbf{Scaled residuals:} Min = $-5.7907$ \quad 1Q = $-0.4351$ \quad Median = $-0.0081$ \quad 3Q = $0.4548$ \quad Max = $6.3393$.

\vspace{0.6em}
\textbf{Number of obs:} 23{,}319; \textbf{Groups:} player = 2{,}591; village\_code = 134.

\vspace{0.6em}
\textbf{Random effects}
\begin{tabular}{llll}
\toprule
Groups        & Name        & Variance     & Std.Dev. \\
\midrule
player        & (Intercept) & 4.064e{+}00  & 2.016e{+}00 \\
player        & Own Lag     & 6.301e{-}02  & 2.510e{-}01 \\
player        & Peer Lag   & 1.206e{-}01  & 3.473e{-}01 \\
village\_code & (Intercept) & 8.903e{-}14  & 2.984e{-}07 \\
Residual      &             & 2.427e{+}00  & 1.558e{+}00 \\
\bottomrule
\end{tabular}

\end{threeparttable}
\end{table}

\begin{table}[H]
\centering
\caption{Mixed-effects learning model: fixed effects}
\label{tab:learn_fixef}
\begin{threeparttable}
\scriptsize
\begin{tabular}{lrrrrl}
\toprule
Term        & Estimate & Std. Error & df         & $t$ value & $p$ \\
\midrule
(Intercept) & 5.660e{-}01 & 8.811e{-}02 & 3.778e{+}03 &  6.424 & $1.49\times10^{-10}$ *** \\
Own Lag    & 1.368e{-}01 & 8.416e{-}03 & 3.570e{+}03 & 16.256 & $< 2\times10^{-16}$ *** \\
Peer Lag  & 7.439e{-}01 & 1.307e{-}02 & 5.330e{+}03 & 56.936 & $< 2\times10^{-16}$ *** \\
\bottomrule
\end{tabular}
\begin{tablenotes}[flushleft]
\footnotesize
\item Signif. codes: $^{*}p<0.05$, $^{**}p<0.01$, $^{***}p<0.001$.
\end{tablenotes}
\end{threeparttable}
\end{table}

Table \ref{tab:learn_fixef} reports a negative but economically tiny coefficient on ``Average Contribution Others---Previous Round.'' That table is a levels mixed-effects model aimed at heterogeneity (no individual fixed effects $\alpha_i$; no village$\times$round fixed effects $\gamma_{v,t}$), not a learning equation. Without $\alpha_i$ and $\gamma_{v,t}$, the regressor partly loads generic within-village time drift and regression-to-the-mean; it also omits own lag $c_{ig,t-1}$. \\

Once we estimate the canonical updating equation with individual FE, village$\times$round FE, and both lags (Table \ref{tab:learn_fixef}), the expected positive response to lagged peers emerges strongly. We therefore treat Table \ref{tab:learn_fixef} as the interpretative peer-learning result and relegate the Section's \ref{sec:hetero} coefficient to descriptive background.

\section{Supplementary Diagnostics}\label{app:si_diag}

\subsection{Variance across village, group, and player}

We decompose the variance of contributions \(c_{igt}\) using a random‑intercept model,
\[
c_{igt} \;=\; \mu \;+\; u_v \;+\; u_{gv(v)} \;+\; u_{igv} \;+\; \varepsilon_{igt},
\]
with village (\(u_v\)), nested group within village (\(u_{gv(v)}\)), and player (\(u_{igv}\)) effects. Player heterogeneity accounts for roughly two‑thirds of the variance, village differences are small, and the group‑ID level adds essentially nothing once player effects are included.

\renewcommand{\arraystretch}{1.15}
\begin{table}[H]\centering\small
\caption{AD‑1. Variance components for \(c_{igt}\). Shares are \(\text{Var}(\cdot)/\sum \text{Var}(\cdot)\).}
\label{tab:ad1_var}
\begin{tabular}{@{}lrrrr@{}}
\toprule
Level  & Variance & Std.\ Dev. & Share & Share (\%) \\
\midrule
Player \(u_{igv}\)   & 7.426 & 2.725 & 0.623 & 62.3 \\
Group \(u_{gv(v)}\)  & 0.000 & 0.000 & 0.000 & 0.0 \\
Village \(u_{v}\)    & 0.396 & 0.629 & 0.033 & 3.3 \\
Residual \(\varepsilon_{igt}\) & 4.099 & 2.025 & 0.344 & 34.4 \\
\midrule
Total                & 11.921 & 3.453 & 1.000 & 100.0 \\
\bottomrule
\end{tabular}
\end{table}

\noindent The dominance of \(u_{igv}\) motivates modeling individual differences explicitly; there is little additional structure at the group label once player effects are present.

\subsection{Welfare accounting}

Material payoffs in round \(t\) are
\[
\pi^{\text{mat}}_{igt} \;=\; \frac{b}{N}\Big(c_{igt} + \sum_{j\neq i}c_{jgt}\Big) \;-\; \kappa\,c_{igt},
\]
with \(b=2\), \(\kappa=1\), \(N=5\). The observed average payoff per player‑round is \(\approx 6.29\). Under full cooperation \(c_{igt}=12\) for all players, payoffs rise to \(12\). A per‑unit subsidy \(m\) reduces the effective cost to \(\kappa'=\max\{\kappa-m,0\}\); at \(m=0.5\) the implied payoff reaches \(18\), illustrating the leverage of moderate cost relief.

\renewcommand{\arraystretch}{1.15}
\begin{table}[H]\centering\small
\caption{Per‑player welfare across scenarios.}
\label{tab:ad3_welfare}
\begin{tabular}{@{}lrr@{}}
\toprule
Scenario & Payoff per player & Policy parameter \(m\) \\
\midrule
Observed data & 6.29 & 0 \\
Full cooperation & 12.00 & 0 \\
Subsidy policy & 18.00 & 0.5 \\
\bottomrule
\end{tabular}
\end{table}

\subsection{Norm‑fit comparison}

We compare random‑intercept models for \(c_{igt}\) using the lagged peer norm \(\tilde\peer_{-i,g,t-1}\) and degree moderators
(friends, adversaries; standardized within the panel). All specifications include a random intercept for the player;
\(R^2_m\) (marginal) is the variance explained by fixed effects; \(R^2_c\) (conditional) includes random effects.\\

Let $c_{igt}$ denote the contribution of player $i$ in group $g$, village $v$, round $t$,
and let $\tilde\peer_{-i,g,t-1}$ denote the demeaned leave-one-out peer mean in round $t-1$.
Let $F_i$ and $A_i$ be the reported number of friends and adversaries for player $i$,
standardized as $f_i=\mathrm{z}(F_i)$ and $a_i=\mathrm{z}(A_i)$.
All models include a random intercept $u_i\sim \mathcal{N}(0,\sigma_u^2)$
for each player and i.i.d.\ residual $\varepsilon_{igt}\sim \mathcal{N}(0,\sigma^2)$.

\begin{align*}
\text{(mG)}\quad
c_{igt} &= \beta_0
+ \beta_G\,\tilde\peer_{-i,g,t-1}
+ u_i + \varepsilon_{igt}, \\[6pt]
\text{(mGF)}\quad
c_{igt} &= \beta_0
+ \beta_G\,\tilde\peer_{-i,g,t-1}
+ \beta_F f_i
+ \beta_{GF}\,(\tilde\peer_{-i,g,t-1}\cdot f_i)
+ u_i + \varepsilon_{igt}, \\[6pt]
\text{(mGA)}\quad
c_{igt} &= \beta_0
+ \beta_G\,\tilde\peer_{-i,g,t-1}
+ \beta_A a_i
+ \beta_{GA}\,(\tilde\peer_{-i,g,t-1}\cdot a_i)
+ u_i + \varepsilon_{igt}, \\[6pt]
\text{(mGFA)}\quad
c_{igt} &= \beta_0
+ \beta_G\,\tilde\peer_{-i,g,t-1}
+ \beta_F f_i + \beta_A a_i \\
&\quad+ \beta_{GF}\,(\tilde\peer_{-i,g,t-1}\cdot f_i)
+ \beta_{GA}\,(\tilde\peer_{-i,g,t-1}\cdot a_i)
+ u_i + \varepsilon_{igt}.
\end{align*}

\noindent
Table~\ref{tab:ad4_normfit} reports AIC, BIC, marginal $R^2_m$,
conditional $R^2_c$, and RMSE for each specification. All models fit almost
identically ($R^2_c\approx 0.67$); the peer norm $\tilde\peer_{-i,g,t-1}$ carries
the explanatory signal, and degree moderators $f_i,a_i$ add negligible fit once it is included.

\renewcommand{\arraystretch}{1.15}
\begin{table}[H]\centering\small
\caption{Model selection: peer norm with degree moderation.}
\label{tab:ad4_normfit}
\begin{tabular}{@{}lrrrrr@{}}
\toprule
Model & AIC & BIC & \(R^2_m\) & \(R^2_c\) & RMSE \\
\midrule
mGF  (peer\(\times\)friends)        & 105932 & 105980 & 0.00106 & 0.671 & 4.26 \\
mGA  (peer\(\times\)adversaries)    & 105934 & 105982 & 0.00044 & 0.671 & 4.26 \\
mGFA (peer\(\times\)both)           & 105935 & 105999 & 0.00123 & 0.671 & 4.26 \\
mG   (peer only)                    & 105965 & 105997 & 0.00039 & 0.671 & 1.88 \\
\bottomrule
\end{tabular}
\end{table}

\noindent The conditional \(R^2\approx0.67\) is almost entirely due to persistent player heterogeneity; degree
moderation adds negligible explanatory power once \(\tilde\peer_{-i,g,t-1}\) is in the model.

\subsection{Finite‑horizon unravel check}

Grouping sessions by the round‑1 share above the empirical threshold \(\hat c\) (quartiles), we track group means
over \(t=1,\dots,10\). Early high‑share groups remain high on average; early low‑share groups do not converge upward
by the final round, consistent with strategic complementarities rather than full backward‑induction unraveling.

\begin{figure}[H]
  \centering
  \includegraphics[width=\textwidth]{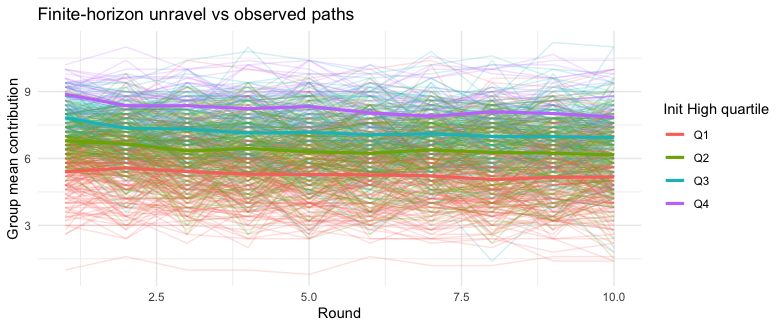}
  \caption{Group mean contributions by round, colored by quartiles of the initial
  high‑share (above \(\hat c\)). Thick lines are within‑quartile averages; faint lines are group‑level paths.}
  \label{fig:ad5_unravel}
\end{figure}

\subsection{Bootstrap of the bifurcation share}

We resample villages with replacement, re‑cluster the full 10‑round trajectories with Ward–D2 at two clusters $K$=2),
and record the share assigned to the Low‑path cluster. The distribution centers slightly above one‑half
with substantial mass between \(\approx0.3\) and \(\approx0.65\), confirming that the Low basin is not a small‑sample artifact.

\begin{figure}[H]
  \centering
  \includegraphics[width=\textwidth]{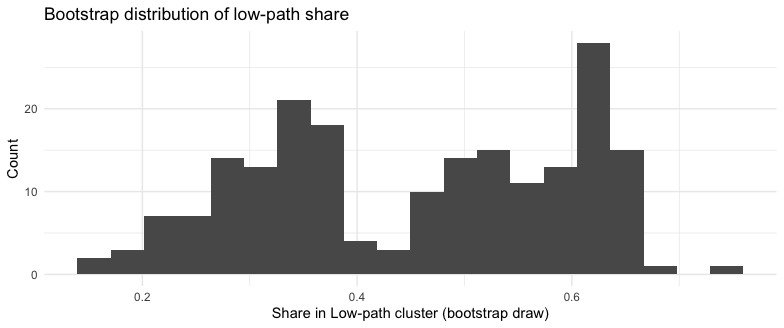}
  \caption{Each draw resamples villages with replacement, reclusters full 10‑round
  trajectories using Ward–D2 on round‑wise $z$–scores (at two clusters $K$=2), and tags the
  cluster with the lower mean as the Low path. The distribution is centered a
  bit above one-half with substantial mass in the $0.30$–$0.65$ range,
  indicating that a sizeable Low basin is not a small‑sample artifact.}
  \label{fig:ad6_boot}
\end{figure}

\subsection{Threshold sensitivity}

We vary the High/Low cutoff \(c^{\mathrm{thr}}\in\{5,6,7\}\) (Lempiras) when predicting High end‑states at \(t=10\).
Intercepts become more negative as the cutoff increases; the autonomy coefficient rises slightly, the friends
coefficient declines modestly, and gender effects taper with higher thresholds. Qualitative conclusions are stable.

\begin{figure}[H]
  \centering
  \includegraphics[width=\textwidth]{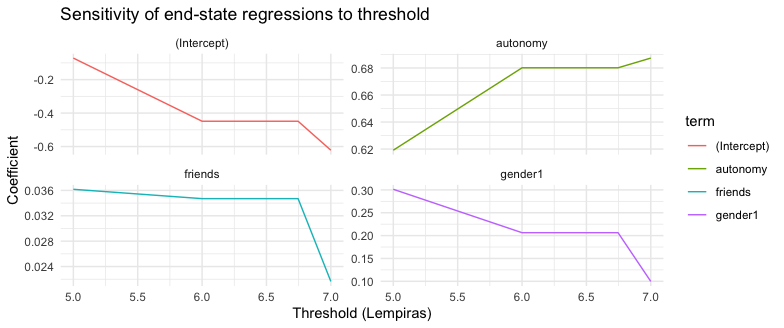}
  \caption{\textbf{Sensitivity of end-state predictors to the High/Low cutoff.}
  Coefficients from logits of finishing High ($c_{i,10}>\text{threshold}$) on
  round‑1 covariates evaluated at thresholds $5,6,7$ Lempiras.
  Intercepts become more negative at higher cutoffs; the autonomy effect rises
  modestly; the friends effect declines slightly; and the male indicator
  attenuates. Qualitative conclusions on ``who ends High'' are robust to the
  cutoff.}
  \label{fig:ad7_thresh}
\end{figure}

\subsection{Conceptual flow}
A simple flow connects observed traits (friends, education, gender, religion) to behavioral parameters \((d_i,\phi_i,\omega_i)\), to partial-adjustment dynamics, to basin membership, to village outcomes, with policy levers feeding back through incentives. This clarifies where measurement and interventions bite in the pipeline.

\begin{figure}[H]
  \centering
  \includegraphics[width=\textwidth]{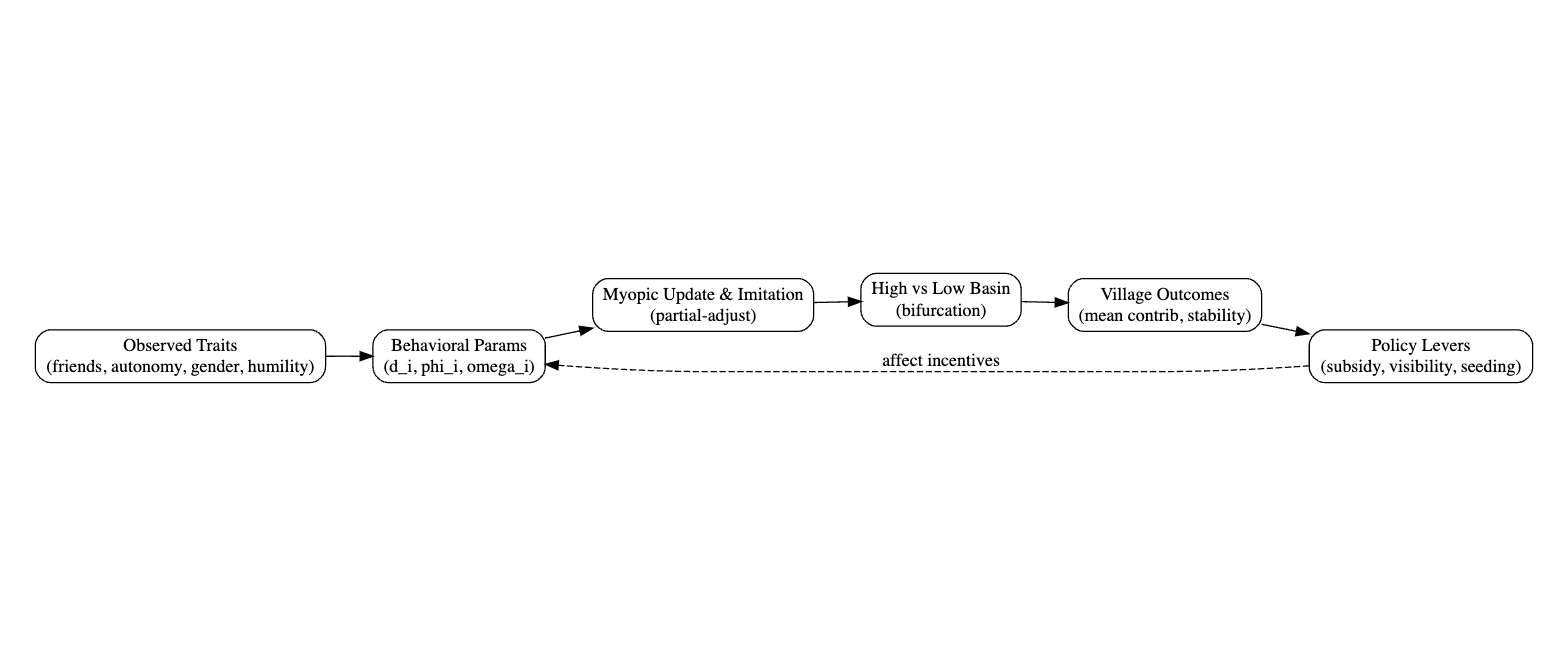}
  \caption{Observed traits $\to$ behavioral parameters $(d_i,\phi_i,\omega_i)\to$ myopic
  update \& imitation $\to$ basin membership $\to$ village outcomes, with policy
  levers feeding back via incentives.}
  \label{fig:ad12_flow}
\end{figure}

\section{Background on Honduras Social Network RCT  Cohort and Field Infrastructure}

\subsection{Study Region}
The study region covered more than 200 square miles of mountainous terrain near the Guatemala border, with an estimated population of roughly 92,000 residents. Out of 238 identified towns and villages, 176 were selected on the basis of population size, accessibility, and safety. The adult population in these study villages was estimated at 32,800 individuals.

All persons aged 12 years or older, living or working in the study villages, were eligible at baseline. Individuals unable to provide consent due to cognitive impairment were excluded.

\subsection{Recruitment}
Prior to launching fieldwork, we mapped settlements in the region to capture terrain, rainfall patterns, and distances to health facilities. A photographic census was then conducted using tablet-based software. Photos, GPS coordinates, and demographic details (age, sex, marital status) were recorded for all residents. Out of an estimated 32,800 eligible individuals, 93\% (30,422) consented to be included in the census. Village-level participation ranged from 55 to 627 respondents, with households averaging 2.8 participants.

Key descriptive characteristics were as follows: mean village size was 173 individuals (range 55–627), mean household size was 2.8 (range 1–13), and mean age was 33 (range 12–94). Fifty-four percent were women and 58\% reported being married or living as married.

\subsection{Study Design}

The study region consisted of an area of over 200 square miles of rugged mountainous terrain near the Guatemala border with an estimated total population of 92,000 people. We selected 176 villages from these 238 small towns and villages in this area. Factors like population size, accessibility, and safety were considered when selecting the final list of villages. The included 176 villages had an estimated adult population of 32,800, according to local Honduras government census data.

Owing to high adolescent birth rates in this population, all individuals over the age of 12 who lived or worked in the study villages were eligible to enroll at baseline in 2015. Individuals who were cognitively impaired and unable to provide consent at baseline were excluded.

Recruitment preparation for this project included comprehensive geographical mapping of the 238 small towns and villages (cities were excluded), which allowed for a more precise understanding of terrain, rainfall, and access to health facilities. A photographic census was then conducted across the 176 study villages, capturing photographs, GPS coordinates, and basic demographics (age, gender, marital status). Of the approximately 32,800 eligible individuals, 93\% (N=30,422) consented to be censused. Respondent counts per village ranged from 55 to 620, and the average household size was 2.8.

\begin{table}[H]
\centering
\caption{Baseline Census Demographics, $N=30{,}422$}
\label{tab:census_demo}
\begin{tabular}{lcc}
\toprule
Characteristic & Mean & Range \\
\midrule
Village size & 173 & [55 -- 627] \\
Household size & 2.8 & [1 -- 13] \\
Age (years) & 33 & [12 -- 94] \\
Women & 54\% & --- \\
Married / living as married & 58\% & --- \\
\bottomrule
\end{tabular}
\end{table}

\begin{table}[H]
\centering
\caption{Baseline Cohort Characteristics, $N=24{,}812$}
\label{tab:cohort_stats}
\begin{tabular}{lc}
\toprule
Characteristic & Value  \\
\midrule
Age (years) & 33  [12 -- 93] \\
Women & 58\%  \\
Married / living as married & 58\% \\
Less than primary education & 70\% \\
Religion: Catholic & 51\% \\
Religion: Protestant & 32\%  \\
No religion & 16\%  \\
Indigenous & 12\%  \\
Self-reported general health fair/poor & 44\%  \\
Self-reported mental health fair/poor & 40\%  \\
\bottomrule
\end{tabular}
\end{table}

\subsection{Study Design}
After the photographic census, we mapped complete sociocentric networks in all 176 villages, attempting to enumerate all reported ties within each community. Villages were then assigned in a $2 \times 8$ factorial design: (i) eight saturation levels indicating the fraction of households targeted in each village (0, 0.05, 0.10, 0.20, 0.30, 0.50, 0.75, 1.00), crossed with (ii) two targeting strategies (random household selection versus a ``friend-of-a-random'' nomination procedure).

To improve balance, villages were grouped into 11 blocks of 16, minimizing within-block variation in household counts and average subjects per household. Within each block, one village was allocated to each of the 16 factorial cells.

In the random targeting arm, the number of households implied by the dosage was selected randomly without replacement. For example, in a 37-household village assigned a 20\% dosage, seven households were chosen at random. In the friend nomination arm, the number of seed households implied by the dosage was sampled first. For each seed, one member was randomly chosen, and a randomly selected social contact from a different household was recruited. If no eligible contacts were available, or if a nominated household had already been treated, reselection was performed until the target sample was reached.

\subsection{Timeline of Data Collection Waves}

The timeline is illustrated in \ref{fig:rct_study}. We describe each wave:

\begin{itemize}
    \item Wave 0 (Jun 2015–Dec 2015): Photographic census of 176 villages; 93\% coverage of residents aged 12+ ($N = 30{,}422$).
    \item Wave 1 (Oct 2015–Jun 2016): Baseline survey with sociocentric network mapping, health attitudes, and behaviors ($N \approx 24{,}700$).
    \item Wave 2 (Jan 2018–Aug 2018): Interim survey 12 months into intervention; updated locations and statuses ($N \approx 21{,}500$).
    \item Wave 3 (Jan 2019–Dec 2019): New census of adults 15+, followed by endline survey with networks, health, and behaviors ($N \approx 22{,}500$).
    \item Wave 4 (Aug 2022–Jul 2023): New census in a purposive subset of 82 villages and follow-up survey, adding domains on flourishing, ego-network change, and mobility in older adults ($N = 10{,}941$).
\end{itemize}

\begin{figure}[H]
  \centering
  \includegraphics[scale=0.9]{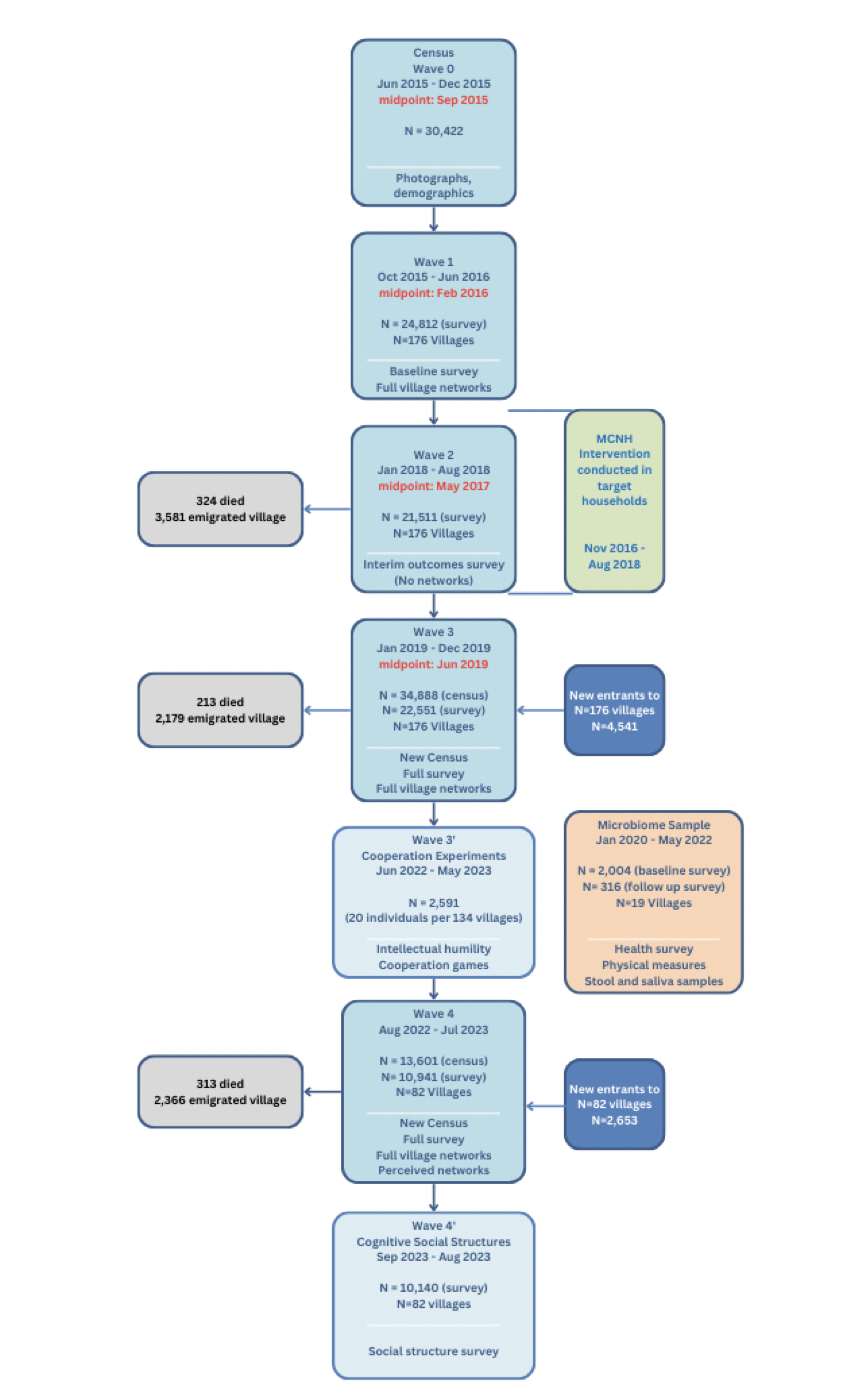}
  \caption{Waves of Data Collection}
  \label{fig:rct_study}
\end{figure}

\subsection{Field Operations}
More than 100 local enumerators were trained to (i) build infrastructure and conduct preliminary data collection, (ii) recruit participants and conduct census enumeration, and (iii) administer surveys and interviews. Four field offices were established in strategic locations to minimize travel time. Each office was equipped with tablets, netbooks, printers, high-speed internet, and local servers for secure data transfer and synchronization. Enumerators received intensive training on instruments and software by U.S.-based project managers and were supervised daily by Honduran coordinators.

\subsection{Community Engagement}
Partnerships were developed with municipal and health authorities, and meetings were held with health staff and community health workers. The Ministry of Health reviewed and approved all study protocols, consent processes, and instruments. Village leaders and indigenous councils were engaged in advance to present study objectives before recruitment and data collection.

\subsection{Missing Data and Attrition}
Among the $\sim$32,800 eligible individuals, 93\% consented to participate, reducing baseline non-response bias. We attempted to recontact and track participants who moved within the region during follow-ups. Of 5,633 randomized households, 4,861 (86\%) had at least one respondent complete the endline survey. Attrition did not differ by treatment status ($\chi^2(1, N=5633) = 0.80$, $p = 0.37$).

\section{Inclusion Criteria}
Respondents were included in analyses if they met the following criteria: (a) they completed all applicable survey forms, (b) they were enrolled in the census and participated in both Wave 1 and Wave 3 surveys, and (c) their household and village identifiers remained consistent across these waves. The same inclusion rules were applied for mediation and robustness checks. Specifically, individuals included in mediation analyses at Wave 2 must have completed both census and survey instruments and be present in Waves 1–3. For robustness analyses at Wave 4, individuals must have completed census and survey instruments and be consistently present in Waves 1–4.

\subsection{Cooperation Experiments}
Between June 2022 and June 2023, we conducted simultaneous studies in 134 villages, involving 2,591 participants.

\subsubsection{Design}
The study was carried out to evaluate village-level social capital and collective efficacy through cooperation experiments.

\subsubsection{Data Collection}
Forty residents were randomly sampled in each village. Of these, 15–20 participated in the cooperation game ($N=2{,}591$). Public goods games were conducted in person: participants were convened at accessible village locations, seated anonymously in groups of five, and provided with Android tablets. They were trained to use the interface before beginning. Participants saw only avatars on their screens (including their own) and could not identify or interact with others.

Each participant received 12 Lempiras per round and decided how much to keep versus contribute to a shared pool. Contributions were doubled and distributed equally across group members. Ten rounds were conducted, though the total number was not disclosed in advance. Afterward, participants kept their earnings plus a 50 Lempira participation fee (roughly a day wage).

\end{document}